\newcommand{\Ab}{\mathbf A}
\newcommand{\Fb}{\mathbf F}
\newcommand{\Eb}{\mathbf E}
\newcommand{\Rb}{\mathbf R}
\newcommand{\R}{\mathbb R}
\newcommand{\Z}{\mathbb Z}
\newcommand{\C}{\mathbb C}
\newcommand{\bb}{\mathfrak b}
\newcommand{\tchi}{\widetilde{\chi}}
\DeclareMathOperator{\E0}{E_{\text{g.st}}}
\DeclareMathOperator{\curl}{curl}
\DeclareMathOperator{\Div}{div}
\DeclareMathOperator{\dist}{dist}
\DeclareMathOperator{\supp}{supp}
\DeclareMathOperator{\err}{err}
\DeclareMathOperator{\Real}{Re}
\newtheorem{thm}{Theorem}[section]
\newtheorem{lem}[thm]{Lemma}
\newtheorem{corol}[thm]{Corollary}
\newtheorem{lemma}[thm]{Lemma}
\theoremstyle{remark}
\newtheorem{rem}[thm]{Remark}
\numberwithin{equation}{section}
\title[3D Ginzburg-Landau  functional]{The ground state energy of the three dimensional Ginzburg-Landau functional\\
{\small
Part~II: Surface regime}}
\author{S. Fournais}
\address[S. Fournais]
{Department of Mathematical Sciences, University
  of Aarhus, Ny Munkegade, Building
  1530, DK-8000 \AA rhus C, Denmark}
  \email{fournais@imf.au.dk}
\author{A. Kachmar}
\address[A. Kachmar]{Lebanese University, Department of Mathematics, Hadath, Lebanon.}
\email{ayman.kashmar@liu.edu.lb}
\author{M. Persson}
\address[M. Persson]{Centre for Mathematical Sciences, Box 118, SE-22100, Lund, Sweden.}
\email{mickep@maths.lth.se}
\keywords{Ginzburg-Landau functional, thermodynamic limits, elliptic
  estimates, variational methods, magnetic Schr\"odinger operators,
  semiclassical analysis}
\begin{document}

\begin{abstract}
We study the Ginzburg-Landau model of superconductivity in three
dimensions and for strong external magnetic fields. For magnetic
field strengths above the phenomenologically defined second
critical field it is known from Physics that superconductivity
should be essentially restricted to a region near the boundary. We
prove that the expected region does indeed carry
superconductivity. Furthermore, we give precise energy estimates
valid also in the regime around the second critical field which
display the transition from bulk superconductivity to surface
superconductivity.
\end{abstract}

\maketitle

\section{Introduction and main results}

The phenomenological Ginzburg-Landau theory of superconductivity
successfully describes the behavior of a superconductor subject to an external
magnetic field. Also, as pointed out in the celebrated work of Abrikosov,
this theory predicted the existence of type~II superconductors before they had
been empirically  realized, see \cite{dGe} for a review of this physical topic for which A. Abrikosov was awarded the Nobel Prize.

In the Ginzburg-Landau theory, the superconducting state of a sample is
described by a complex-valued wave function $\psi$ and a vector field
(magnetic potential) $\Ab$ such that the pair $(\psi,\Ab)$ is a critical point
of a specific energy (see~\eqref{eq-3D-GLf} below). The physical interpretation
of $\psi$ and $\Ab$ is explained by the microscopic
Bardeen-Cooper-Schrieffer~(BCS) theory as follows: $|\psi|^2$ is proportional
to the density of superconducting particles and $\curl\Ab$ measures the induced
magnetic filed inside the sample. The rigorous mathematical justification of
the connection between the Ginzburg-Landau and the BCS theory has only been
established recently in \cite{FHSS}.

The behavior of a type~II superconductor is distinguished by three critical
values the intensity of the applied magnetic field can have, that we denote by
$H_{C_1}$, $H_{C_2}$ and $H_{C_3}$. These critical fields may be described
in terms of the wave function $\psi$ as follows. Suppose $H$ is the intensity
of the external magnetic field applied to the sample. If $H<H_{C_1}$, the
material is in the superconducting phase, which corresponds to $|\psi|>0$
everywhere. If $H_{C_1}<H<H_{C_2}$, the magnetic field penetrates the sample in
quantized vortices (corresponding to zeros of  $\psi$). If $H_{C_2}<H<H_{C_3}$, superconductivity is confined to (part of)
the surface of the sample (corresponding to $|\psi|$ very small in the bulk).
Finally, if $H>H_{C_3}$, superconductivity is lost, which is reflected by
$\psi=0$ everywhere.

In the last two decades, much progress has been made in order to
establish the aforementioned behavior of Type~II superconductors
by studying minimizers of the Ginzburg-Landau energy. The monograph \cite{SS} and references therein
contains an analysis of vortices and the critical field $H_{C_1}$.
Concerning the analysis of the critical fields $H_{C_2}$ and
$H_{C_3}$ we mention \cite{FK, FH-b} (and references therein). As
one can see in \cite{FH-b, SS}, the Ginzburg-Landau model has a
rich mathematical structure whose analysis  requires a diversity
of methods, and many of them have been developed especially for the
study of the model.

While a detailed study of the Ginzburg-Landau model in a two
dimensional domain has been the subject of numerous papers, the
study of the model in a three dimensional domain is much less
developed. Among the important problems that are still open in 3D
is the calculation of the critical field $H_{C_1}$ for general
domains\,\footnote{In a recent paper \cite{Jetal}, compactness
results for the 3D functional valid in general domain are
obtained, which allow to obtain the leading order term of
$H_{C_1}$. Earlier results include a  candidate for the expression
of $H_{C_1}$ in the case of the ball \cite{ABM}, and an expression
of $H_{C_1}$ in `thin' shell domains in \cite{CS}.}. Also, when
comparing with the existing results for 2D domains, a precise
localization of the wave-function $\psi$ is absent when the
external applied magnetic field varies from $H_{C_1}$ up to
$H_{C_3}$. However, for both 2D and 3D domains, a sharp
characterization of the critical field $H_{C_3}$ is given in
\cite{FH3d}. In this paper, together with \cite{FK3D, K3D}, we
give a detailed description of the behavior of the wave-function
$\psi$ and its energy for external magnetic fields varying  in the
range above $H_{C_1}$ and up to $H_{C_3}$. The results in this
paper concern surface ($3$D) superconductivity and the transition
that happens close to $H_{C_2}$ from bulk to surface
superconductivity.

 Let $\Omega\subset\R^3$ be a bounded and open
set with smooth boundary which
 models a superconducting sample subject to an applied external magnetic field.
The energy of the sample is given by the Ginzburg-Landau
functional,
\begin{multline}\label{eq-3D-GLf}
\mathcal E^{\rm 3D}(\psi,\Ab)=\mathcal
E_{\kappa,H}^{\rm 3D}(\psi,\Ab)=
\int_\Omega\Bigl[
|(\nabla-i\kappa
H\Ab)\psi|^2-\kappa^2|\psi|^2+\frac{\kappa^2}{2}|\psi|^4\Bigr]\,dx\\
+\kappa^2H^2\int_{\R^3}|\curl\Ab-\beta|^2\,dx\,.
\end{multline}
Here $\kappa$ and $H$ are two positive parameters whose physical interpretation
are as follows, the number $\kappa$ is
a material parameter, and the number $H$ is the intensity of a constant
magnetic field externally applied to the sample. As explained earlier, $\psi$
is a wave function (order parameter) and $\Ab$ is the induced magnetic
potential. We take $\psi$ and $\Ab$ in convenient spaces as follows,
\[
\psi\in H^1(\Omega;\C)\,,\quad  \Ab\in
\dot{H}^1_{\Div,\Fb}(\R^3)\,,
\]
where $\dot H^1_{\Div,\Fb}(\R^3)$ is the space introduced in
\eqref{eq-3D-hs} below. Finally, $\beta$ is the profile of the external magnetic
field that we choose constant, $\beta=(0,0,1)$.

Let $\dot H^1(\R^3)$ be the homogeneous Sobolev space, i.e. the closure
of $C_c^\infty(\R^3)$ under the norm $u\mapsto\|u\|_{\dot
  H^1(\R^3)}:=\|\nabla u\|_{L^2(\R^3)}$. Let further
  $\Fb(x)=(-x_2/2,x_1/2,0)$. Clearly $\Div \Fb=0$.

We define the space,
\begin{equation}\label{eq-3D-hs}
\dot H^1_{\Div,\Fb}(\R^3)=\{\Ab~:~\Div \Ab=0\,,\quad\text{ and}\quad
\Ab-\Fb\in \dot H^1(\R^3)\}\,.
\end{equation}

Critical points $(\psi,\Ab)\in H^1(\Omega;\C)\times \dot
H^1_{\Div,\Fb}(\R^3)$ of $\mathcal E^{\rm 3D}$
satisfy the Ginzburg-Landau equations,
\begin{equation}\label{eq-3D-GLeq}
\left\{
\begin{array}{lll}
-(\nabla-i\kappa H\Ab)^2\psi=\kappa^2(1-|\psi|^2)\psi&\text{in}& \Omega,
\\
\curl^2\Ab=-\displaystyle\frac1{\kappa H}\Im(\overline{\psi}\,(\nabla-i\kappa
H\Ab)\psi)\mathbf 1_\Omega&\text{in}& \R^3,\\
N\cdot(\nabla-i\kappa H\Ab)\psi=0&\text{on}&\partial\Omega\,,
\end{array}\right.\end{equation}
where $\mathbf 1_\Omega$ is the characteristic function of the domain
$\Omega$, and $N$ is the interior unit normal vector  of $\partial\Omega$.

For a solution $(\psi,\Ab)$ of~\eqref{eq-3D-GLeq}, the function
$\psi$ describes the superconducting properties of the material
and $H\curl\Ab$ gives the induced magnetic field.

The important
class materials called Type~II superconductors corresponds mathematically to the limit
$\kappa\to\infty$, see \cite{FH-b, SS}.

We define the ground state energy,
\begin{equation}\label{eq-3D-gs}
\E0 (\kappa,H)=\inf_{(\psi,\Ab)\in H^1(\Omega;\C)\times \dot
H^1_{\Div,\Fb}(\R^3)}\mathcal E^{\rm 3D}(\psi,\Ab)\,.
\end{equation}

The leading order asymptotics of the ground state energy involves a function
$g$ constructed in  \cite{FK3D} (the definition will be recalled in
Section~\ref{sec:E2}) satisfying that $g:[0,\infty) \rightarrow [-1/2,0]$ is
continuous, increasing and there is a constant $E_2<0$ with
\begin{align}
g(b) &=0,\qquad\qquad \qquad \qquad \qquad\forall \,b\geq 1,\\
g(b) &=E_2 (1-b)^2[1+ o(1)],\qquad \text{as } b \nearrow 1.
\end{align}

From \cite{FK3D} we have the following general (bulk) result as long as
$H/\kappa$ is bounded from below
\begin{align}\label{eq:energyAsymp}
\big| \E0
(\kappa,H) - g(\tfrac{H}{\kappa} )|\Omega| \kappa^2\big| \leq C \kappa ^{3/2}.
\end{align}
However, when $H/\kappa\geq 1$, $g(H/\kappa)=0$ and~\eqref{eq:energyAsymp}
only gives a somewhat weak estimate of the energy. The value $H=\kappa$
corresponds to the phenomenologically described critical field $H_{C_2}$
mentioned previously. So one expects that around this value superconductivity
should become concentrated near the boundary of the domain $\Omega$ and this
should be reflected in the energy asymptotics.

In this paper we will complete the study of the energy asymptotics for high
magnetic field initiated in \cite{FK3D} by
\begin{itemize}
\item giving the leading order energy asymptotics for $H>\kappa$, which will
be a surface energy of order of magnitude $\kappa$,
\item studying the transition from `bulk' to `surface' dominated energy at
$H \approx \kappa$. We will see that this transition takes place at $H-\kappa$
of order $\sqrt{\kappa}$ in the sense that
\begin{itemize}
\item If $H<\kappa - f(\kappa)$, where
$f(\kappa)/\sqrt{\kappa} \rightarrow +\infty$ as $\kappa \rightarrow \infty$,
then the leading contribution to the energy comes from the bulk.
\item If $H>\kappa - f(\kappa)$, where
$\limsup_{\kappa \rightarrow \infty} f(\kappa)/\sqrt{\kappa} \leq 0$ then the
leading contribution to the energy comes from the surface.
\item If $H=\kappa - a \sqrt{\kappa}$ for some constant $a>0$, then the bulk
and surface contributions of the energy have the same order of magnitude.
\end{itemize}
\end{itemize}

We now state the main results of the paper,
Theorems~\ref{thm-main}~and~\ref{thm-op}. These require the introduction of
some notation.

\begin{itemize}
\item If $x$ is a point on the boundary of $\Omega$, then $\nu(x)$ denotes the
angle in $[0,\pi/2]$ between the vector $\beta=(0,0,1)$ and the tangent plane
to $\partial\Omega$ at the point $x$.
\item  If $\nu\in[0,\pi/2]$,  $\zeta(\nu)$ is the lowest eigenvalue of a
magnetic Schr\"odinger operator in the half-space, see Section~\ref{sec:op}.
The function $\zeta$ is a continuous and strictly increasing bijection from
$[0,\pi/2]$ to $[\Theta_0,1]$ where $\Theta_0 \approx 0.59$ is a universal
constant (the definition of $\Theta_0$ will be recalled in~\eqref{eq-th0} below).
\item If $\nu\in[0,\pi/2]$ and $\bb\leq 1$, the constant $E(\bb
,\nu)$ will be introduced in Section~\ref{sec-E(nu)}. $E(\bb,\nu)$
depends continuously on both $\bb$ and $\nu$, vanishes when
$\zeta(\nu)\geq \bb$, and $E(\bb,\nu)<0$ otherwise.
\end{itemize}

Below is a statement of the main result concerning the asymptotic behavior of
the ground state energy.

\begin{thm}\label{thm-main}
Let $\mu:\R_+\to\R$ be a function satisfying
$\displaystyle\lim_{\kappa\to\infty}\mu(\kappa)=0$.
Then there exists a positive constant $\kappa_0$, and
a function $\err:\R_+\to\R$ such that
$\displaystyle\lim_{\kappa\to\infty}\err(\kappa)=0$ and the following is true.
If $\kappa\geq\kappa_0$ and $H\geq \kappa-\mu(\kappa)\kappa$\,, then the ground
state energy in~\eqref{eq-3D-gs} satisfies,
\[
\E0(\kappa,H)
=\sqrt{\kappa H}\,\int_{\partial\Omega}E\left(\bb,\nu(x)\right)\,d\sigma(x)+
E_2|\Omega|\,\left[\kappa-H\right]_+^2
+\err(\kappa)\max\left(\kappa,[\kappa-H]_+^2\right)\,.
\]
Here $\bb=\min\big(\kappa/H,1\big)$, and $d\sigma(x)$ is the surface measure
on the boundary of $\Omega$.
\end{thm}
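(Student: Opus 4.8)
The plan is to establish matching asymptotic upper and lower bounds for $\E0(\kappa,H)$ with the stated error, treating the surface contribution (of size $\sqrt{\kappa H}\sim\kappa$) and the bulk contribution (of size $[\kappa-H]_+^2$) on essentially disjoint regions of $\Omega$. Write $h=\kappa H$ and $b=H/\kappa$, so that $\bb=\min(b^{-1},1)$ and $[\kappa-H]_+^2=\kappa^2[1-b]_+^2$; observe that the hypothesis $H\ge\kappa-\mu(\kappa)\kappa$ with $\mu(\kappa)\to0$ forces $b\to1$, which is precisely what allows the use of the quadratic expansion $g(b)=E_2(1-b)^2(1+o(1))$ from Section~\ref{sec:E2}, so that $g(b)|\Omega|\kappa^2=E_2|\Omega|[\kappa-H]_+^2+o\bigl(\max(\kappa,[\kappa-H]_+^2)\bigr)$. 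A first step, needed for both bounds, is to reduce \eqref{eq-3D-gs} to the functional with magnetic potential frozen at $h\Fb$: using gauge invariance, the a priori bound $\|\psi\|_{L^\infty(\Omega)}\le1$, and the improved control of $\Ab-\Fb$ obtained from the second equation of \eqref{eq-3D-GLeq} together with the large prefactor $\kappa^2H^2$, one shows that replacing $\Ab$ by $\Fb$ perturbs the energy by at most $o\bigl(\max(\kappa,[\kappa-H]_+^2)\bigr)$. It then remains to estimate $\mathcal F_h[\psi]:=\int_\Omega\bigl(|(\nabla-ih\Fb)\psi|^2-\kappa^2|\psi|^2+\tfrac{\kappa^2}{2}|\psi|^4\bigr)\,dx$, the Maxwell term dropping since $\curl\Fb=\beta$.

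For the upper bound I would construct a trial state as a sum of a boundary piece and a bulk piece. Cover $\partial\Omega$ by boundary charts of a small size $\ell=\ell(\kappa)\to0$ with $\ell\gg h^{-1/2}$; in boundary coordinates centred at a point $x_0$ of a chart, $(\nabla-ih\Fb)^2$ coincides, up to errors governed by the curvature of $\partial\Omega$ and by $\ell$, with the half-space model of Section~\ref{sec-E(nu)} for angle $\nu(x_0)$ and field strength $h$. Inserting there the associated rescaled Ginzburg--Landau minimizer, localized in a slab of thickness $O(h^{-1/2}\log h)$ outside of which it is exponentially small, produces energy $\sqrt{\kappa H}\,E\bigl(\bb,\nu(x_0)\bigr)\,|\text{chart}\cap\partial\Omega|$ to leading order; summing over the charts and invoking the uniform continuity of $(\bb,\nu)\mapsto E(\bb,\nu)$ and of $\nu$ yields $\sqrt{\kappa H}\int_{\partial\Omega}E(\bb,\nu(x))\,d\sigma(x)$, the bulk-type content of these thin slabs being negligible since their total volume vanishes. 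For the bulk piece I would use the periodic (Abrikosov-type) trial state of \cite{FK3D}, cut off at distance $o(1)$ from $\partial\Omega$, whose energy is $E_2|\Omega|[\kappa-H]_+^2+o\bigl(\max(\kappa,[\kappa-H]_+^2)\bigr)$ once the errors are tracked as $b\to1$ via the expansion of $g$. Since the boundary piece is exponentially small beyond the slab, the two pieces combine at negligible cost, which proves the upper bound.

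For the lower bound I would take a minimizer $(\psi,\Ab)$; after the above reduction it suffices to bound $\mathcal F_h[\psi]$ from below. Apply the IMS localization formula with a partition of unity at scale $\ell$, splitting $\Omega$ into interior cells and boundary cells; the localization cost $\sum_j\int|\nabla\chi_j|^2|\psi|^2\lesssim\ell^{-2}\int_\Omega|\psi|^2$ is made $o\bigl(\max(\kappa,[\kappa-H]_+^2)\bigr)$ by the bound $\int_\Omega|\psi|^2\le|\Omega|$ (or, if needed, the concentration estimate $\int_\Omega|\psi|^2=O([1-b]_+^2+h^{-1/2})$) and a suitable choice of $\ell$. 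On an interior cell, comparison with the constant-field Ginzburg--Landau problem on a box large in magnetic units gives $\mathcal F_h$-energy $\ge g(b)\kappa^2|\text{cell}|$ up to a finite-size error commensurate with $|g(b)|\kappa^2$; since $g(b)\le0$ and the interior cells fill $\Omega$ up to a boundary layer of vanishing volume, they contribute $\ge g(b)\kappa^2|\Omega|=E_2|\Omega|[\kappa-H]_+^2+o\bigl(\max(\kappa,[\kappa-H]_+^2)\bigr)$. On a boundary cell near $x_0$, after straightening $\partial\Omega$ and expanding the metric, one bounds the localized energy from below by that of the half-space model of Section~\ref{sec-E(nu)} at angle $\nu(x_0)$, namely $\sqrt{\kappa H}\,E\bigl(\bb,\nu(x_0)\bigr)\,|\text{cell}\cap\partial\Omega|+g(b)\kappa^2|\text{cell}|$ up to lower-order errors controlled by $\ell$, by the curvature, by the oscillation of $\nu$ on the cell, and by the localization cost; the terms $g(b)\kappa^2|\text{cell}|$ from boundary cells combine with the interior cells to reconstitute $g(b)\kappa^2|\Omega|$, while summing the surface terms and letting $\ell\to0$ recovers $\sqrt{\kappa H}\int_{\partial\Omega}E(\bb,\nu)\,d\sigma$. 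Adding the two contributions and the negligible reduction error closes the argument for $\kappa$ large.

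The step I expect to be the main obstacle is the boundary-cell lower bound together with the bookkeeping needed to make \emph{every} error term genuinely $o(1)\max(\kappa,[\kappa-H]_+^2)$ uniformly for $H\ge\kappa-\mu(\kappa)\kappa$: the mesh $\ell$ and the boundary-layer width must be tuned so that the IMS cost, the boundary-straightening and metric-expansion errors, the oscillation of $\nu$ over a cell, and the volume of the discarded layer are simultaneously negligible against \emph{both} scales $\kappa$ and $[\kappa-H]_+^2$, which cross over exactly in the transition window $H=\kappa-a\sqrt\kappa$ — in particular this forces a sharpening of the general bulk bound \eqref{eq:energyAsymp} near $b=1$, achieved by combining the quadratic expansion of $g$ with the fact that the relevant errors scale with $|g(b)|\kappa^2\asymp[\kappa-H]_+^2$. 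A secondary but genuine point is the reduction to the frozen potential $h\Fb$: since in this regime the total energy is only of order $\kappa$, not $\kappa^2$, the smallness of $\Ab-\Fb$ must be extracted from the second Ginzburg--Landau equation and elliptic regularity, as in the a priori analysis of \cite{FK3D}, rather than from a crude energy bound.
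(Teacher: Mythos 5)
Your overall architecture is the same as the paper's: split the energy into a thin boundary layer plus a bulk piece via IMS, compare the boundary layer cell-by-cell (in boundary coordinates, after a gauge transformation and a rescaling by $\sqrt{\kappa H}$) with the half-space reduced Ginzburg--Landau energy $d(\bb,\nu;\ell)$ of Section~\ref{sec-E(nu)} whose thermodynamic limit is $E(\bb,\nu)$, invoke the bulk machinery of \cite{FK3D} for the interior, build the trial state for the upper bound from rescaled minimizers of the reduced problem glued to the Abrikosov-type bulk configuration with disjoint supports, and tune all mesh parameters so that every error is $o\bigl(\max(\kappa,[\kappa-H]_+^2)\bigr)$ uniformly in the crossover window. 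This is precisely the paper's route (Theorems~\ref{thm-lb} and~\ref{thm-ub} combined).

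One genuine problem: your ``first step'' of \emph{globally} replacing $\Ab$ by $\Fb$ before doing any localization does not give an error of $o\bigl(\max(\kappa,[\kappa-H]_+^2)\bigr)$. With the a priori bounds available (Corollary~\ref{corol:L2} gives $\|\Ab-\Fb\|_{L^\infty}\lesssim\lambda^{1/6}/\kappa$ and Lemma~\ref{lem:almog} gives $\int|\psi|^2\lesssim\lambda^{1/2}$), the diamagnetic cross and quadratic terms produced by this replacement are of size $\kappa^2\lambda^{5/6}$, which for $\lambda\sim\kappa^{-1}$ is $\kappa^{7/6}\not=o(\kappa)$. The paper avoids this by doing things in the opposite order: the Maxwell term is simply dropped (it is nonnegative) so the lower bound is first reduced to $\mathcal E_0(h\psi,\Ab)$; the replacement $\Ab\to\Fb$ is then performed \emph{inside} each boundary cell of scale $\delta$, after a local linear gauge transformation $\phi_l(x)=(\Ab-\Fb)(x_{0,l})\cdot x$, so that the relevant quantity is $|(\Ab-\Fb)(x)-(\Ab-\Fb)(x_{0,l})|\lesssim\delta\,\|\nabla(\Ab-\Fb)\|_\infty\lesssim\delta\lambda^{1/6}/\kappa$; the extra factor $\delta$ (see~\eqref{eq-AtoF}) is exactly what makes the error term negligible after summing over cells. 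For the upper bound there is nothing to prove since the trial configuration is taken with $\Ab=\Fb$. So the ingredient you cite is the right one, but the replacement must be local, inside the partition of unity, not a global preprocessing step. A further, smaller, bookkeeping difference is that the paper's boundary cells carry \emph{only} the surface term $\sqrt{\kappa H}\,E(\bb,\nu)|\text{cell}\cap\partial\Omega|$ (the volume-type contribution of the thin slab is either absorbed into the reduced functional or estimated away using its small volume), whereas your boundary-cell lower bound also carries a $g(b)\kappa^2|\text{cell}|$ contribution; this can be made to work but one must then be careful not to double-count the bulk energy and to control the crossover error when $[\kappa-H]_+^2$ and $\kappa$ are comparable, which is precisely the delicate regime you flag.
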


The next theorem concerns the behavior of order parameters. We use
the convention that a subset $D\subset\Omega$  is smooth if there
exists an open subset $\widetilde D\subset\R^3$ having a smooth
boundary and such  that $D=\widetilde D\cap \Omega$.

\begin{thm}\label{thm-op}
Let $\mu:\R_+\to\R$ be a function satisfying
$\displaystyle\lim_{\kappa\to\infty}\mu(\kappa)=0$ and let $D\subset \Omega$
be smooth. Then there exist a positive constant $\kappa_0$, and a function
$\err:\R_+\to\R$ such that $\displaystyle\lim_{\kappa\to\infty}\err(\kappa)=0$,
and if $\kappa\geq\kappa_0$ and $H\geq \kappa-\mu(\kappa) \kappa$\,, then the
following is true.
\begin{enumerate}
\item If $(\psi,\Ab)\in H^1(\Omega;\C)\times \dot H^1_{\Div,\Fb}(\R^3)$ is a
solution of~\eqref{eq-3D-GLeq}, then,
\begin{multline}\label{eq-3D-op'}
\frac{1}2\int_D|\psi|^4\,dx
\leq -\kappa^{-1}\sqrt{ \frac{H}{\kappa}}\,
\int_{\partial\Omega\cap \overline{D}}
E\left(\frac{\kappa}H,\nu(x)\right)\,d\sigma(x)
-E_2|D|\,\left[\frac{\kappa}{H}-1\right]_+^2\\
+\err(\kappa)\max\left(\frac1\kappa,\left[\frac{\kappa}{H}-1\right]_+^2\right)\,.
\end{multline}
\item If $(\psi,\Ab)\in H^1(\Omega;\C)\times \dot H^1_{\Div,\Fb}(\R^3)$ is a
minimizer of~\eqref{eq-3D-GLf}, then,
\begin{multline}\label{eq-3D-op-Thm}
\frac12\int_D|\psi|^4\,dx= -\kappa^{-1}\sqrt{ \frac{H}{\kappa}}\,
\int_{\partial\Omega \cap \overline{D}}
E\left(\frac{\kappa}H,\nu(x)\right)\,d\sigma(x)
-E_2|D|\,\left[\frac{\kappa}{H}-1\right]_+^2\\
+\err(\kappa)\max\left(\frac1\kappa,\left[\frac{\kappa}{H}-1\right]_+^2\right)\,.
\end{multline}
\end{enumerate}
\end{thm}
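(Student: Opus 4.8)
The plan is to deduce both statements from Theorem~\ref{thm-main}, from a localised form of the lower bound that enters its proof, and from an exact identity for solutions of~\eqref{eq-3D-GLeq}; throughout I use the a priori bounds of \cite{FK3D,K3D}, in particular $\|\psi\|_{L^\infty(\Omega)}\le1$ and the fact that the field energy $\kappa^2H^2\|\curl\Ab-\beta\|_{L^2(\R^3)}^2$ is negligible against $\err(\kappa)\max(\kappa,[\kappa-H]_+^2)$. The key algebraic fact: if $(\psi,\Ab)$ solves~\eqref{eq-3D-GLeq} and $\chi\in C^\infty(\overline\Omega;[0,1])$ satisfies $N\cdot\nabla\chi=0$ on $\partial\Omega$, then testing the first equation in~\eqref{eq-3D-GLeq} with $\chi^2\overline\psi$, integrating by parts, and using the Neumann condition on $\partial\Omega$ yields
\[
\int_\Omega\chi^2\Bigl[|(\nabla-i\kappa H\Ab)\psi|^2-\kappa^2|\psi|^2+\tfrac{\kappa^2}{2}|\psi|^4\Bigr]dx=-\frac{\kappa^2}{2}\int_\Omega\chi^2|\psi|^4\,dx+\frac12\int_\Omega \Delta(\chi^2)|\psi|^2\,dx .
\]
(For $\chi\equiv1$ this reads $\mathcal E^{\rm 3D}(\psi,\Ab)=\kappa^2H^2\|\curl\Ab-\beta\|_{L^2(\R^3)}^2-\tfrac{\kappa^2}{2}\|\psi\|_{L^4(\Omega)}^4$, which together with Theorem~\ref{thm-main} already settles the case $D=\Omega$ of a minimizer.) The second ingredient is that the lower-bound half of the proof of Theorem~\ref{thm-main}---being based on a partition of unity at scale $\kappa^{-1/2}$ and on comparison with half-space models---also delivers, for every $\chi$ as above and every $(\psi,\Ab)$ obeying the a priori bounds, the weighted estimate
\[
\int_\Omega\chi^2\Bigl[|(\nabla-i\kappa H\Ab)\psi|^2-\kappa^2|\psi|^2+\tfrac{\kappa^2}{2}|\psi|^4\Bigr]dx\ge\sqrt{\kappa H}\int_{\partial\Omega}\chi^2E(\bb,\nu(x))\,d\sigma(x)+E_2\Bigl(\int_\Omega\chi^2dx\Bigr)[\kappa-H]_+^2-\err(\kappa)\max(\kappa,[\kappa-H]_+^2),
\]
with $\err$ independent of $\chi$ and $\bb=\min(\kappa/H,1)$; equivalently one uses the subdomain version on the two pieces below.

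\emph{Proof of part (1).} Fix a smooth $D$ and a cutoff $\chi$ with $\chi\equiv1$ on $D$, $\supp\chi\subset\overline{D'}$ for a smooth $D'\supset D$ at distance $\asymp\delta:=\kappa^{-1/2}$ from $\partial D$, $|\Delta(\chi^2)|\le C\delta^{-2}$, and $N\cdot\nabla\chi=0$ on $\partial\Omega$. Since $\mathbf 1_D\le\chi^2$, the identity gives
\[
\frac{\kappa^2}{2}\int_D|\psi|^4\,dx\le-\int_\Omega\chi^2\Bigl[|(\nabla-i\kappa H\Ab)\psi|^2-\kappa^2|\psi|^2+\tfrac{\kappa^2}{2}|\psi|^4\Bigr]dx+\frac12\int_\Omega|\Delta(\chi^2)||\psi|^2\,dx ,
\]
where the last term is $\le C\delta^{-1}\|\psi\|_{L^\infty}^2=\err(\kappa)\kappa$. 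Inserting the weighted lower bound into the first term, using $0\le\chi^2\le\mathbf 1_{D'}$, $E\le0$, $E_2<0$ to pass to integrals over $\partial\Omega\cap\overline{D'}$ and $D'$, and then (as $\delta\to0$) from $D'$ back to $D$ at the expense of $\err(\kappa)\max(\kappa,[\kappa-H]_+^2)$, I obtain
\[
\frac{\kappa^2}{2}\int_D|\psi|^4\,dx\le-\sqrt{\kappa H}\int_{\partial\Omega\cap\overline D}E(\bb,\nu(x))\,d\sigma(x)-E_2|D|[\kappa-H]_+^2+\err(\kappa)\max(\kappa,[\kappa-H]_+^2).
\]
Dividing by $\kappa^2$ and using $H/\kappa\to1$ to identify $\sqrt{\kappa H}/\kappa^2$ with $\kappa^{-1}\sqrt{H/\kappa}$, $[\kappa-H]_+^2/\kappa^2$ with $[\kappa/H-1]_+^2$ up to the factor $(\kappa/H)^2\to1$, and $\max(\kappa,[\kappa-H]_+^2)/\kappa^2$ with $\max(\kappa^{-1},[\kappa/H-1]_+^2)$ likewise---the bounded correction factors being absorbed into $\err$, and $E(\bb,\cdot)$ being the quantity written $E(\kappa/H,\cdot)$ in~\eqref{eq-3D-op'}---gives~\eqref{eq-3D-op'}.

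\emph{Proof of part (2).} A minimizer solves~\eqref{eq-3D-GLeq}, so~\eqref{eq-3D-op'} supplies the upper bound in~\eqref{eq-3D-op-Thm}; only the matching lower bound is needed. Take $\chi$ with $\supp\chi\subset\overline D$, $\chi\equiv1$ on a smooth $D_-\subset D$ at distance $\asymp\delta=\kappa^{-1/2}$ from $\partial D$, $N\cdot\nabla\chi=0$ on $\partial\Omega$, and a complementary cutoff $\eta$ with $\eta^2+\chi^2=1$ and $\eta\equiv1$ on $\Omega\setminus D$. From the identity and $\chi^2\le\mathbf 1_D$,
\[
\frac{\kappa^2}{2}\int_D|\psi|^4\,dx\ge-\int_\Omega\chi^2\Bigl[|(\nabla-i\kappa H\Ab)\psi|^2-\kappa^2|\psi|^2+\tfrac{\kappa^2}{2}|\psi|^4\Bigr]dx-\err(\kappa)\kappa ,
\]
so I need an upper bound on $\int_\Omega\chi^2[\cdots]\,dx$. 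Splitting
\[
\mathcal E^{\rm 3D}(\psi,\Ab)=\int_\Omega\chi^2[\cdots]\,dx+\int_\Omega\eta^2[\cdots]\,dx+\kappa^2H^2\|\curl\Ab-\beta\|_{L^2(\R^3)}^2 ,
\]
using $\mathcal E^{\rm 3D}(\psi,\Ab)=\E0(\kappa,H)$ together with Theorem~\ref{thm-main}, the nonnegativity of the field energy, and the weighted lower bound applied to $\eta$, bounds $\int_\Omega\chi^2[\cdots]\,dx$ by the surface-plus-bulk contribution attached to $D_-$ up to $\err(\kappa)\max(\kappa,[\kappa-H]_+^2)$; replacing $D_-$ by $D$ costs the same. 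Substituting and dividing by $\kappa^2$ yields the lower bound matching~\eqref{eq-3D-op'}, hence~\eqref{eq-3D-op-Thm}.

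\emph{Main difficulty.} The real work is the localised lower bound used above: reworking the lower-bound half of the proof of Theorem~\ref{thm-main} so that it tolerates the cutoff $\chi$ with an error $\err(\kappa)\max(\kappa,[\kappa-H]_+^2)$ that does not depend on $\chi$, which forces one to absorb the $|\nabla\chi|^2$ losses---from both the macroscopic cutoff and the microscopic partition of unity---using the a priori control of $\psi$ and of $\curl\Ab-\beta$, especially near $\partial\Omega$. Granted that, the remaining balance of scales is routine: one needs $\kappa^{-1}\ll\delta\ll1$ so that the commutator remainder $\delta^{-1}\|\psi\|_{L^\infty}^2$ and the $D$-versus-$D'$ discrepancies (of order at most $\delta\max(\kappa,[\kappa-H]_+^2)$) are both $\err(\kappa)\max(\kappa,[\kappa-H]_+^2)$, and $\delta=\kappa^{-1/2}$ does the job.
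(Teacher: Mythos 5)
Your proof is essentially correct and takes a genuinely different route from the paper, trading the paper's hard restriction to $D$ for an exact weighted identity. Let me record the comparison.

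The paper restricts the energy to $D$ (Lemma~\ref{lem-hc2-IMS}), produces a boundary term on $\partial D\cap\Omega$ from the resulting integration by parts, and then uses the decay estimate $\|\psi\|_{L^\infty(\omega_\kappa)}=o(1)$ (from \cite{FK3D}, quoted as~\eqref{eq-FK3D}) to show this term and the localization remainders are $o(\max(\kappa,[\kappa-H]_+^2))$, working with a cutoff at scale $L^{-1}$ where $L=\max(\kappa,[\kappa-H]_+^2)$. Your route instead tests the first GL equation against $\chi^2\overline\psi$ with $\chi$ satisfying $N\cdot\nabla\chi=0$ on $\partial\Omega$, obtaining the \emph{exact} identity
\[
\int_\Omega\chi^2\Bigl[|(\nabla-i\kappa H\Ab)\psi|^2-\kappa^2|\psi|^2+\tfrac{\kappa^2}{2}|\psi|^4\Bigr]\,dx
=-\frac{\kappa^2}{2}\int_\Omega\chi^2|\psi|^4\,dx+\frac12\int_\Omega\Delta(\chi^2)|\psi|^2\,dx\,,
\]
whose only remainder is $\tfrac12\int\Delta(\chi^2)|\psi|^2$. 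With $\chi$ a cutoff at scale $\delta=\kappa^{-1/2}$, the a priori bound $\|\psi\|_\infty\le1$ alone gives $|\tfrac12\int\Delta(\chi^2)|\psi|^2|\le C\delta^{-1}=C\kappa^{1/2}=o(\kappa)$, and the $D'$-versus-$D$ (resp.\ $D_-$-versus-$D$) discrepancies are $O(\delta\max(\kappa,[\kappa-H]_+^2))=o(\max(\kappa,[\kappa-H]_+^2))$, so you never need to invoke the decay of $\psi$ away from $\partial\Omega$. This is a genuine simplification. The one piece you state rather than prove is the ``weighted lower bound'' for $\int\chi^2[\cdots]$; as written it is a bit stronger than you need. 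The subdomain version you actually use follows from the paper's localized lower bound (Theorem~\ref{thm-lb}, stated for $\mathcal E_0(h\psi,\Ab)$ with $\supp h\subset\overline{D_{a(\kappa)}}$) via the single-cutoff commutator identity $\int\chi^2|(\nabla-i\kappa H\Ab)\psi|^2=\int|(\nabla-i\kappa H\Ab)(\chi\psi)|^2+\int\chi\Delta\chi\,|\psi|^2$ (the boundary integral vanishing by $N\cdot\nabla\chi=0$) and $\int\chi^2|\psi|^4\ge\int\chi^4|\psi|^4$; the error $\int\chi\Delta\chi\,|\psi|^2=O(\delta^{-1})=o(\kappa)$ again needs only $\|\psi\|_\infty\le1$. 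Modulo that bookkeeping, part~(1) is correct; part~(2) combines the Euler--Lagrange identity, the global upper bound (Theorem~\ref{thm-ub}), positivity of the field energy, and the lower bound applied to the complementary cutoff $\eta$ exactly as the paper does with $f$ and $g=\sqrt{1-f^2}$, so it closes in the same way. Minor points: the conversion $\max(\kappa,[\kappa-H]_+^2)/\kappa^2\asymp\max(1/\kappa,[\kappa/H-1]_+^2)$ only requires $H/\kappa$ bounded above and below, which holds here (for nontrivial solutions $H/\kappa\le\Theta_0^{-1}+o(1)$ by Remark~\ref{rem-hc3}); and when $[\kappa/H-1]_+>0$ one automatically has $H/\kappa\to1$, so the correction factors on the $E_2$-term do vanish as you claim.
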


\begin{rem}\label{rem-hc3}[Triviality for very large magnetic fields]~\\
Many previous works have addressed the question of the third
critical field, see \cite{FH-b} and the references therein for a
detailed review of this topic. We define
\begin{align}
H_{C_3}(\kappa) = \inf \big\{ H>0, &\text{ for all } H'>H \text{ the GL
equations~\eqref{eq-3D-GLeq} }\nonumber\\
&\text{ have only trivial solutions}\}.
\end{align}
Here a solution is trivial if $\psi \equiv 0$.
It follows from \cite{FH-b} that
\[
H_{C_3}(\kappa) = \Theta_0^{-1} \kappa + o(\kappa),\qquad \text{
as } \kappa \rightarrow \infty.
\]
Here, as we mentioned earlier, $\Theta_0 \approx 0.59$ is a
universal constant whose definition will be recalled in
\eqref{eq-th0} below.  So we can conclude that the ratio
$H/\kappa$ when $\kappa \geq 1$ is always uniformly bounded from
above in the regime where non-trivial solutions of
\eqref{eq-3D-GLeq} exist.
\end{rem}

It was realized early in the Physics literature (see for example
\cite[Chapter 6.6]{dGe}) that when studying the $3$-dimensional
Ginzburg-Landau model between $H_{C_2}$ and $H_{C_3}$ there is a
geometric boundary phenomenon that does not appear in $2$D. When
$\bb=\kappa/H \in (\Theta_0,1)$ (and $\kappa$ sufficiently large)
superconductivity will in the $2$D case exist on the entire
boundary. But in the $3$D model only a certain part of the
boundary will carry superconductivity. This part can be described
using the angle $\nu(x)$ (between the magnetic field and the
tangent plane at the boundary point $x$) and the spectral function
$\zeta$. A local and linearized calculation suggests that
superconductivity should only be present near the boundary section
\begin{align}
\Gamma(\bb) :=\{ x \in \partial \Omega \,:\, \zeta(\nu(x)) < \bb \}.
\end{align}
In previous works, notably \cite{Pa}, the technique of Agmon estimates was used
to prove that the superconducting order parameter $\psi$ will indeed decay
rapidly away from $\Gamma(\bb)$. Our results give the opposite direction,
namely exhibit through Theorem~\ref{thm-op} that minimizers are indeed nonzero
(in an $L^4$-sense) exactly in the vicinity of $\Gamma(\bb)$. In order to
realize this, notice that $[\kappa/H - 1]_{+}=0$ by the condition on $\bb$ and
that $E(\bb,\nu) = 0$ if and only if $\bb \leq \zeta(\nu)$.

\section{The universal constant $E_2$}\label{sec:E2}

Given a constant $b\geq 0$ and an open set $\mathcal D\subset \R^2$, we define
the following Ginzburg-Landau energy,
\begin{equation}\label{eq-LF-2D}
G_{\mathcal D}(u)=\int_{\mathcal D}\Bigl(b|(\nabla-i\Ab_0)u|^2
-|u|^2+\frac1{2}|u|^4\Bigr)\,dx\,.
\end{equation}
Here $\Ab_0$ is the canonical magnetic potential,
\begin{equation}\label{eq-hc2-mpA0}
\Ab_0(x_1,x_2)=\frac12(-x_2,x_1)\,,\quad\forall~x=(x_1,x_2)\in
\R^2\,.\end{equation}

Given $R>0$, we denote by  $K_R=(-R/2,R/2)\times(-R/2,R/2)$ a square of side
length $R$. Let,
\begin{equation}\label{eq-m0(b,R)}
m_0(b,R)=\inf_{u\in H^1_0(K_R;\C)}
G_{K_R}(u)\,.
\end{equation}

It is proved in \cite{FK3D} that
%(see also \cite{AS}),
\begin{itemize}
\item For all $b>0$, there exists a constant $g(b)$
such that,
\[
\lim_{R\to\infty}\frac{m_0(b,R)}{R^2}=g(b)\,.
\]
\item The function $\displaystyle\frac{g(b)}{(b-1)^2}$ has a limit as $b\to1_-$,
\begin{equation}\label{eq-E2}
E_2=\lim_{b\to1_-}\frac{g(b)}{(b-1)^2}\,,
\end{equation}
and $-\frac12\leq E_2<0$.
\end{itemize}

\section{Reduced Ginzburg-Landau energy}\label{sec:hsp}

\subsection{Harmonic oscillator on the half-axis}
We denote by $\R_+=\{t\in\R~:~t>0\}$. For each real number $\xi\in\R$, we
consider the harmonic oscillator
\[
H(\xi)=-\frac{d^2}{dt^2}+(t-\xi)^2\quad\text{in}\quad L^2(\R_+)\,,
\]
with Neumann boundary condition, $u'(0)=0$.  Let $\mu_1(\xi)$ denotes the first
eigenvalue of $H(\xi)$. We define the universal constant,
\begin{equation}\label{eq-th0}
\Theta_0=\inf_{\xi\in\R}\mu_1(\xi)\,.
\end{equation}
The constant $\Theta_0$ satisfies the following properties \cite{DH, HM}:
\begin{equation}\label{min-th0}
\frac12<\Theta_0<1, \quad
\text{and}\quad \Theta_0=\mu_1(\xi_0),
\quad\text{where}\quad \xi_0=\sqrt{\Theta_0}\,.
\end{equation}
We introduce the function $\varphi_0\in L^2(\R^2)$ as follows,
\begin{equation}\label{eq-1D-phi0}
\int_{\R_+}|\varphi_0(t)|^2\,dt=1\,,
\quad -\varphi_0''(t)+(t-\xi_0)^2\varphi_0(t)=\Theta_0\varphi_0(t)
\quad \text{in }\R_+\,,
\quad \varphi_0'(0)=0\,.
\end{equation}

\subsection{Magnetic Schr\"odinger operator in $\R^3_+$}\label{sec:op}

We denote by $\R_+^3=\{(x_1,x_2,x_3)\in\R^3~:~x_1>0\}$. Let $\nu\in[0,\pi/2]$.
 Consider the magnetic potential
\begin{equation}\label{eq-3D-Eb}
\Eb_\nu(x)=(0,0,\cos\nu\,x_1+\sin\nu \,x_2)\,,\quad x=(x_1,x_2,x_3)\in \R_+^3\,.
\end{equation}
Notice that $\curl\Eb_\nu=(\sin\nu,-\cos\nu,0)$ is constant. Geometrically,
$\nu$ measures the angle between  $\curl\Eb_\nu$ and the boundary
of $\R_+^3$. Consider the Schr\"odinger operator with constant magnetic field\,,
\begin{equation}\label{eq-3D-op-2}
\mathcal L(\nu)=-(\nabla-i\Eb_\nu)^2\quad{\rm in}\quad L^2(\R_+^3)\,,
\end{equation}
with domain
\[
D(\mathcal L(\nu))
=\{u\in L^2(\R_+^3)~:~(\nabla-i\Eb_\nu)^ju\in L^2(\R_+)\,,~j=1,2\,,~
\partial_{x_1}u=0\text{ on }\{0\}\times\R^2\}\,.
\]
We denote by $\zeta(\nu)$ the bottom of the spectrum of $\mathcal L(\nu)$:
\begin{equation}\label{eq-z(nu)}
\zeta(\nu)=\inf\sigma\big{(}\mathcal L(\nu)\big{)}\,.
\end{equation}
We collect below some properties concerning $\zeta(\nu)$
(see e.g. \cite[Lemmas~7.2.1 \&  7.2.2]{FH-b}).

\begin{lem}\label{lem-p-z(nu)}
Let $\Theta_0$ be the universal constant introduced in~\eqref{eq-th0} above.
\begin{enumerate}
\item The function $[0,\pi/2]\ni\nu\mapsto\zeta(\nu)$ is monotone increasing,
$\zeta(0)=\Theta_0$ and $\zeta(\pi/2)=1$.
\item For all $\nu\in[0,\pi/2)$, $\zeta(\nu)<1$.
\end{enumerate}
\end{lem}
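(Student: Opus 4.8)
The plan is to compute $\zeta(\nu)$ at the two endpoints by separation of variables, and to obtain monotonicity and the strict bound $\zeta(\nu)<1$ ($\nu<\pi/2$) from a partial Fourier transform in the translation-invariant variable $x_3$. Denoting by $\widehat u(x_1,x_2,\tau)$ that transform, $\mathcal L(\nu)$ is unitarily equivalent to $\int_{\R}^{\oplus}\mathcal L_\tau(\nu)\,d\tau$, where
\[
\mathcal L_\tau(\nu)=-\partial_{x_1}^2-\partial_{x_2}^2+\big(\tau-\cos\nu\,x_1-\sin\nu\,x_2\big)^2
\]
acts on $L^2(\R_+\times\R)$ with Neumann condition at $x_1=0$. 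For $\nu=\pi/2$ the fibre decouples as $(-\partial_{x_1}^2)\otimes I+I\otimes\big(-\partial_{x_2}^2+(x_2-\tau)^2\big)$; the first factor (Neumann on $\R_+$) has spectrum $[0,\infty)$ and the second is a harmonic oscillator with ground state energy $1$, so $\inf\Spec\mathcal L_\tau(\pi/2)=1$ for every $\tau$ and $\zeta(\pi/2)=1$. For $\nu=0$ a further Fourier transform in $x_2$ (dual variable $\eta$) turns the fibres into $-\partial_{x_1}^2+\eta^2+(x_1-\tau)^2$ on $L^2(\R_+)$; the minimum is attained at $\eta=0$, leaving $H(\tau)$ with bottom $\mu_1(\tau)$, so $\zeta(0)=\inf_{\tau}\mu_1(\tau)=\Theta_0$ by \eqref{eq-th0}.

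For $\nu\in(0,\pi/2]$ the shift $x_2\mapsto x_2+\tau/\sin\nu$ removes $\tau$ from $\mathcal L_\tau(\nu)$, so all fibres are unitarily equivalent and
\[
\zeta(\nu)=\inf\Spec\,\mathfrak K(\nu),\qquad\mathfrak K(\nu)=-\partial_{x_1}^2-\partial_{x_2}^2+(\cos\nu\,x_1+\sin\nu\,x_2)^2\ \text{ on }\R_+\times\R,\ \text{Neumann}.
\]
The measure-preserving shear $x_2\mapsto x_2-\cot\nu\,x_1$ turns the potential into $\sin^2\nu\,x_2^2$ and the Laplacian into $-(\partial_{x_1}+\cot\nu\,\partial_{x_2})^2-\partial_{x_2}^2$ on the same half-plane; alternatively a rotation of $(x_1,x_2)$ turns the potential into a clean $x_1^2$ at the cost of tilting the Neumann wall to a line through the origin making angle $\nu$ with the bottom of the well. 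I expect the monotonicity statement to be the main obstacle here: distinct $\nu$ give genuinely different (tilted) effective domains, so there is no form-domain inclusion to exploit, and one must compare ground-state energies across this one-parameter family. The strategy is to combine (i) continuity of $\nu\mapsto\zeta(\nu)$ on $(0,\pi/2]$ from analytic perturbation theory for $\mathfrak K(\nu)$, together with the more delicate fact that $\zeta(\nu)\to\Theta_0=\zeta(0)$ as $\nu\to0^+$ — delicate because for $\nu>0$ the zero set of the potential is a half-line inside the domain meeting the Neumann wall at the small angle $\nu$, along which minimizing sequences can concentrate at the de~Gennes energy rather than at the naive limiting value $\mu_1(0)=1$ — with (ii) a one-sided comparison of Rayleigh quotients obtained by transplanting a boundary-localized approximate ground state of $\mathfrak K(\nu_2)$ into the quadratic form of $\mathfrak K(\nu_1)$ for $\nu_1<\nu_2$, controlling the localization errors to be $o(1)$ and of the right sign so that $\zeta(\nu_1)\le\zeta(\nu_2)$. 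Alternatively one invokes the detailed analysis of this model operator in \cite{FH-b}.

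Finally, $\zeta(\nu)<1$ for $\nu<\pi/2$. A Persson-type argument identifies $\Spec_{\mathrm{ess}}\big(\mathcal L(\nu)\big)=[1,\infty)$: away from $\partial\R_+^3$ the operator is gauge-equivalent to the constant-field Laplacian on $\R^3$ with $|\curl\Eb_\nu|=1$, whose bottom is the lowest Landau level $1$. Hence $\zeta(\nu)\le1$, with strict inequality precisely when $\mathcal L(\nu)$ has a bound state below $1$. For $\nu<\pi/2$ the field $\curl\Eb_\nu=(\sin\nu,-\cos\nu,0)$ has a nonzero component tangent to $\partial\R_+^3$, and I would exhibit such a bound state by a trial function concentrated near the boundary and modelled on the de~Gennes ground state $\varphi_0$: in the tilted coordinates above, essentially $\varphi_0$ of a rescaled distance to the Neumann wall times a slowly varying profile in the arclength variable and a phase factor, for which the Rayleigh quotient equals $\Theta_0+o(1)<1$ up to localization errors. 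Combined with $\zeta(\nu)\le1$ this gives $\zeta(\nu)<1$, while monotonicity together with $\zeta(0)=\Theta_0$ gives the complementary bound $\zeta(\nu)\ge\Theta_0$, completing the proof.
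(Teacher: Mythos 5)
The paper does not actually prove this lemma: it is stated as a collection of known facts with a pointer to \cite[Lemmas~7.2.1 \& 7.2.2]{FH-b}, and no argument is supplied. Your reconstruction of the two endpoint values is correct and goes beyond what the paper does --- the double Fourier transform for $\zeta(0)=\Theta_0$ and the separation of variables for $\zeta(\pi/2)=1$ are exactly the right computations, and the reduction of $\mathcal L(\nu)$ to $L(\nu)$ for $\nu>0$ via the shift $x_2\mapsto x_2+\tau/\sin\nu$ is the content of the paper's Lemma~\ref{lem-sp-L}.

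However, the two substantive claims remain unproved in your proposal. You do not actually establish monotonicity: items (i) and (ii) are a strategy, not an argument, and you close the discussion by invoking \cite{FH-b}, which is exactly what the paper does. The known proof (Helffer--Morame, reproduced in \cite{FH-b}) rests on a Feynman--Hellmann/virial identity combined with a scaling relation for the fibre operator, and nothing of that kind is supplied or approximated by ``transplanting boundary-localized states and controlling localization errors to be of the right sign.'' More seriously, the argument you sketch for $\zeta(\nu)<1$ on $(0,\pi/2)$ cannot be correct as stated. You claim a $\varphi_0$-based trial state, localized near the Neumann wall, has Rayleigh quotient $\Theta_0+o(1)$. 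If that were so one would have $\zeta(\nu)\le\Theta_0$ for every $\nu<\pi/2$, contradicting the strict increase of $\zeta$ and the limit $\zeta(\pi/2)=1$. The value $\Theta_0$ is attained only for $\nu=0$: for $\nu>0$ the potential $(\cos\nu\,x_1+\sin\nu\,x_2)^2$ is not translation invariant in $x_2$, and a tensor product of $\varphi_0$ in the normal variable with a slowly varying profile cannot absorb the transverse confinement at a cost of only $o(1)$. Since the Rayleigh quotient of any admissible trial state is $\ge\zeta(\nu)$, which is precisely the quantity you wish to bound, claiming the value $\zeta(\nu)+o(1)$ would be circular, and claiming $\Theta_0+o(1)$ is false. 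The literature proof of $\zeta(\nu)<1$ uses a substantially more careful quasi-mode. In short, the self-contained parts of your proposal (the endpoints and essential spectrum) are right, but the two hard statements, monotonicity and $\zeta(\nu)<1$, are left resting on \cite{FH-b}, just as in the paper, and the sketched trial-state argument for the latter would not survive being written out.
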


We denote by $\R^2_+=(0,\infty)\times\R$. In connection with the analysis of
the operator  $\mathcal L(\nu)$, we introduce the two dimensional operator
\begin{equation}\label{eq-2D-op}
L(\nu)=-\partial_{x_1}^2-\partial_{x_2}^2+(\cos\nu\,x_1+\sin\nu\,x_2)^2
\quad\text{in}\quad L^2(\R^2_+)\,,
\end{equation}
whose domain is
\[
D(L(\nu))=\{u\in H^2(\R_+^2)~:~(\cos\nu\,x_1+\sin\nu\,x_2)^ju\in L^2(\R^2_+),~j=1,2,~
\partial_{x_1}u=0~on ~\{0\}\times\R\}\,.
\]
The link between the spectra of $\mathcal L(\nu)$ and $L(\nu)$ is given
below~\cite{FH-b}.

\begin{lem}\label{lem-sp-L}
Suppose $\nu\in(0,\pi/2)$. Then,
\begin{enumerate}
\item $\sigma(\mathcal L(\nu))=\sigma(L(\nu))$\,;
\item $\sigma_{\text{ess}}(L(\nu))=[1,\infty)$\,.
\end{enumerate}
\end{lem}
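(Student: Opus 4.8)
I would treat the two assertions separately: part~(1) via a fibration of $\mathcal L(\nu)$ in the translation-invariant direction, and part~(2) via a Persson-type lower bound together with an explicit singular sequence built along the bottom of the potential valley.

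For part~(1): the magnetic potential $\Eb_\nu$ has only a third component, and that component $\cos\nu\,x_1+\sin\nu\,x_2$ does not depend on $x_3$. Performing a partial Fourier transform in $x_3$ identifies $\mathcal L(\nu)$ on $L^2(\R_+^3)=L^2(\R^2_+)\otimes L^2(\R_{x_3})$ with the direct integral $\int_\R^\oplus \mathcal L_\tau(\nu)\,d\tau$, where $\mathcal L_\tau(\nu)=-\partial_{x_1}^2-\partial_{x_2}^2+(\cos\nu\,x_1+\sin\nu\,x_2-\tau)^2$ on $L^2(\R^2_+)$ with Neumann condition at $\{x_1=0\}$. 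Since $\nu\in(0,\pi/2)$ we have $\sin\nu\neq0$, so the translation $x_2\mapsto x_2-\tau/\sin\nu$ is a unitary of $L^2(\R^2_+)$ (it fixes the half-space and the boundary condition) conjugating $\mathcal L_\tau(\nu)$ to $L(\nu)$. Thus $\sigma(\mathcal L_\tau(\nu))=\sigma(L(\nu))$ for every $\tau$, and the standard formula for the spectrum of a direct integral over unitarily equivalent fibers gives $\sigma(\mathcal L(\nu))=\sigma(L(\nu))$.

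For the inclusion $\sigma_{\mathrm{ess}}(L(\nu))\subseteq[1,\infty)$ I would invoke Persson's characterization of $\inf\sigma_{\mathrm{ess}}$: it suffices to show that for every $\varepsilon>0$ there is $R>0$ with $\langle L(\nu)u,u\rangle\geq(1-\varepsilon)\|u\|^2$ for all $u\in C_c^\infty(\overline{\R^2_+}\setminus B_R)$, where $B_R$ is the ball of radius $R$. Introduce a quadratic partition of unity $\chi_1^2+\chi_2^2=1$ with $\chi_1$ supported in $\{x_1>\delta\}$, $\chi_2$ in $\{x_1<2\delta\}$, transition width $\sim\delta$, and apply the IMS localization formula. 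On $\supp(\chi_1 u)$ the function extends by zero to $C_c^\infty(\R^2)$, on which $-\partial_{x_1}^2-\partial_{x_2}^2+(\cos\nu\,x_1+\sin\nu\,x_2)^2$ is unitarily equivalent, after a rotation turning $\cos\nu\,x_1+\sin\nu\,x_2$ into a coordinate, to $-\partial_u^2+u^2-\partial_v^2$, hence bounded below by $1$. On $\supp(\chi_2 u)\subset\{x_1<2\delta\}\setminus B_R$ one has $|x_2|\geq\sqrt{R^2-4\delta^2}$, so the potential is at least $(\sin\nu\sqrt{R^2-4\delta^2}-2\delta\cos\nu)^2$, which is arbitrarily large. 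Choosing first $\delta$ large so that the localization error $|\nabla\chi_1|^2+|\nabla\chi_2|^2\leq C\delta^{-2}$ is below $\varepsilon$, and then $R$ large depending on $\delta$, yields the bound, hence $\inf\sigma_{\mathrm{ess}}(L(\nu))\geq1$.

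For the reverse inclusion $[1,\infty)\subseteq\sigma_{\mathrm{ess}}(L(\nu))$ I would construct, for each $\lambda\geq1$, a singular Weyl sequence supported deep inside $\R^2_+$. In the rotated coordinates $\ell=\cos\nu\,x_1+\sin\nu\,x_2$, $m=-\sin\nu\,x_1+\cos\nu\,x_2$, the operator reads $-\partial_\ell^2+\ell^2-\partial_m^2$ on the half-space $\{\cos\nu\,\ell-\sin\nu\,m>0\}$, so for bounded $\ell$ the direction $m\to-\infty$ escapes to infinity while remaining in the bulk. Take $u_n(\ell,m)=\gamma(\ell)\,e^{i\sqrt{\lambda-1}\,m}\,\chi_n(m)$, where $\gamma$ is the $L^2(\R)$-normalized Gaussian ground state of $-\partial_\ell^2+\ell^2$ (eigenvalue $1$) and $\chi_n$ is a slowly spreading cutoff centered at $m=-n$, chosen so that $\|\chi_n'\|+\|\chi_n''\|\to0$ while $\|\chi_n\|$ stays bounded away from $0$. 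For $n$ large these $u_n$ are supported in $\R^2_+$ away from $\{x_1=0\}$, so the Neumann condition is automatic; their supports escape to infinity (hence $u_n\rightharpoonup0$), and a direct computation gives $\|(L(\nu)-\lambda)u_n\|/\|u_n\|\to0$, so $\lambda\in\sigma_{\mathrm{ess}}(L(\nu))$. The one technical point requiring care, and the main (albeit minor) obstacle, is that $\gamma$ is not compactly supported: one replaces $\gamma$ by a cutoff $\gamma(\ell)\eta(\ell/T)$ and controls the Gaussian tail, letting $T\to\infty$; passing to a diagonal sequence then produces a genuine singular sequence. Combining the two inclusions yields $\sigma_{\mathrm{ess}}(L(\nu))=[1,\infty)$.
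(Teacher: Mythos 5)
The paper does not prove this lemma; it simply cites~\cite{FH-b} (Lemmas~7.2.1--7.2.2 there). Your argument is a correct self-contained version of the standard proof found in such references: part~(1) via partial Fourier transform in $x_3$ and the translation $x_2\mapsto x_2-\tau/\sin\nu$ (which requires $\sin\nu\neq0$, hence the restriction $\nu\in(0,\pi/2)$) to identify all fibers with $L(\nu)$; part~(2) via Persson's theorem combined with an IMS partition separating the boundary layer from the bulk for the lower bound, and an explicit Weyl singular sequence in the rotated coordinates for the reverse inclusion. Your treatment of the two small technical points --- the choice ``$\delta$ large, then $R$ large'' to beat the localization error, and the cutoff of the Gaussian $\gamma$ followed by a diagonal argument --- is exactly what is needed to make the sketch rigorous.
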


\begin{rem}\label{rem-2D-sp}
Suppose that $\nu\in(0,\pi/2)$. It results from Lemma~\ref{lem-sp-L} and the
properties of $\zeta(\nu)$
in Lemma~\ref{lem-p-z(nu)} that
\begin{enumerate}
\item $\zeta(\nu)$ is an eigenvalue of finite multiplicity of $L(\nu)$\,;
\item $\zeta(\nu)$ is the lowest eigenvalue of $L(\nu)$.
\end{enumerate}
Consequently, we can select a real-valued eigenfunction $\phi\in L^2(\R^2_+)$
such that:
\[
\phi>0\,,\quad
\int_{\R^2_+}|\phi|^2\,dx=1\,,\quad
\int_{\R^2_+}\left(|\nabla\phi|^2+|(\cos\nu\,x_1+\sin\nu\,x_2)\phi|^2\right)\,dx
=\zeta(\nu)\,.
\]
Moreover, using the technique of `Agmon estimates' (cf. \cite{HM}), it is proved
in \cite[Theorem~1.1]{BoDaPoRa} that the eigenfunction $\phi$ decays
exponentially at infinity as follows. If $\nu\in(0,\pi/2)$ and
$\alpha\in (0,\sqrt{1-\zeta(\nu)})$, then  there exists a positive constant
$C_{\nu,\alpha}$ such that
\begin{equation}\label{eq:phidecay}
\int_{\R^2_+} e^{\alpha\sqrt{x_1^2+x_2^2}} \bigg(|\partial_{x_1}\phi|^2
+|\partial_{x_2}\phi|^2
+|(\cos\nu x_1+\sin\nu x_2)\phi|^2+|\phi|^2\bigg)\,dx_1dx_2
\leq C_{\nu,\alpha}.
\end{equation}
\end{rem}

\subsection{Reduced Ginzburg-Landau energy}
Let $\nu\in[0,\pi/2]$ and $\Eb_\nu$ be the magnetic potential introduced
in~\eqref{eq-3D-Eb}. For each $\ell>0$, we introduce the domains,
\begin{equation}\label{eq-K-ell}
K_\ell=(-\ell,\ell)\times(-\ell,\ell)\,,\quad  U_\ell=(0,\infty)\times K_\ell\,,
\end{equation}
and the space,
\begin{equation}\label{eq-domain-ell}
\mathcal S_\ell=\{u\in L^2(U_\ell)~:~(\nabla-i\Eb_\nu)u\in L^2(U_\ell)
\,,~u=0\text{ on }(0,\infty)\times \partial K_\ell\}\,.
\end{equation}
Let $\bb\in(0,1]$ be a given constant. If $u\in\mathcal S_\ell$,
we define the Ginzburg-Landau functional,
\begin{equation}\label{eq-Rgl}
\mathcal G_{\bb,\nu;\ell}(u)=
\int_{U_\ell}\left(|(\nabla-i\Eb_\nu)u|^2
-\bb|u|^2+\frac{\bb}2|u|^4\right)\,dx\,.
\end{equation}
Associated with $\mathcal G_{\bb,\nu;\ell}$ is the ground state energy,
\begin{equation}\label{eq-Rgs}
d(\bb,\nu;\ell)=\inf_{u\in \mathcal S_\ell} \mathcal G_{\bb,\nu;\ell}(u)\,.
\end{equation}

\subsubsection{Preliminary properties of minimizers}

\begin{lem}\label{lem-p-d(l)}
For each $\nu\in[0,\pi/2]$, let $\zeta(\nu)$ be as defined in~\eqref{eq-z(nu)}.
Let $\ell>0$ be given.

If $\bb\leq\zeta(\nu)$, then
\[
d(\bb,\nu;\ell)=0\,,
\]
where $d(\bb,\nu;\ell)$ is the ground state energy introduced in~\eqref{eq-Rgs}.
\end{lem}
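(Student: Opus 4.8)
The plan is to establish the two bounds $d(\bb,\nu;\ell)\le 0$ and $d(\bb,\nu;\ell)\ge 0$ separately. The upper bound is immediate: the zero function belongs to $\mathcal S_\ell$ and $\mathcal G_{\bb,\nu;\ell}(0)=0$, so $d(\bb,\nu;\ell)\le 0$.

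For the lower bound, fix $u\in\mathcal S_\ell$. First I would discard the nonnegative quartic term, so that
\[
\mathcal G_{\bb,\nu;\ell}(u)\ge \int_{U_\ell}\Bigl(|(\nabla-i\Eb_\nu)u|^2-\bb|u|^2\Bigr)\,dx\,.
\]
Next I would extend $u$ by zero to a function $\tilde u$ on the half-space $\R^3_+$. Since $u$ vanishes on the lateral portion $(0,\infty)\times\partial K_\ell$ of $\partial U_\ell$, and the remaining portion $\{0\}\times K_\ell$ sits on $\partial\R^3_+$, this zero extension lies in the magnetic Sobolev space $\{v\in L^2(\R^3_+):(\nabla-i\Eb_\nu)v\in L^2(\R^3_+)\}$, which is exactly the form domain of the operator $\mathcal L(\nu)$ of~\eqref{eq-3D-op-2} (the condition $\partial_{x_1}v=0$ on $\{0\}\times\R^2$ being a natural boundary condition, hence not imposed at the level of the form). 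The key step is then the variational characterization of the bottom of the spectrum: since $\zeta(\nu)=\inf\sigma(\mathcal L(\nu))$, the min-max principle gives
\[
\int_{\R^3_+}|(\nabla-i\Eb_\nu)\tilde u|^2\,dx\ge \zeta(\nu)\int_{\R^3_+}|\tilde u|^2\,dx\,.
\]
Unravelling the zero extension, $\int_{U_\ell}|(\nabla-i\Eb_\nu)u|^2\,dx\ge \zeta(\nu)\int_{U_\ell}|u|^2\,dx$, and the hypothesis $\bb\le\zeta(\nu)$ then yields $\int_{U_\ell}\bigl(|(\nabla-i\Eb_\nu)u|^2-\bb|u|^2\bigr)\,dx\ge 0$. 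Combining with the displayed inequality above gives $\mathcal G_{\bb,\nu;\ell}(u)\ge 0$ for every $u\in\mathcal S_\ell$, hence $d(\bb,\nu;\ell)\ge 0$, and therefore $d(\bb,\nu;\ell)=0$. Note this argument is uniform in $\ell>0$ and valid for all $\nu\in[0,\pi/2]$, endpoints included.

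I do not expect a serious obstacle here; the only point requiring some care is the claim that the zero extension $\tilde u$ genuinely belongs to the form domain of $\mathcal L(\nu)$, i.e. that extending by zero across a boundary portion on which the trace vanishes keeps the magnetic gradient in $L^2(\R^3_+)$. The essential observation is that one must treat $(\nabla-i\Eb_\nu)u$ as a single object — it is assumed to be in $L^2(U_\ell)$ in the very definition of $\mathcal S_\ell$ — rather than controlling $\nabla u$ and $\Eb_\nu u$ separately, since $\Eb_\nu$ grows linearly in $x_1$ and $U_\ell$ is unbounded in that direction. If one wanted to be completely explicit, one could first approximate $u$ in the form norm by functions in $C_c^\infty(\overline{\R^3_+})$ supported away from $(0,\infty)\times\partial K_\ell$ and pass to the limit.
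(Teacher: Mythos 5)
Your proposal is correct and follows essentially the same route as the paper: the upper bound via the zero test function, and the lower bound by dropping the nonnegative quartic term, extending by zero to $\R^3_+$, and invoking the min-max characterization of $\zeta(\nu)=\inf\sigma(\mathcal L(\nu))$. The paper states this in one line; you have merely filled in the details (including the remark about the zero extension landing in the form domain), which is a welcome expansion rather than a different argument.
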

\begin{proof}
By using $u=0$ as a test function, it is clear that
$d(\bb,\nu;\ell)\leq 0$. On the other hand, the min-max
principle and the condition on $b$ show that
$d(\bb,\nu;\ell)\geq0$.
\end{proof}

\begin{rem}\label{rem-b<theta0}
It results from Lemma~\ref{lem-p-d(l)} and the properties of $\zeta(\nu)$ in
Lemma~\ref{lem-p-z(nu)} that $d(\bb,\nu;\ell)=0$ in each of the following cases:
\begin{enumerate}
\item $\bb\leq\Theta_0$ and $\nu\in[0,\pi/2]$\,;
\item $\bb\in[\Theta_0,1]$ and $\nu=\pi/2$\,.
\end{enumerate}
\end{rem}

\begin{thm}\label{thm-min-Rgl}
Let $\Theta_0$ be the constant introduced in~\eqref{eq-th0}. Suppose that
$\bb\in[\Theta_0,1]$ is a given constant.

For all $\nu\in[0,\pi/2]$ and $\ell\geq1$, the functional
$\mathcal G_{\bb,\nu;\ell}$ in~\eqref{eq-Rgl} admits a minimizer
$\varphi_{\bb,\nu;\ell}\in \mathcal S_\ell$ satisfying,
\begin{equation}\label{eq-min-Rgl}
\mathcal G_ {\bb,\nu;\ell}(\varphi_{\bb,\nu;\ell})=d(\bb,\nu;\ell)
\,,\quad \|\varphi_{\bb,\nu;\ell}\|_{L^\infty(U_\ell)}\leq1\,.
\end{equation}
Furthermore, there exists a universal constant $C>0$ such that, if
$\nu\in[0,\pi/2]$ and $\ell>0$, the minimizer
$\varphi_{\bb,\nu;\ell}$ satisfies,
\begin{equation}\label{eq-min-decay}
\int_{U_\ell\cap\{x_1\geq 4\}}\frac{x_1}{(\ln x_1)^2}
\left(|(\nabla-i\Eb_\nu)\varphi_{\bb,\nu;\ell}|^2
+|\varphi_{\bb,\nu;\ell}|^2+x_1^2|\varphi_{\bb,\nu;\ell}|^4\right)
\,dx\leq C\ell^2\,.
\end{equation}
\end{thm}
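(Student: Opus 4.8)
The plan is to establish existence of a minimizer first, then derive the a priori $L^\infty$ bound, and finally prove the weighted decay estimate \eqref{eq-min-decay} via an Agmon-type argument adapted to the logarithmic weight. For existence, I would work with the case $\ell \ge 1$: the functional $\mathcal G_{\bb,\nu;\ell}$ is bounded below on $\mathcal S_\ell$ since the quartic term dominates (for fixed $\bb \le 1$ one has $-\bb|u|^2 + \tfrac{\bb}{2}|u|^4 \ge -\tfrac{\bb}{2}$ pointwise, so $\mathcal G_{\bb,\nu;\ell}(u) \ge -\tfrac{\bb}{2}|U_\ell|$... but $U_\ell$ is unbounded in the $x_1$-direction, so one must be slightly more careful). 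The key point is that the Dirichlet condition on the lateral boundary $(0,\infty)\times\partial K_\ell$ confines the cross-section, and the magnetic Schr\"odinger operator $-(\nabla - i\Eb_\nu)^2$ on $U_\ell$ with these mixed boundary conditions has spectrum bounded below by $\zeta_\ell(\nu) \ge \zeta(\nu) > \bb$ is false in general (we're in the regime $\bb \ge \Theta_0 \ge \zeta(\nu)$ possibly), so the linear part need not be coercive. Instead one takes a minimizing sequence $(u_n)$, uses the lower bound on the quartic term to get an $L^4(U_\ell)$ bound away from a bounded region, combines with the Dirichlet confinement in the transverse variables to get $H^1$-bounds on compact pieces, extracts a weakly convergent subsequence, and passes to the limit using weak lower semicontinuity of the magnetic Dirichlet form together with compact Sobolev embedding on each bounded slab $U_\ell \cap \{x_1 < R\}$; the contribution from $x_1 > R$ is controlled uniformly in $n$ by the decay estimate one proves simultaneously (or by a cutoff and diagonal argument).

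For the $L^\infty$ bound $\|\varphi_{\bb,\nu;\ell}\|_{L^\infty} \le 1$, the standard argument applies: the minimizer solves the Euler-Lagrange equation $-(\nabla - i\Eb_\nu)^2 \varphi = \bb(1 - |\varphi|^2)\varphi$ in $U_\ell$ with $\partial_{x_1}\varphi = 0$ on $\{0\}\times K_\ell$ and $\varphi = 0$ on the lateral boundary. Testing against $(|\varphi| - 1)_+ \varphi / |\varphi|$ (or more cleanly, using the Kato inequality $\Delta|\varphi| \ge -\Real\big(\overline{\varphi}/|\varphi|\, (\nabla - i\Eb_\nu)^2\varphi\big)$ on $\{\varphi \ne 0\}$) gives $-\Delta|\varphi| \le \bb(1-|\varphi|^2)|\varphi|$, and a maximum principle on the set $\{|\varphi| > 1\}$ — which has the right boundary conditions since $|\varphi| = 0 < 1$ on the lateral part and the Neumann condition is compatible at $x_1 = 0$ — forces $\{|\varphi| > 1\} = \emptyset$. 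Elliptic regularity up to the boundary is needed to make this rigorous, which is routine for smooth coefficients.

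The main obstacle, and the heart of the proof, is the weighted decay estimate \eqref{eq-min-decay} with the weight $w(x_1) = x_1/(\ln x_1)^2$ on $\{x_1 \ge 4\}$. The idea is an Agmon/Persson-type localization exploiting that far from the boundary $\{x_1 = 0\}$ the effective magnetic potential $\cos\nu\, x_1 + \sin\nu\, x_2$ is large (when $\cos\nu \ne 0$), so the linear operator has large local ground state energy and the quartic term is a negligible perturbation; hence $\varphi$ must be small there. Concretely, I would multiply the Euler-Lagrange equation by $\chi^2 \overline{\varphi}$ where $\chi = \chi(x_1)$ is a Lipschitz weight with $\chi^2 \approx w$ on the relevant region, integrate by parts to get the IMS-type identity
\[
\int_{U_\ell} |(\nabla - i\Eb_\nu)(\chi\varphi)|^2 \, dx - \int_{U_\ell} |\chi'|^2 |\varphi|^2 \, dx = \int_{U_\ell} \chi^2 \bb(1 - |\varphi|^2)|\varphi|^2 \, dx,
\]
then bound the left-hand magnetic energy below using a lower bound on the bottom of the spectrum of $-(\nabla-i\Eb_\nu)^2$ restricted to the half-space region $\{x_1 > t\}$, which grows linearly (after a shift $x_1 \mapsto x_1 - t$, the field is unchanged but the transverse potential is shifted, giving ground-state energy $\gtrsim 1 + c\,t$ type behavior — more precisely one uses that on $\{x_1>t\}$ with Dirichlet-in-$x_1$-at-$t$ the energy exceeds roughly $t^2\cos^2\nu$ away from a boundary layer, and the logarithmic loss $(\ln x_1)^{-2}$ in the weight is exactly what absorbs the error term $\int |\chi'|^2|\varphi|^2$ since $|\chi'|^2 / \chi^2 \sim 1/(x_1 \ln x_1)^2$). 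One then handles the degenerate direction $\nu = \pi/2$ (where $\cos\nu = 0$) separately — there $\bb \le 1 = \zeta(\pi/2)$ puts us in the trivial case $d = 0$ by Remark~\ref{rem-b<theta0}, so $\varphi \equiv 0$ and the estimate is vacuous. Summing the localized estimates over a dyadic (or logarithmically-spaced) partition in $x_1$ and using $\|\varphi\|_{L^\infty} \le 1$ together with the cross-sectional volume $|K_\ell| = 4\ell^2$ to control the remaining terms yields the bound $C\ell^2$; tracking constants to see they are universal (independent of $\bb, \nu, \ell$) is the delicate bookkeeping step. For $\ell$ small (the estimate is claimed for all $\ell > 0$ while existence only for $\ell \ge 1$), note that when $\ell < 1$ one can still define $\varphi_{\bb,\nu;\ell}$ by the same variational problem — the existence argument above did not truly need $\ell \ge 1$ — or simply observe the region $U_\ell \cap \{x_1 \ge 4\}$ still has finite cross-section $4\ell^2$ and the argument goes through verbatim.
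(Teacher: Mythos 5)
Your sketch of existence and the $L^\infty$ bound is reasonable, and the cases $\nu=\pi/2$ and $\nu=0$ are handled correctly. But the mechanism you propose for the weighted estimate \eqref{eq-min-decay} is wrong, and it is the heart of the theorem.

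You claim that the bottom of the spectrum of $-(\nabla-i\Eb_\nu)^2$ on the shifted half-space $\{x_1>t\}$ with Dirichlet condition at $x_1=t$ grows like $t^2\cos^2\nu$. This is true only when $\nu=0$. For $\nu\in(0,\pi/2)$ the effective transverse potential $(\cos\nu\,x_1+\sin\nu\,x_2)^2$ vanishes along the line $x_2=-x_1\cot\nu$, which escapes to infinity inside $\{x_1>t\}$ for every $t$; equivalently, a translation in $x_1$ followed by a gauge transformation (shifting $x_2$) brings the operator on $\{x_1>t\}$ back to the operator on $\{x_1>0\}$, so its ground state energy is the $t$-independent constant $1$ (the Landau level). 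There is no spectral gap opening up as $t\to\infty$, hence no Agmon-type decay in the $x_1$-direction, and your argument cannot produce any weight that grows with $x_1$. The paper's argument uses a completely different mechanism: since $\chi\varphi$ vanishes for $x_1\leq 0$ and on the lateral boundary, it extends by zero to $H^1(\R^3)$, and the full-space Landau-level bound $\int|(\nabla-i\Eb_\nu)\chi\varphi|^2 \geq \int|\chi\varphi|^2 \geq \bb\int|\chi\varphi|^2$ (using $\bb\leq1$) makes the entire quadratic part of the commutator identity nonnegative, leaving $\bb\int\chi^2|\varphi|^4\leq\int|\chi'|^2|\varphi|^2$. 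With the specific choice $\chi(x_1)=x_1^{3/2}/\ln x_1$ one has $|\chi'|^2\sim x_1/(\ln x_1)^2$ while $\chi^2\sim x_1^3/(\ln x_1)^2$, and a Cauchy--Schwarz between these two weights together with the integrability of $1/(x_1(\ln x_1)^2)$ (the source of the logarithm in the statement and the cross-sectional measure $4\ell^2$) closes a self-improving inequality giving $C\ell^2$. The decay is thus driven by the quartic nonlinearity, not by a spectral gap of the linear operator.

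There is also a circularity in your existence argument: you propose to control the tail $x_1>R$ of a minimizing sequence by "the decay estimate one proves simultaneously," but that estimate is derived from the Euler--Lagrange equation and so presupposes a minimizer. The paper avoids this by truncating the domain to $U_{\ell,m}=(0,m)\times K_\ell$, where a minimizer $\varphi_{\ell,m}$ exists trivially and satisfies the Euler--Lagrange equation; the weighted estimate is then proved for $\varphi_{\ell,m}$ uniformly in $m$, which supplies exactly the tail control needed to pass to the limit $m\to\infty$ by a diagonal argument. Your "cutoff and diagonal argument" alternative is essentially this truncation, but you should state it and carry it through rather than leaning on the estimate you have not yet established.
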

\begin{proof}
Let $m>0$ and consider the energy functional,
\begin{equation}\label{eq-ApGL}
\mathcal G_{\ell,m}(u)=\int_{U_{\ell,m}}
\left(|(\nabla-i\Eb_\nu)u|^2-\bb|u|^2+\frac{\bb}2|u|^4\right)\,dx\,,
\end{equation}
where $\Eb_\nu$ is the magnetic potential  in~\eqref{eq-3D-Eb} and,
\[
U_{\ell,m}=(0,m)\times K_\ell\,,\quad K_\ell=(-\ell,\ell)\times(-\ell,\ell)\,.
\]
We define the ground state energy,
\begin{equation}\label{eq-t(l,m)}
t(\ell,m)=\inf_{u\in\mathcal S_{\ell,m}}\mathcal G_{\ell,m}(u)\,,\end{equation}
where
\begin{equation}\label{eq-S(l,m)}
\mathcal  S_{\ell,m}=\{u\in  H^1(U_{\ell,m})~:~u=0\text{ on }(0,m)
\times\partial K_\ell \text{ and }\{m\}\times K_\ell\}\,.
\end{equation}
The proof of Theorem~\ref{thm-min-Rgl} consists of showing that
$t(\ell,m)\to d(\bb,\nu;\ell)$ and that a minimizer $\varphi_{\ell,m}$ of
$\mathcal G_{\ell,m}$ converges to a minimizer of $\mathcal G_{\bb,\nu;\ell}$
as $m\to\infty$. That will be done in several steps.

\medskip
{\it Step~1.}
In this step, we prove that $t(\ell,m)\to d(\bb,\nu;\ell)$ as $m\to\infty$,
where $d(\bb,\nu;\ell)$ is the ground state energy introduced in~\eqref{eq-Rgs}.
Recall the space $\mathcal S_\ell$ introduced in~\eqref{eq-domain-ell}. Since
every $u\in \mathcal S_{\ell,m}$ can be extended by $0$ to a function
$\widetilde u\in\mathcal S_\ell$, we get that, $t(\ell,m)\geq d(\bb,\nu;\ell)$.
So, we need only prove that,
\begin{equation}\label{eq-t-to-d}
\limsup_{m\to\infty} t(\ell,m)\leq d(\bb,\nu;\ell)\,.
\end{equation}
Let $(\varphi_n)\subset\mathcal S_\ell$ be a minimizing sequence of
$\mathcal G_{\bb,\nu;\ell}$, i.e.
\[
d(\bb,\nu;\ell)=\lim_{n\to\infty}\mathcal G_{\bb,\nu;\ell}(\varphi_n)\,,
\]
where $\mathcal G_{\bb,\nu;\ell}(\varphi_n)$ is the functional introduced
in~\eqref{eq-Rgl}.

Let $\eta\in C_c^\infty(\R)$ be a cut-off function such that,
\[
0\leq\eta\leq1\text{ in }\R\,,\quad \eta=1\text{ in }[-1/2,1/2]
\,,\quad\text{supp}\,\eta\subset[-1,1]\,.
\]
Let $\eta_m(x_1)=\eta(x_1/m)$, then
$\eta_m(x_1)\varphi_n(x)\in\mathcal S_{\ell,m}$ and consequently,
\begin{equation}\label{eq-t<d}
t(\ell,m)\leq \mathcal G_{\bb,\nu;\ell}(\eta_m\varphi_n)\,.
\end{equation}
Using the inequality,
\[
|(\nabla-i\Eb_\nu)(\eta_m\varphi_n)|^2
\leq (1+\varepsilon)|\eta_m(\nabla-i\Eb_\nu)\varphi_n|^2
+2\varepsilon^{-1}|\nabla\eta_m|^2|\varphi_n|^2,
\]
valid for all $\varepsilon\in(0,1)$, together with  the fact that
$0\leq\eta_m\leq1$, we get the following estimate,
\begin{equation}\label{eq-t<d'}
t(\ell,m)\leq (1+\varepsilon)\mathcal G_{\bb,\nu;\ell}(\varphi_n)
+\frac{2\varepsilon^{-1}}{m^2}\|\eta'\|^2_{L^\infty(\R)}
\int_{U_\ell}|\varphi_n|^2\,dx+\bb\int_{U_\ell}\left(1-\eta_m^2
+\varepsilon\right)|\varphi_n|^2\,dx\,.
\end{equation}
Taking $\displaystyle\limsup_{m\to\infty}$ on both sides of~\eqref{eq-t<d'},
we get (by using in particular Lebesgue's dominated convergence theorem),
\[
\limsup_{m\to\infty}t(\ell,m)
\leq (1+\varepsilon)\mathcal G_{\bb,\nu;\ell}(\varphi_n)
+\bb\varepsilon\int_{U_\ell}|\varphi_n|^2\,dx\,.
\]
Taking successively $\varepsilon\to0_+$ then $n\to\infty$, we get the estimate
in~\eqref{eq-t-to-d}.

\medskip
{\it Step~2.} Since $U_{\ell,m}$ is a bounded domain, it is obvious that
$\mathcal G_{\ell,m}$ has a minimizer $\varphi_{\ell,m}$, see
e.g.~\cite[Chapter~11]{FH-b}. The function $\varphi_{\ell,m}$ satisfies the
following equation,
\begin{equation}\label{eq-st2-eq}
-(\nabla-i\Eb_\nu)^2\varphi_{\ell,m}
=\bb(1-|\varphi_{\ell,m}|^2)\varphi_{\ell,m}\,,\quad\text{ in }U_{\ell,m}\,.
\end{equation}
A simple application of the maximum principle yields
$|\varphi_{\ell,m}|\leq 1$ everywhere.
We will prove the following estimate,
\begin{equation}\label{eq-step2}
\int_{U_{\ell,m}\cap\{x_1\geq 4\}}\frac{x_1}{(\ln x_1)^2}
\left(|(\nabla-i\Eb_\nu)\varphi_{\ell,m}|^2
+|\varphi_{\ell,m}|^2+x_1^2|\varphi_{\ell,m}|^4\right)\,dx\leq C\ell^2\,,
\end{equation}
valid for all $\ell>0$, where  $C$ is a universal constant.

We obtain the estimate~\eqref{eq-step2} by following a construction similar
to \cite{P}. Consider a function $\chi\in C^\infty(\R)$ such that
$\supp\chi\subset(0,\infty)$.
Using the equation in~\eqref{eq-st2-eq} and an integration by parts, we get,
\begin{equation}\label{eq-step2-loc}
\int_{U_{\ell,m}}\left(|(\nabla-i\Eb_\nu)\chi\varphi_{\ell,m}|^2
-\bb|\chi\varphi_{\ell,m}|^2+\bb\chi^2|\varphi_{\ell,m}|^4\right)\,dx
=\int_{U_{\ell,m}}|\chi'|^2|\varphi_{\ell,m}|^2\,dx\,.
\end{equation}
Since $\chi\varphi_{\ell,m}\in H^1(\R^3)$ and the first eigenvalue of
the Schr\"odinger operator with constant unit magnetic field in $L^2(\R^3)$ is
equal to $1$, we get the lower bound,
\[
\int_{U_{\ell,m} }|(\nabla-i\Eb_\nu)\chi\varphi_{\ell,m}|^2\,dx
\geq \int_{U_{\ell,m}}|\chi\varphi_{\ell,m}|^2\,dx\,.
\]
Inserting this lower bound into~\eqref{eq-step2-loc} and remembering that
$\bb\in[\Theta_0,1]$, we deduce the following estimate,
\begin{equation}\label{eq-step2-L4}
\int_{U_{\ell,m}}\chi^2|\varphi_{\ell,m}|^4\,dx
\leq \frac{1}{\bb}\int_{U_{\ell,m}}|\chi'|^2|\varphi_{\ell,m}|^2\,dx\,.
\end{equation}
We select the function $\chi$ so that,
\begin{equation}\label{eq-zeta-1}
\chi(x_1)=0\text{~if~}x_1\leq0\,,\quad\chi(x_1)
=\frac{x_1^{3/2}}{\ln x_1}\text{~if~}x_1\geq 4\,.
\end{equation}
The function $\chi$ consequently satisfies
\begin{equation}\label{eq-zeta'}
0<\chi'(x_1)<\displaystyle\frac{3\sqrt{x_1}}{2\ln x_1}\quad
\text{for all } x_1\geq 4\,.
\end{equation}
Using the properties in~\eqref{eq-zeta-1} and~\eqref{eq-zeta'} of the
function $\chi$, we infer from~\eqref{eq-step2-L4},
\begin{align}\label{eq-step2-L4'}
\int_{U_{\ell,m}\cap\{x_1\geq4\}}
\frac{x_1^3}{(\ln x_1)^2}|\varphi_{\ell,m}|^4\,dx \nonumber\\
&\hskip-2cm\leq\frac1\bb\left[\frac94\int_{U_{\ell,m}\cap\{x_1\geq4\}}
\frac{x_1}{(\ln x_1)^2}|\varphi_{\ell,m}|^2\,dx+\int_{U_{\ell,m}\cap\{x_1\leq4\}}
|\chi'|^2|\varphi_{\ell,m}|^2\,dx\right]\nonumber\\
&\hskip-2cm\leq C\left(\int_{U_{\ell,m}\cap\{x_1\geq4\}}
\frac{x_1^3}{(\ln x_1)^2}|\varphi_{\ell,m}|^4\,dx\right)^{1/2}\ell+C\ell^2\,,
\end{align}
where $C$ is a universal constant.  We explain how we get the estimate
in~\eqref{eq-step2-L4'}. Actually, using that $|\varphi_{\ell,m}|\leq 1$, we
get,
\[
\int_{U_{\ell,m}\cap\{x_1\leq4\}}
|\chi'|^2|\varphi_{\ell,m}|^2\,dx\leq \|\chi'\|_{L^\infty([0,4])}(16\ell^2)\,.
\]
Also, using a Cauchy-Schwarz inequality, we have,
\begin{multline*}
\int_{U_{\ell,m}\cap\{x_1\geq4\}}\frac{x_1}{(\ln x_1)^2}|\varphi_{\ell,m}|^2\,dx\\
\leq
\left(\int_{U_{\ell,m}\cap\{x_1\geq4\}}\frac{1}{x_1(\ln x_1)^2}\,dx\right)^{1/2}
\left(\int_{U_{\ell,m}\cap\{x_1\geq4\}}
\frac{x_1^3}{(\ln x_1)^2}|\varphi_{\ell,m}|^4\,dx\right)^{1/2}\,,
\end{multline*}
thereby yielding~\eqref{eq-step2-L4'}.

As a consequence of~\eqref{eq-step2-L4'}, we deduce  the following estimate,
\begin{equation}\label{eq-step2-L4''}\
\forall~\ell\geq 1\,,\quad\int_{U_{\ell,m}\cap\{x_1\geq4\}}
\frac{x_1}{(\ln x_1)^2}
\left(|\varphi_{\ell,m}|^2+x_1^2|\varphi_{\ell,m}|^4\right)\,dx
\leq C\ell^2\,,\end{equation}
for a possibly new universal constant $C$. Thus, to finish the proof
of~\eqref{eq-step2}, we only need to prove that
\begin{equation}\label{eq-step2-der}
\int_{U_{\ell,m}\cap\{x_1\geq4\}}
\frac{x_1}{(\ln x_1)^2}|(\nabla-i\Eb_\nu)\varphi_{\ell,m}|^2\,dx\leq C\ell^2\,.
\end{equation}
To get~\eqref{eq-step2-der}, we select $\chi$ so that
$\chi(x_1)=\sqrt{x_1}/\ln x_1$ if $x_1\geq 4$, and $\chi(x_1)=0$ if $x_1\leq 1$.
Using the estimate,
\[
\int_{U_{\ell,m}\cap\{x_1\geq4\}}\chi^2|
(\nabla-i\Eb_\nu)\varphi_{\ell,m}|^2\,dx
\leq
2\int_{U_{\ell,m}}\left(|(\nabla-i\Eb_\nu)\chi\varphi_{\ell,m}|^2
+|\chi'|^2|\varphi_{\ell,m}|^2\right)\,dx\,,
\]
together with the estimates in~\eqref{eq-step2-loc},~\eqref{eq-step2-L4''} and
the properties of $\chi$, we get easily the estimate in~\eqref{eq-step2-der}.

\medskip{\it Step~3.} In this step, we prove that there exists a minimizer
$\varphi_\ell$ of $\mathcal G_{\bb,\nu;\ell}$ satisfying the estimate
in~\eqref{eq-min-decay}.

Starting from the bound $|\varphi_{\ell,m}|\leq 1$ and the
equation~\eqref{eq-st2-eq}, we get by a compactness and a diagonal sequence
argument that there exists a function $\varphi_\ell\in C^1(U_\ell)$ such that,
as $m\to\infty$,
\[
\varphi_{\ell,m}\to\varphi_\ell\quad\text{ in }\quad C^1(K)\,,
\]
for any compact set $K\subset \overline{U_\ell}$. Furthermore, $\varphi_\ell$
satisfies the equation,
\begin{equation}\label{eq-st3-eq}
-(\nabla-i\Eb_\nu)^2\varphi_{\ell}=\bb(1-|\varphi_{\ell}|^2)\varphi_{\ell}\,,
\quad\text{in }U_{\ell}\,,
\end{equation}
together with the boundary conditions,
\[
\frac{\partial\varphi_\ell}{\partial x_1}=0\text{ on }\{0\}
\times K_\ell\,,\quad \varphi_\ell=0\text{ on }(0,\infty)\times\partial K_\ell \,.
\]
Using the estimate in~\eqref{eq-step2} together with the bound
$|\varphi_{\ell,m}|\leq 1$, we get,
\[
\int_{U_{\ell,m}}\left(|(\nabla-i\Eb_\nu)\varphi_{\ell,m}|^2
+|\varphi_{\ell,m}|^2\right)\,dx\leq C\ell^2\,.
\]
Passing to a subsequence, we conclude that, as
$m\to\infty$, $\varphi_{\ell,m}\rightharpoonup\varphi_\ell$ and
$(\nabla-i\Eb_\nu)\varphi_{\ell,m}\rightharpoonup (\nabla-i\Eb_\nu)\varphi_\ell$
weakly  in  $L^2(U_\ell)$. Consequently, we get that $\varphi_\ell$ is in the
space $\mathcal S_\ell$,
\[
\int_{U_{\ell,m}}|\varphi_{\ell,m}|^4\,dx
\to \int_{U_\ell}|\varphi_\ell|^4\,dx\quad\text{ as }m\to\infty\,,
\]
and $\varphi_\ell$ satisfies the estimate in~\eqref{eq-min-decay}.

Using the equation~\eqref{eq-st2-eq} and an integration by parts, it is easy
to check that,
\[
t(\ell,m)=-\frac{\bb}2\int_{U_{\ell,m}}|\varphi_{\ell,m}|^4\,dx\,.
\]
Letting $m\to\infty$ and using Step~1 above, we get,
\[
d(\bb,\nu;\ell)=-\frac{\bb}2 \int_{U_\ell}|\varphi_\ell|^4\,dx\,.
\]
But, thanks to~\eqref{eq-st3-eq}, an integration by parts yields,
\[
\mathcal G_{\bb,\nu;\ell}(\varphi_\ell)
=-\frac{\bb}2 \int_{U_\ell}|\varphi_\ell|^4\,dx\,,
\]
thereby obtaining that $\varphi_\ell$ is a minimizer of
$\mathcal G_{\bb,\nu;\ell}$.
\end{proof}

\subsubsection{A rough estimate of the ground state energy}

\begin{lem}\label{lem-Rgs-B}
There exists a universal constant $C_1>0$ such that, for all
$\bb\in[\Theta_0,1]$, $\nu\in[0,\pi/2]$ and $\ell>0$, we have,
\[
d(\bb,\nu;\ell)\geq C_1\min(\zeta(\nu)-\bb,0)\ell^2\,.
\]
Furthermore, if
$\nu\in[0,\pi/2)$ satisfies $\zeta(\nu)<\bb$, there exist positive constants
$C_2$ and $\ell_0$ such that, if $\ell\geq\ell_0$, then,
\[
d(\bb,\nu;\ell)\leq -C_2\ell^2\]
\end{lem}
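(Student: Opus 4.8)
\emph{Lower bound.} For $\ell\leq\pi/2$ one checks directly that $d(\bb,\nu;\ell)=0$: since $|(\nabla-i\Eb_\nu)u|^{2}\geq|\partial_{x_{2}}u|^{2}$ and $u$ vanishes on $(0,\infty)\times\partial K_{\ell}$, the one–dimensional Dirichlet inequality on $(-\ell,\ell)$ gives $\int_{U_{\ell}}|(\nabla-i\Eb_\nu)u|^{2}\,dx\geq\frac{\pi^{2}}{4\ell^{2}}\int_{U_{\ell}}|u|^{2}\,dx\geq\bb\int_{U_{\ell}}|u|^{2}\,dx$, so $\mathcal G_{\bb,\nu;\ell}(u)\geq0$, while $\min(\zeta(\nu)-\bb,0)\ell^{2}\leq0$. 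For $\ell>\pi/2$ (hence $\ell\geq1$, so Theorem~\ref{thm-min-Rgl} applies) the starting point is the form inequality
\[
\int_{U_{\ell}}|(\nabla-i\Eb_\nu)u|^{2}\,dx\geq\zeta(\nu)\int_{U_{\ell}}|u|^{2}\,dx,\qquad u\in\mathcal S_{\ell},
\]
obtained by extending $u$ by zero across the lateral boundary to an element of the form domain of $\mathcal L(\nu)$ on $\R^3_+$ and using $\zeta(\nu)=\inf\sigma(\mathcal L(\nu))$ from~\eqref{eq-z(nu)}. Hence $\mathcal G_{\bb,\nu;\ell}(u)\geq(\zeta(\nu)-\bb)\int_{U_{\ell}}|u|^{2}\,dx$. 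If $\zeta(\nu)\geq\bb$ this is nonnegative and $d(\bb,\nu;\ell)=0$ by Lemma~\ref{lem-p-d(l)}. If $\zeta(\nu)<\bb$, apply it to the minimizer $\varphi_{\bb,\nu;\ell}$ and bound $\int_{U_{\ell}}|\varphi_{\bb,\nu;\ell}|^{2}\,dx\leq C\ell^{2}$ with $C$ universal, by combining $\|\varphi_{\bb,\nu;\ell}\|_{\infty}\leq1$ on $\{0<x_{1}<4\}$ with the weighted estimate~\eqref{eq-min-decay} on $\{x_{1}\geq4\}$ (where $x_{1}/(\ln x_{1})^{2}\geq e^{2}/4>1$). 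This yields $d(\bb,\nu;\ell)\geq(\zeta(\nu)-\bb)C\ell^{2}=C\min(\zeta(\nu)-\bb,0)\ell^{2}$, i.e.\ the first assertion with $C_{1}=C$.

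\emph{Upper bound, $\nu\in(0,\pi/2)$: construction.} I build a trial function in $\mathcal S_{\ell}$ from translated copies of the ground state and optimize an overall amplitude $t$. By Remark~\ref{rem-2D-sp} the operator $L(\nu)$ has a positive normalized ground state $\phi$ with eigenvalue $\zeta(\nu)$ and exponential decay~\eqref{eq:phidecay}. Since $\phi$ is localized in the $x_{2}$–direction, an ordinary translate $\phi(\cdot-be_{2})$ is useless (its $L(\nu)$–energy grows like $b^{2}$); the right way to spread mass along the boundary is via the magnetic translates $v_{b}(x):=\phi(x_{1},x_{2}-b)\,e^{ib\sin\nu\,x_{3}}$, which satisfy $\mathcal L(\nu)v_{b}=\zeta(\nu)v_{b}$ and are localized near $\{x_{1}=0,\ x_{2}=b\}$; because $\curl\Eb_\nu$ is constant, these $v_{b}$ are genuine magnetic translates of $v_{0}=\phi$. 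Fix a large constant $R$, tile $K_{\ell}$ by sub-squares of side $R$ with centers $(b_{2}^{(j)},b_{3}^{(k)})$ at distance $\geq R/2$ from $\partial K_{\ell}$, and set
\[
u=t\sum_{j,k}\chi(x_{2}-b_{2}^{(j)})\,\tchi(x_{3}-b_{3}^{(k)})\,v_{b_{2}^{(j)}}(x),
\]
where $\chi,\tchi$ are cut-offs supported in $[-R/2,R/2]$ and equal to $1$ on $[-R/4,R/4]$. The building blocks have pairwise disjoint supports, so $\mathcal G_{\bb,\nu;\ell}(u)$ is \emph{exactly} the sum of the block energies, and by translation invariance of $\mathcal L(\nu)$ in $x_{3}$ together with its magnetic-translation invariance in $x_{2}$ every block has the same energy as the one centered at the origin.

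\emph{Upper bound, $\nu\in(0,\pi/2)$: conclusion.} A direct computation of the origin block $\chi(x_{2})\tchi(x_{3})\phi(x_{1},x_{2})$ (the cross terms containing $i$ vanish because $\phi$ is real, and the truncations of $\phi$ are controlled by~\eqref{eq:phidecay}) gives the upper bound
\[
t^{2}\|\tchi\|_{2}^{2}\bigl(\zeta(\nu)-\bb+\varepsilon(R)\bigr)+\tfrac{\bb}{2}\,t^{4}\|\tchi\|_{4}^{4}\int_{\R^2_+}\phi^{4}\,dx
\]
for the block energy, where $\varepsilon(R)\to0$ as $R\to\infty$. Choosing $R$ so large that $\varepsilon(R)\leq\tfrac12(\bb-\zeta(\nu))$ and optimizing in $t$, and using $\|\tchi\|_{2}^{2}\asymp\|\tchi\|_{4}^{4}\asymp R$, each block contributes at most $-c_{2}R$ for a constant $c_{2}=c_{2}(\bb,\nu)>0$ independent of $R$. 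Since the number of blocks is $\geq(\ell/R)^{2}$ once $\ell\geq\ell_{0}$ (a suitable multiple of $R$), we obtain $d(\bb,\nu;\ell)\leq\mathcal G_{\bb,\nu;\ell}(u)\leq-(c_{2}/R)\ell^{2}$, which is the second assertion.

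\emph{The case $\nu=0$, and the main difficulty.} When $\nu=0$ one has $\zeta(0)=\Theta_{0}$, so the hypothesis forces $\bb>\Theta_{0}$; here the argument is simpler because the ground state $\varphi_{0}(x_{1})e^{i\xi_{0}x_{3}}$ of $\mathcal L(0)$ from~\eqref{eq-1D-phi0} is already constant in $x_{2}$, so the single test function $t\,\varphi_{0}(x_{1})e^{i\xi_{0}x_{3}}\eta_{2}(x_{2})\eta_{3}(x_{3})$ with $\eta_{2},\eta_{3}$ cut-offs on $(-\ell,\ell)$ already spreads mass over $K_{\ell}$, the truncation costs are $O(\ell)$ against a main term of order $\ell^{2}$, and optimizing in $t$ gives $\leq-C_{2}\ell^{2}$. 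The genuinely delicate step is the case $\nu\in(0,\pi/2)$ of the upper bound: one must find the right family of quasimodes — the magnetic translates $v_{b}$, whose phase $e^{ib\sin\nu\,x_{3}}$ exactly cancels the potential growth that would otherwise push $\phi(\cdot-be_{2})$ far from a ground state — and control the cost $\varepsilon(R)$ of truncating each copy, which is precisely where~\eqref{eq:phidecay} is used. The remaining points (exact additivity of the energy from disjoint supports, and $\|\tchi\|_{2}^{2}\asymp\|\tchi\|_{4}^{4}\asymp R$ so that the per-block energy is $\asymp-R$ and the total $\asymp-\ell^{2}$) are routine bookkeeping.
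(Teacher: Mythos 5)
Your proof is correct, and the lower bound follows the paper's argument, while the upper bound takes a genuinely different route worth commenting on. In the lower bound you proceed exactly as the paper does — extend the competitor by zero to get a test function for $\mathcal L(\nu)$ on $\R^3_+$, invoke $\zeta(\nu)=\inf\sigma(\mathcal L(\nu))$, and bound $\int_{U_\ell}|\varphi_{\bb,\nu;\ell}|^2$ by $C\ell^2$ using the $L^\infty$ bound near $x_1=0$ and the weighted decay estimate~\eqref{eq-min-decay} for $x_1\geq 4$. Your preliminary case $\ell\leq\pi/2$ via the Dirichlet--Poincar\'e inequality on $(-\ell,\ell)$ is a small improvement: Theorem~\ref{thm-min-Rgl} guarantees a minimizer only for $\ell\geq 1$, and the paper does not explicitly address $\ell<1$ even though the lemma asserts the inequality for all $\ell>0$, so your case split closes a minor gap.

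For the upper bound with $\nu\in(0,\pi/2)$, your construction differs from the paper's in a meaningful way. The paper uses a single large cutoff $\eta_\ell$ on scale $\ell$ multiplied by a sum of (not compactly supported) magnetic translates $e^{-ic_j\sin\nu\,x_3}\phi(\cdot,\cdot-c_j)$ with spacing $M$, which produces overlapping supports; the substantial part of the paper's proof is the careful bound on the resulting cross terms using the exponential decay~\eqref{eq:phidecay}, summed over all pairs $j\neq k$. You instead tile $K_\ell$ by squares of side $R$, put a compactly supported cutoff on each square, multiply it by the corresponding magnetic translate $v_b=\phi(\cdot,\cdot-b)e^{ib\sin\nu\,x_3}$, and exploit the exact additivity of the energy over disjoint supports together with the magnetic translation invariance (correctly identified as the combination of ordinary $x_3$--translation and gauge transformation in $x_2$). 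This removes the cross-term bookkeeping entirely; the price is a per-block truncation error $\varepsilon(R)\to 0$, which is also controlled by~\eqref{eq:phidecay} but in a single integral rather than a double sum. Both approaches yield the same order $-C_2\ell^2$ after optimizing the amplitude $t$ and choosing $R$ (respectively $M$) large enough. Your variant is arguably cleaner.

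One further point in your favor: for $\nu=0$ you include the phase $e^{i\xi_0 x_3}$ in the trial function $t\,\varphi_0(x_1)e^{i\xi_0 x_3}\eta_2(x_2)\eta_3(x_3)$. This phase is in fact needed: without it the computation produces $\int|\varphi_0'|^2+x_1^2|\varphi_0|^2$, and the Feynman--Hellmann identity $\int(t-\xi_0)|\varphi_0|^2\,dt=0$ together with $\xi_0^2=\Theta_0$ gives this integral the value $2\Theta_0$, not $\Theta_0$, so the main term would not be negative. The paper's displayed trial function $\eta_\ell(x_2,x_3)\varphi_0(x_1)$ omits this phase, so your version corrects a small inaccuracy there.
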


\begin{rem}\label{lem-Rgs-sup}
It results from Lemma~\ref{lem-Rgs-B} that
\[
-\infty<\inf_{\ell\geq\ell_0}\frac{d(\bb,\nu;\ell)}{\ell^2}
\leq \sup_{\ell\geq\ell_0}\frac{d(\bb,\nu;\ell)}{\ell^2}<0\,,
\]
provided that $\bb$ and $\nu$ satisfy the conditions in Lemma~\ref{lem-Rgs-B}.
\end{rem}

\begin{proof}[Proof of Lemma~\ref{lem-Rgs-B}]\

{\bf Lower bound:} Let $u_\bb\in \mathcal S_\ell$ be a minimizer of
$\mathcal G_{\bb,\nu;\ell}$. Since $u_\bb=0$ on
$(0,\infty)\times \partial K_\ell$, we can extend $u_\bb$ to a function
$\widetilde u_\bb$ by letting $\widetilde u_\bb=0$ in
$(0,\infty)\times(\R^2\setminus K_\ell)$. The function $\widetilde u_\bb$ is in
the form domain of the operator $\mathcal L(\nu)$ introduced
in~\eqref{eq-3D-op-2} and,
\begin{align*}
d(\bb,\nu;\ell)&=\mathcal G_{\bb,\nu;\ell}(u_\bb)\\
&=\int_{\R_+^3}\left(|\nabla-i\Eb_\nu)\widetilde u_\bb|^2
-\bb|\widetilde u_\bb|^2+\frac{\bb}{2}|\widetilde
u_\bb|^4\right)\,dx\,.
\end{align*}
Using the variational min-max principle, we get,
\begin{align*}
d(\bb,\nu;\ell)\geq  (\zeta(\nu)-\bb)\int_{\R^3}|\widetilde u_\bb|^2\,dx
=(\zeta(\nu)-\bb)\int_{U_\ell}|u_\bb|^2\,dx\,.
\end{align*}
By Theorem~\ref{thm-min-Rgl}, we have
$\displaystyle\int_{U_\ell}|u_\bb|^2\,dx\leq C_1\ell^2$, where $C_1$ is a
universal constant. If $\zeta(\nu)-\bb<0$, then
\[
d(\bb,\nu;\ell)\geq C_1(\zeta(\nu)-\bb)\ell^2\,.
\]
If $\zeta(\nu)-\bb\geq0$, then Lemma~\ref{lem-p-d(l)} tells us that
$d(\bb,\nu;\ell)=0$, thereby obtaining the lower bound announced in
Lemma~\ref{lem-Rgs-B}.

{\bf Upper bound:} We construct a test-configuration $f_\ell$ as follows.
Suppose $\nu\in(0,\pi/2)$. Let $\chi\in C_c^\infty(\R)$ be a cut-off function
such that $0\leq \chi\leq 1$, $\chi=1$ in $[-1/2,1/2]$ and $\chi=0$ in
$\R\setminus (-1,1)$. For all $v=(s,t)\in\R^2$, we  set
\begin{equation}\label{eq-eta}
\eta_\ell(v)=\chi_\ell(s)\chi_\ell(t)\\,\quad \chi_\ell(x)
=\chi\left(\frac{x}\ell\right)\quad (x\in\R)\,.\end{equation}
Let $\phi$ be the eigenfunction introduced in Remark~\ref{rem-2D-sp}, and let
$M>0$ be a given real number. For $j\in\mathbb{Z}$ we set
$c_j=Mj$. The number of such $c_j$ that belong to the interval $(-\ell,\ell)$
is given by $N(\ell,M)=2\{\ell/M\}+1$, where $\{r\}$ denotes the largest integer
strictly less than $r\in\R$. For $x=(x_1,x_2,x_3)\in\R^3$, we  define our trial
state $f_\ell(x)$ to be
\[
f_\ell (x)=\eta_\ell(x_2,x_3) \sum_{j=-\{\ell/M\}}^{\{\ell/M\}}
\exp\bigl(-i c_j \sin(\nu) x_3\bigr)\phi(x_1,x_2-c_j)\,.
\]
We compute $\mathcal G_{\bb,\nu;\ell}(t f_\ell)$, where $t>0$ is a constant
whose choice will be specified later. Actually,
\begin{multline}\label{eq:Gbig}
\mathcal G_{\bb,\nu;\ell}(tf_\ell)=
t^2\int_{U_{\ell}} \Bigl|\eta_\ell(x_2,x_3)\sum_{j=-\{\ell/M\}}^{\{\ell/M\}}
\exp\bigl(-i c_j \sin(\nu) x_3\bigr)(\partial_{x_1}\phi)(x_1,x_2-c_j)\Bigr|^2\\
+
\Bigl|\eta_\ell(x_2,x_3) \sum_{j=-\{\ell/M\}}^{\{\ell/M\}}
\exp\bigl(-i c_j \sin(\nu) x_3\bigr)(\partial_{x_2}\phi)(x_1,x_2-c_j)\\
+
\partial_{x_2}(\eta_\ell)(x_2,x_3) \sum_{j=-\{\ell/M\}}^{\{\ell/M\}}
\exp\bigl(-i c_j \sin(\nu) x_3\bigr)\phi(x_1,x_2-c_j)\Bigr|^2\\
+
\Bigl|\eta_\ell(x_2,x_3) \sum_{j=-\{\ell/M\}}^{\{\ell/M\}}
\exp\bigl(-i c_j \sin(\nu) x_3\bigr)\bigl(\cos(\nu)x_1+\sin(\nu)(x_2-c_j)\bigr)
\phi(x_1,x_2-c_j)\\
+\partial_{x_3}\bigl(\eta_\ell(x_2,x_3)\bigr) \sum_{j=-\{\ell/M\}}^{\{\ell/M\}}
\exp\bigl(-i c_j \sin(\nu) x_3\bigr)\phi(x_1,x_2-c_j)\Bigr|^2\\
-\bb\Bigl|\eta_\ell(x_2,x_3) \sum_{j=-\{\ell/M\}}^{\{\ell/M\}}
\exp\bigl(-i c_j \sin(\nu) x_3\bigr)\phi(x_1,x_2-c_j)\Bigr|^2\\
+\frac{\bb}{2}t^2\Bigl|\eta_\ell(x_2,x_3) \sum_{j=-\{\ell/M\}}^{\{\ell/M\}}
\exp\bigl(-i c_j \sin(\nu) x_3\bigr)\phi(x_1,x_2-c_j)\Bigr|^4
\,dx\,.
\end{multline}
We define $\mathcal G_{\text{main}}(tf_\ell)$ and
$\mathcal G_{\text{rest}}(tf_\ell)$ as follows,
\begin{multline*}
\mathcal G_{\text{main}}(tf_\ell)=
t^2\int_{U_{\ell}} \bigl|\eta_\ell(x_2,x_3)\bigr|^2
\sum_{j=-\{\ell/M\}}^{\{\ell/M\}}
\Bigl[\bigl|(\partial_{x_1}\phi)(x_1,x_2-c_j)\bigr|^2
+\bigl|(\partial_{x_2}\phi)(x_1,x_2-c_j)\bigr|^2\\
+\bigl|\bigl(\cos(\nu)x_1+\sin(\nu)(x_2-c_j)\bigr)\phi(x_1,x_2-c_j)\bigr|^2\\
-\bb|\phi(x_1,x_2-c_j)|^2+\frac{\bb}{2}t^2 |\phi(x_1,x_2-c_j)|^4\Bigr]\,dx
\end{multline*}
and
\begin{equation*}
\mathcal{G}_{\text{rest}}(tf_\ell)
= \mathcal G_{\bb,\nu;\ell}(tf_\ell)-\mathcal G_{\text{main}}(tf_\ell).
\end{equation*}
We will show that $\mathcal{G}_{\text{rest}}(tf_\ell)$ is small compared to
$\mathcal{G}_{\text{main}}(tf_\ell)$. We start by obtaining an upper bound on
$\mathcal{G}_{\text{main}}(tf_\ell)$. Using that
$\eta_\ell(x_2,x_3)=\chi_\ell(x_2)\chi_\ell(x_3)$ and
$0\leq \chi_\ell(x_2)\leq1$ we get for a fixed integer $j$
\begin{multline*}
t^2\int_{U_{\ell}} \bigl|\eta_\ell(x_2,x_3)\bigr|^2
\Bigl[\bigl|(\partial_{x_1}\phi)(x_1,x_2-c_j)\bigr|^2
+\bigl|(\partial_{x_2}\phi)(x_1,x_2-c_j)\bigr|^2\\
+\bigl|\bigl(\cos(\nu)x_1+\sin(\nu)(x_2-c_j)\bigr)\phi(x_1,x_2-c_j)\bigr|^2
+\frac{\bb}{2}t^2 |\phi(x_1,x_2-c_j)|^4\Bigr]\,dx\\
\leq
t^2 \int_{\mathbb{R}}|\chi_\ell(x_3)|^2\,dx_3
\int_{\mathbb{R}^2_+} \bigl|(\partial_{x_1}\phi)(x_1,x_2-c_j)\bigr|^2
+\bigl|(\partial_{x_2}\phi)(x_1,x_2-c_j)\bigr|^2\\
+\bigl|\bigl(\cos(\nu)x_1+\sin(\nu)(x_2-c_j)\bigr)\phi(x_1,x_2-c_j)\bigr|^2
+\frac{\bb}{2}t^2 |\phi(x_1,x_2-c_j)|^4 \,dx_1dx_2\\
=\ell\lambda t^2 \Bigl[\zeta(\nu)
+\frac{\bb}{2}t^2 \int_{\mathbb{R}^2_+}|\phi(x_1,x_2)|^4 \,dx_1dx_2\Bigr]
\end{multline*}
where $\lambda=\displaystyle\int_\R|\chi(z)|^2\,dz$. For the negative term, we
have
\begin{align*}
-\bb t^2 \int_{U_\ell} \bigl|\eta_\ell(x_2,x_3)\phi(x_1,x_2-c_j)|^2 &\, dx
= -\bb t^2 \int_{\mathbb{R}^3_+}
\bigl|\eta_\ell(x_2,x_3)\phi(x_1,x_2-c_j)|^2 \, dx\\
&= -\bb t^2 \int_{\mathbb{R}} |\chi_\ell(x_3)|^2\,dx_3
\int_{\mathbb{R}^2_+} \bigl|\chi_\ell(x_2)\phi(x_1,x_2-c_j)\bigr|^2 \, dx_1dx_2\\
&= -\bb \ell\lambda t^2 +{\bb}\ell\lambda t^2
\int_{\mathbb{R}^2_+}
(1-\bigl|\chi_\ell(x_2)\bigl|^2)|\phi(x_1,x_2-c_j)|^2 \, dx_1dx_2
\end{align*}
Using the exponential decay of $\phi$ we find that for all $s>0$
\[
-\bb t^2 \int_{U_\ell} \bigl|\eta_\ell(x_2,x_3)\phi(x_1,x_2-c_j)|^2 \, dx
= -\bb\ell\lambda t^2 + o(1/\ell^s)
\]
as $\ell\to\infty$. We conclude that
\[
\mathcal G_{\text{main}}(tf_\ell)
\leq N(\ell,M)\ell\lambda t^2\Bigl[\zeta(\nu)-\bb
+\frac{\bb}{2}t^2 \int_{\mathbb{R}^2_+}|\phi(x_1,x_2)|^4 \,dx_1dx_2\Bigr]
+N(\ell,M)o(1/\ell^s)
\]
as $\ell\to\infty$. We select $t$ sufficiently small so that,
\[
\zeta(\nu)-\bb+\frac{\bb t^2}2\int_{\R^2_+}|\phi|^4\,dx_1dx_2
\leq \frac{\zeta(\nu)-\bb}2,
\]
to get
\begin{equation}\label{eq:Gmain}
\mathcal G_{\text{main}}(tf_\ell)
\leq \frac{1}{2}N(\ell,M)\ell\lambda t^2(\zeta(\nu)-\bb)
+N(\ell,M)o(1/\ell^s)
\leq \frac{1}{2M}\ell^2\lambda t^2(\zeta(\nu)-\bb)
\end{equation}
as $\ell\to\infty$.

All terms in $\mathcal{G}_{\text{rest}}(tf_\ell)$ can be taken care of in the
same way, using~\eqref{eq:phidecay} (except for the term with
$\partial_{x_2}\eta_\ell(x_2,x_3)$ which is exponentially small due to the
decay of $\phi$). For this reason we only show how to handle the terms in
$\mathcal{G}_{\text{rest}}(tf_\ell)$ that comes from the first absolute value
in~\eqref{eq:Gbig}. We meet terms like
\begin{equation}\label{eq:typical}
t^2\int_{U_\ell}|\eta_\ell(x_2,x_3)|^2 e^{-ic_j
\sin(\nu)x_3}(\partial_{x_1}\phi)(x_1,x_2-c_j)
e^{i c_k\sin(\nu)x_3}(\partial_{x_1}\phi)(x_1,x_2-c_k)\,dx,
\end{equation}
where $j\neq k$ are such that $c_j$ and $c_k$ both belong to $(-\ell,\ell)$. We
assume that $j<k$ and estimate the absolute value of this integral by
\[
t^2\lambda\ell \int_{\R^2_+}
 |(\partial_{x_1})\phi(x_1,x_2-c_j)(\partial_{x_1})\phi(x_1,x_2-c_k)|\,dx_1dx_2.
\]
Next we decompose the integral into two terms, one where $x_2<(c_j+c_k)/2$ and
another one where $x_2>(c_j+c_k)/2$. Using~\eqref{eq:phidecay} we get
\begin{multline*}
\int_{\substack{(x_1,x_2)\in\R^2_+\\ x_2 < (c_j+c_k)/2}}
|(\partial_{x_1})\phi(x_1,x_2-c_j)(\partial_{x_1})\phi(x_1,x_2-c_k)|\,dx_1dx_2\\
=
\int_{\substack{(x_1,x_2)\in\R^2_+\\
x_2 < (c_j+c_k)/2}} e^{-\alpha/2\sqrt{x_1^2+(x_2-c_k)^2}}
e^{\alpha/2\sqrt{x_1^2+(x_2-c_k)^2}}|(\partial_{x_1})
\phi(x_1,x_2-c_j)(\partial_{x_1})\phi(x_1,x_2-c_k)|\,dx_1dx_2\\
\leq e^{-\alpha M (k-j)/4} \|\partial_{x_1}\phi\|_{L^2(\R^2_+)}
\bigl\|e^{\alpha/2\sqrt{x_1^2+(x_2)^2}}\partial_{x_1}\phi\bigr\|_{L^2(\R^2_+)}\\
\leq \sqrt{\zeta(\nu)}\sqrt{C_{\nu,\alpha}} e^{-\alpha M (k-j)/4}.
\end{multline*}
Here we have used the $L^\infty$ bound on the first exponential, and then the
Cauchy-Schwarz inequality together with the exponential decay on $\phi$
from~\eqref{eq:phidecay}. The same bound is true for the integral where
$x_2>(c_j+c_k)/2$. Looking at the case $j>k$ we get the same bound, but with
$j-k$ instead of $k-j$ in the exponential. We thus find
that~\eqref{eq:typical} in absolute value is bounded by
\begin{equation*}
2t^2\lambda \ell
\sqrt{\zeta(\nu)}\sqrt{C_{\nu,\alpha}}e^{-\frac{\alpha M}{4} |k-j|}.
\end{equation*}
We want to sum over all integers $j\neq k$ running from $-\{\ell/M\}$ to
$\{\ell/M\}$. For a real number $a<1$, the sum
\[
2\sum_{j=1}^{n-1} (n-j) a^j = \frac{2a}{(1-a)^2}\bigl(a^{n}-1+n(1-a)\bigr)
\leq \frac{2n a}{(1-a)^2}.
\]
Our sum transforms into this sum with $a=e^{-\alpha M/4}$ and
$n=\{\ell/M\}\leq \ell/M$ we find that the error coming from the first absolute
value in~\eqref{eq:Gbig} is bounded in absolute value by
\begin{equation*}
4t^2\lambda \ell^2 \sqrt{\zeta(\nu)}\sqrt{C_{\nu,\alpha}}
\frac{e^{-\alpha M/4}}{M(1-e^{-\alpha M/4})^2}.
\end{equation*}
Doing the same estimates for the other terms, we find that there exists a
$\beta>0$ and a constant $C>0$ (not depending on $M$)
such that
\begin{equation}\label{eq:Grest}
\mathcal{G}_{\text{rest}}(tf_\ell)\leq C t^2\lambda e^{-\beta M}\ell^2\,.
\end{equation}
Combining~\eqref{eq:Gmain} and~\eqref{eq:Grest} we find that
\[
\mathcal G_{\bb,\nu;\ell}(tf_\ell)
\leq \Bigl[\frac{1}{2M}(\zeta(\nu)-\bb) + C e^{-\beta M}\Bigr]\lambda t^2 \ell^2.
\]
Choosing $M$ large enough we find that there exist $\ell_0$ and $C_2>0$ such that
if $\ell\geq \ell_0$ then $\mathcal G_{\bb,\nu;\ell}(tf_\ell)\leq -C_2\ell^2$.
This finishes the proof for $\nu\in(0,\pi/2)$.

Suppose that $\nu=0$. In this specific case, we select the trial function
$f_\ell$ as follows,
\[
f_\ell(x)=\eta_\ell(x_2,x_3)\varphi_0(x_1)\,,
\]
where $\varphi_0$ is the ground state introduced in~\eqref{eq-1D-phi0}, and
$\eta_\ell$ is the cut-off function introduced  in~\eqref{eq-eta}. Consider
$t>0$. An easy computation yields,
\[
\mathcal G_{\bb,0;\ell}(tf_\ell)\leq \lambda^2t^2\ell^2
\left(\Theta_0-\bb+\frac{\bb t^2}{2}\int_{\R_+}|\varphi_0(x_1)|^4\,dx_1\right)
+Ct^2\,,
\]
for some constant $C$. Selecting $t$ sufficiently small so that
$\Theta_0-\bb\displaystyle
+\frac{\bb t^2}{2}\int_{\R_+}|\varphi_0(x_1)|^4\,dx_1<\frac{\Theta_0-\bb}2$,
we get the upper bound that we want to prove.
\end{proof}

\subsubsection{The thermodynamic limit}\label{sec-E(nu)}

The aim of this section is to prove the following theorem.

\begin{thm}\label{thm-thdL}\
\begin{enumerate}
\item
Suppose $\bb\in[\Theta_0,1]$ and $\nu\in[0,\pi/2]$. There exists a
constant $E(\bb,\nu)$ such that
\[
\liminf_{\ell\to\infty}\frac{d(\bb,\nu;\ell)}{4\ell^2}
=\limsup_{\ell\to\infty}\frac{d(\bb,\nu;\ell)}{4\ell^2}=E(\bb,\nu)\,.
\]
Furthermore, $E(\bb,\nu)<0$ if $\bb>\zeta(\nu)$ and $E(\bb,\nu)=0$ otherwise.
\item There exist positive universal  constants $\ell_0$ and $C$ such that,
\begin{equation}\label{eq-est-tdl}
E(\bb,\nu)\leq \frac{d(\bb,\nu;\ell)}{4\ell^2}
\leq  E(\bb,\nu)+\frac{C}{\ell^{2/3}}\,,\quad\forall~\nu\in[0,\pi/2]\,,
\quad\forall~\ell\geq\ell_0\,.
\end{equation}
\end{enumerate}
\end{thm}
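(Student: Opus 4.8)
\emph{Overall strategy.} The plan is to run a thermodynamic/Fekete-type argument for the normalised ground state energies $a_\ell:=d(\bb,\nu;\ell)/(4\ell^2)$ (recall \eqref{eq-Rgs} and that $|K_\ell|=4\ell^2$), first getting bare convergence together with the lower bound $a_\ell\ge E(\bb,\nu)$, and then upgrading to the rate $\ell^{-2/3}$ by averaging the tiling over all translations.

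\emph{Existence of the limit and the lower bound.} First, $d(\bb,\nu;\ell)\le0$ (test with $u\equiv0$), and for $L\ge\ell$ one has $d(\bb,\nu;L)\le\lfloor L/\ell\rfloor^2\,d(\bb,\nu;\ell)$. Indeed, tile $K_L$ by $\lfloor L/\ell\rfloor^2$ pairwise disjoint translates of $K_\ell$ and, in each of them, place a copy of a minimizer $\varphi_{\bb,\nu;\ell}$ of $\mathcal G_{\bb,\nu;\ell}$ (Theorem~\ref{thm-min-Rgl}), translated in the $(x_2,x_3)$ variables and multiplied by a gauge factor $e^{-ic\sin\nu\,x_3}$ to compensate the constant shift that this translation produces in $\Eb_\nu$; since $\Eb_\nu$ is independent of $x_3$ this gives a function in $\mathcal S_L$ with the same energy on each tile, while the unused part of $U_L$ is filled with $0$. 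As $\varphi_{\bb,\nu;\ell}$ vanishes on $(0,\infty)\times\partial K_\ell$ the copies have disjoint supports and the energies add. Dividing by $4L^2$ and letting $L\to\infty$ with $\ell$ fixed yields $\limsup_{L\to\infty}a_L\le a_\ell$ for every $\ell\ge1$; hence $\limsup_{L\to\infty}a_L\le\liminf_{\ell\to\infty}a_\ell$, which forces $\lim_{\ell\to\infty}a_\ell$ to exist. Calling this limit $E(\bb,\nu)$, we also get $E(\bb,\nu)=\limsup_L a_L\le a_\ell$ for all $\ell\ge1$, which is the lower bound in \eqref{eq-est-tdl} and the left half of part~(1). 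The sign of $E(\bb,\nu)$ is then immediate: if $\bb\le\zeta(\nu)$ then $d(\bb,\nu;\ell)=0$ for all $\ell$ by Lemma~\ref{lem-p-d(l)}, so $E(\bb,\nu)=0$; if $\bb>\zeta(\nu)$ (hence $\nu<\pi/2$) then $d(\bb,\nu;\ell)\le-C_2\ell^2$ for $\ell\ge\ell_0$ by Lemma~\ref{lem-Rgs-B}, so $E(\bb,\nu)\le-C_2/4<0$.

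\emph{The rate.} Fix a large $\ell$ and, for $L$ large, a minimizer $u_L$ of $\mathcal G_{\bb,\nu;L}$ as in Theorem~\ref{thm-min-Rgl}: it satisfies $\|u_L\|_{L^\infty}\le1$ and \eqref{eq-min-decay}. Since the weight in \eqref{eq-min-decay} is bounded below by a positive constant on $\{x_1\ge4\}$ and $|u_L|\le1$ on $\{x_1<4\}$, we obtain the a priori bounds $\int_{U_L}|u_L|^2\,dx\le CL^2$ and, using $\mathcal G_{\bb,\nu;L}(u_L)\le0$ together with $\bb\le1$, also $\int_{U_L}|(\nabla-i\Eb_\nu)u_L|^2\,dx\le CL^2$, with $C$ universal. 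Now cover the $(x_2,x_3)$-plane by the lattice of cubes of side $2\ell$ shifted by $\mathbf a\in[0,2\ell)^2$; for each cube $Q$ meeting $K_L$ take a cut-off $\chi_Q$ equal to $1$ on the concentric cube of side $2(\ell-w)$, supported in $Q$, with $|\nabla\chi_Q|\le C/w$. After the magnetic recentering used above, $\chi_Q\widetilde u_L$ (the zero-extension of $u_L$ times $\chi_Q$) is admissible for $d(\bb,\nu;\ell)$. Summing $d(\bb,\nu;\ell)\le\mathcal G_{\bb,\nu;\ell}(\chi_Q\widetilde u_L)$ over all such $Q$, using $|(\nabla-i\Eb_\nu)(\chi_Q\widetilde u_L)|^2\le(1+\varepsilon)\chi_Q^2|(\nabla-i\Eb_\nu)u_L|^2+(1+\varepsilon^{-1})|\nabla\chi_Q|^2|u_L|^2$, and averaging over $\mathbf a$, the translation average turns the weights into the exact constants $\tfrac1{(2\ell)^2}\int\chi^2=c_0$, $\tfrac1{(2\ell)^2}\int\chi^4=c_2$, $\tfrac1{(2\ell)^2}\int|\nabla\chi|^2=c_1$, with $1-2w/\ell\le c_2\le c_0\le1$ and $0\le c_1\le C/(\ell w)$. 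Recombining the quadratic and quartic terms as $c_0\,d(\bb,\nu;L)+\tfrac{\bb}{2}(c_2-c_0)\int_{U_L}|u_L|^4$ and dropping the last, nonpositive, term gives
\begin{equation*}
\overline{|\mathcal N(\mathbf a)|}\,d(\bb,\nu;\ell)\le c_0\,d(\bb,\nu;L)+\varepsilon c_0\int_{U_L}|(\nabla-i\Eb_\nu)u_L|^2\,dx+(1+\varepsilon^{-1})c_1\int_{U_L}|u_L|^2\,dx,
\end{equation*}
where $\overline{|\mathcal N(\mathbf a)|}=(L/\ell)^2\bigl(1+O(\ell/L)\bigr)$ is the averaged number of cubes meeting $K_L$. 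Dividing by $\overline{|\mathcal N(\mathbf a)|}$, inserting the a priori bounds, writing $d(\bb,\nu;L)=4L^2a_L$, and letting $L\to\infty$ (so $a_L\to E(\bb,\nu)$ and $c_0$ fixed) yields
\begin{equation*}
\frac{d(\bb,\nu;\ell)}{4\ell^2}\le c_0\,E(\bb,\nu)+C\varepsilon+\frac{C}{\varepsilon\,\ell\,w}\le E(\bb,\nu)+\frac{Cw}{\ell}+C\varepsilon+\frac{C}{\varepsilon\,\ell\,w},
\end{equation*}
using $c_0\ge1-2w/\ell$ and $|E(\bb,\nu)|\le C$. Choosing $w=\ell^{1/3}$ and $\varepsilon=\ell^{-2/3}$ makes all three error terms of size $\ell^{-2/3}$, which is the right-hand inequality of \eqref{eq-est-tdl}; all constants are universal and nothing depends on $\bb$ or $\nu$, so $\ell_0$ and $C$ can be taken universal.

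\emph{Main difficulty.} The delicate point is the half-infinite $x_1$-direction: the superconducting mass $\int|u_L|^2$ lives on a set of infinite measure and is only controlled through the slowly decaying weight of \eqref{eq-min-decay}, so a naive restriction of $u_L$ to a fixed sub-box would lose an uncontrolled part of the favorable term $-\bb|u_L|^2$. Averaging the tiling over all translations is precisely what replaces these uncontrolled local quantities by their exact spatial means $c_0,c_1,c_2$ and keeps every error proportional to a power of $w/\ell$ or $\varepsilon$; the exponent $2/3$ then comes out of balancing the lost-energy error $w/\ell$, the magnetic cross term $\varepsilon$, and the cut-off gradient error $1/(\varepsilon\ell w)$.
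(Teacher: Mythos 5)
Your proof is correct, and for part (2) it takes a genuinely different route from the paper's. Both you and the paper use the Fekete-type subadditivity $d(\bb,\nu;n\ell)\le n^2\,d(\bb,\nu;\ell)$ (obtained by tiling with translated, gauge-corrected copies of a minimizer) to derive the left-hand inequality $E(\bb,\nu)\le d(\bb,\nu;\ell)/(4\ell^2)$; you additionally use the same relation to get existence of the limit, whereas the paper deduces existence, together with the quantitative rate, from the abstract Lemma~\ref{gen-lem} (imported from \cite{FK3D}). The hypothesis of that lemma is a superadditivity-with-error estimate
\[
\frac{d(\bb,\nu;n\ell)}{(n\ell)^2}\ \ge\ \frac{d(\bb,\nu;(1+a)\ell)}{((1+a)\ell)^2}\ -\ C\Bigl(a+\frac{1}{a^2\ell^2}\Bigr)\,,
\]
which the paper proves in Lemma~\ref{lem-thdL} via an IMS partition of unity with overlap parameter $a$, after which the abstract lemma does the iteration and balances the errors $a$ and $1/(a^2\ell^2)$ to extract the exponent $2/3$. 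You bypass the abstract lemma entirely and extract the rate directly by the averaged-tiling device: you tile $(x_2,x_3)$ by a $2\ell$-lattice shifted by $\mathbf a$, cut off each tile with a width-$w$ cutoff, sum the trial inequalities $d(\bb,\nu;\ell)\le\mathcal G_{\bb,\nu;\ell}(\chi_Q\widetilde u_L)$, and then average over $\mathbf a\in[0,2\ell)^2$ so that the local weights $\chi_Q^2,\chi_Q^4,|\nabla\chi_Q|^2$ are replaced by their exact spatial means $c_0,c_2,c_1$. This cleanly circumvents the issue that the half-infinite $x_1$-direction prevents naive restriction, and it replaces the black-box abstract lemma with a self-contained computation; the errors $w/\ell$, $\varepsilon$ and $1/(\varepsilon\ell w)$ are balanced by $w=\ell^{1/3}$, $\varepsilon=\ell^{-2/3}$, matching the exponent $2/3$ obtained in the paper. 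The a priori inputs you invoke --- $\|u_L\|_\infty\le1$, $\int_{U_L}|u_L|^2\le CL^2$ and $\int_{U_L}|(\nabla-i\Eb_\nu)u_L|^2\le CL^2$ --- are exactly what Theorem~\ref{thm-min-Rgl} provides (the last via $\mathcal G_{\bb,\nu;L}(u_L)\le0$ and $\bb\le1$), the gauge/translation invariance in $(x_2,x_3)$ is used correctly, and dropping the term $\tfrac{\bb}{2}(c_2-c_0)\int|u_L|^4\le0$ is legitimate since $c_2\le c_0$. The constants depend only on the universal constant from Theorem~\ref{thm-min-Rgl} and the uniform bound on $|E(\bb,\nu)|$ from Lemma~\ref{lem-Rgs-B}, so $\ell_0$ and $C$ are indeed universal, as required.
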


The proof of Theorem~\ref{thm-thdL} relies on the following abstract lemma,
which is proved in~\cite{FK3D}.

\begin{lem}\label{gen-lem}
Consider a decreasing function $d:(0,\infty)\to(-\infty,0]$ such that the
function
$f:(0,\infty)\ni\ell\mapsto \frac{d(\ell)}{\ell^2}\in\R$
is bounded.

Suppose that there exist constants $C>0$ and $\ell_0>0$ such that  the estimate
\begin{equation}\label{eq-gen-lem-ass}
f(n\ell) \geq f((1+a)\ell)-C\left(a+\frac1{a^2\ell^2}\right)\,,\end{equation}
holds true for all $a\in(0,1)$,
$n\in\mathbb N$ and $\ell\geq\ell_0$.

Then $f(\ell)$ has a limit $A$ as $\ell\to\infty$. Furthermore, for all
$\ell\geq 2\ell_0$, the following estimate holds true,
\begin{equation}\label{eq-gen-lem}
f(\ell)\leq A+\frac{2C}{\ell^{2/3}}\,.
\end{equation}
\end{lem}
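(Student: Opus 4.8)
The plan is to show that the limit equals $A:=\limsup_{\ell\to\infty}f(\ell)$, a finite real number since $f$ is bounded, and to establish, in this order, the $n$-free inequality, the quantitative bound \eqref{eq-gen-lem}, and finally $\liminf_{\ell\to\infty}f(\ell)\ge A$, which closes the argument. The first move is to eliminate the multiplier $n$ from \eqref{eq-gen-lem-ass}: its only function is to push the scale $n\ell$ to infinity. Fixing $\ell\ge\ell_0$ and $a\in(0,1)$ and letting $n\to\infty$, one has $\limsup_{n\to\infty}f(n\ell)\le\limsup_{L\to\infty}f(L)=A$ because $n\ell\to+\infty$; since the right-hand side of \eqref{eq-gen-lem-ass} is independent of $n$, it is a lower bound for $\limsup_{n\to\infty}f(n\ell)$. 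Hence
\begin{equation}\label{eq-nfree-plan}
f\bigl((1+a)\ell\bigr)\le A+C\Bigl(a+\frac1{a^2\ell^2}\Bigr)\qquad\text{for all }\ell\ge\ell_0,\ a\in(0,1).
\end{equation}

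For the error estimate I would feed \eqref{eq-nfree-plan} a well-chosen $\ell$. Given $L\ge 2\ell_0$ and $a\in(0,1)$, taking $\ell=L/(1+a)$ is legitimate (then $\ell\ge L/2\ge\ell_0$) and turns \eqref{eq-nfree-plan} into $f(L)\le A+C\bigl(a+(1+a)^2/(a^2L^2)\bigr)$. Optimizing: the minimizer of $a\mapsto a+1/(a^2L^2)$ is $a\asymp L^{-2/3}$, and with this choice the bracket is $(3\cdot 2^{-2/3}+o(1))L^{-2/3}$, the factor $(1+a)^2$ only adding a lower-order term; since $3\cdot2^{-2/3}<2$, this yields $f(L)\le A+2CL^{-2/3}$, i.e.\ \eqref{eq-gen-lem}. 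In particular $\limsup_L f(L)\le A$, as it must be.

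The remaining and, I expect, hardest point is $\liminf_{\ell\to\infty}f(\ell)\ge A$; this is the one place where the freedom in $n$ is genuinely used, to reach an arbitrary large scale $L_2$ starting from a single favourable scale $L_1$. Given $\varepsilon>0$ I would fix $\delta\in(0,1)$ with $C\delta<\varepsilon/3$, a threshold $T\ge2\ell_0$ with $16C/(\delta^2T^2)<\varepsilon/3$, and—using $A=\limsup f$—a scale $L_1\ge T$ with $f(L_1)\ge A-\varepsilon/3$. For any $L_2\ge 2L_1/\delta$, the interval $\bigl[(1+\tfrac\delta2)\tfrac{L_2}{L_1},\,(1+\delta)\tfrac{L_2}{L_1}\bigr]$ has length $\ge1$, so it contains an integer $n$; putting $a:=n L_1/L_2-1\in[\delta/2,\delta]$ and $\ell:=L_2/n=L_1/(1+a)\ge L_1/2\ge\ell_0$ we get $(1+a)\ell=L_1$ and $n\ell=L_2$, whence \eqref{eq-gen-lem-ass} yields
\[
f(L_2)\ \ge\ f(L_1)-C\Bigl(a+\frac1{a^2\ell^2}\Bigr)\ \ge\ A-\frac{\varepsilon}{3}-C\delta-\frac{16C}{\delta^2L_1^2}\ \ge\ A-\varepsilon .
\]
As $L_2$ ranges over all scales $\ge 2L_1/\delta$ this gives $\liminf f\ge A-\varepsilon$, and then $\varepsilon\to0$ gives $\liminf f\ge A$; combined with the definition of $A$ this proves $f(\ell)\to A$.

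The obstacle, concretely, is the bookkeeping in the last step: one must choose the integer $n$ so that the induced parameter $a$ stays inside a fixed window $[\delta/2,\delta]$ bounded away from both $0$ and $1$ (too close to $1$ ruins the $Ca$ term, too close to $0$ ruins the $C/(a^2\ell^2)$ term), and this is exactly why $L_2$ must be taken a fixed multiple $2/\delta$ larger than $L_1$—harmless, since only $L_2\to\infty$ matters. I would not need the monotonicity of $d$ for this route; it is available should one prefer a shorter argument for \eqref{eq-nfree-plan} or for the lower bound.
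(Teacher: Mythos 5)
The paper never reproves this lemma (it quotes it from \cite{FK3D}), so there is no in-paper argument to compare with; judged on its own, your proposal follows the natural route, and the part establishing the existence of the limit is complete and correct. The $n$-free inequality $f((1+a)\ell)\le A+C\bigl(a+\tfrac{1}{a^2\ell^2}\bigr)$ for $\ell\ge\ell_0$, $a\in(0,1)$, with $A=\limsup f$, is derived correctly, and the lower-bound step -- choosing the integer $n$ in $\bigl[(1+\tfrac{\delta}{2})\tfrac{L_2}{L_1},(1+\delta)\tfrac{L_2}{L_1}\bigr]$ so that the induced $a$ stays in $[\delta/2,\delta]$ while $\ell=L_1/(1+a)\ge\ell_0$ -- is clean and correct; it is a genuine feature of your argument that it dispenses with the monotonicity of $d$, which the Part~I style of argument normally uses to compare arbitrary large scales with integer multiples of a fixed $\ell$.

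The gap is in the quantitative estimate \eqref{eq-gen-lem}. What your argument actually yields is $f(L)\le A+C\min_{a\in(0,1)}\bigl(a+\tfrac{(1+a)^2}{a^2L^2}\bigr)$ for every $L\ge 2\ell_0$, and your claim that this minimum is $\le 2L^{-2/3}$ rests on the asymptotic computation $3\cdot 2^{-2/3}+o(1)<2$; that is a statement as $L\to\infty$ and only becomes true beyond an absolute threshold (numerically around $L\approx 60$), independent of $\ell_0$. For $L$ near $2\ell_0$ with $\ell_0$ of order one it fails: e.g.\ with $\ell_0=1$, $L=2$, all $a\in(0,1)$ are admissible but $a+\tfrac{(1+a)^2}{4a^2}$ is decreasing on $(0,1)$ with infimum $2$, whereas $2L^{-2/3}=2^{1/3}\approx 1.26$. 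So as written you prove \eqref{eq-gen-lem} only for $\ell$ larger than a universal constant, not on the whole stated range $\ell\ge 2\ell_0$. Moreover this cannot be repaired within the hypotheses you keep: the only inequalities bounding $f(L)$ from above in terms of larger scales are those with $(1+a)\ell=L$, and chaining them only accumulates error, so the best available bound by this route is exactly $A+C\inf_a\bigl(a+\tfrac{(1+a)^2}{a^2L^2}\bigr)$. Covering small $\ell$ would require either the monotonicity of $d$ that you explicitly discard (or the trivial bound $f\le 0$, which helps only when $A+2C\ell^{-2/3}\ge 0$), or restricting to large $\ell$ / accepting a slightly larger constant. For the paper's application this is harmless, since the lemma is invoked with $\ell=\sqrt{\kappa H}\,\delta\to\infty$, but the statement as quoted is not fully established by your argument.
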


In order to use the result of Lemma~\ref{gen-lem} for the energy
$d(\bb,\nu,\ell)$, we establish the estimate in  Lemma~\ref{lem-thdL} below.

\begin{lem}\label{lem-thdL}
Let $\bb\in(\Theta_0,1)$ and $\nu\in(0,\pi/2)$ such that $\bb\geq\zeta(\nu)$.
There exist universal constants $C>0$ and $\ell_0\geq 1$ such that, for all
$\ell\geq \ell_0$,  $n\in\mathbb N$ and $a\in(0,1)$, we have,
\[
\frac{d(\bb,\nu;n\ell)}{(n\ell)^2}
\geq \frac{d(\bb,\nu;(1+a)\ell)}{\ell^2}-\frac{C}{a^2\ell^2}\,.
\]
\end{lem}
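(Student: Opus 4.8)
The inequality to be proved compares the energy on the large box $U_{n\ell}$ with the energy on the slightly enlarged box $U_{(1+a)\ell}$, with a gain because the right-hand side is divided by $\ell^2$ rather than $(n\ell)^2$. The natural strategy is a \emph{subadditivity / tiling} argument: take a minimizer $\varphi$ of $\mathcal G_{\bb,\nu;(1+a)\ell}$ on $U_{(1+a)\ell}$, cut it down to $U_\ell$ using a cut-off in the $(x_2,x_3)$ variables supported in $K_\ell$ (the extra width $a\ell$ is the overlap region that absorbs the cut-off derivatives), and then tile $U_{n\ell}$ by roughly $n^2$ translated copies of this localized function in the $x_2$ and $x_3$ directions. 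Because the magnetic potential $\Eb_\nu(x)=(0,0,\cos\nu\,x_1+\sin\nu\,x_2)$ is \emph{not} translation invariant in $x_2$, a translation $x_2\mapsto x_2+c$ must be accompanied by a gauge factor $e^{-ic\sin\nu\,x_3}$ (exactly as in the trial-state construction in the proof of Lemma~\ref{lem-Rgs-B}); after this gauge correction each translated piece carries the same energy. Arranging the copies with disjoint supports (which forces us to lose the corridors of width $\sim a\ell$ between consecutive tiles), we obtain a test function for $\mathcal G_{\bb,\nu;n\ell}$ whose energy is at most $\big(\lfloor n\ell/((1+a)\ell)\rfloor\big)^2 \,\mathcal G_{\bb,\nu;(1+a)\ell}(\varphi)$ plus cut-off errors.

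\textbf{Key steps, in order.} First, fix the minimizer $\varphi=\varphi_{\bb,\nu;(1+a)\ell}$ on $U_{(1+a)\ell}$ from Theorem~\ref{thm-min-Rgl}, so $\mathcal G_{\bb,\nu;(1+a)\ell}(\varphi)=d(\bb,\nu;(1+a)\ell)$ and $\|\varphi\|_\infty\le 1$. Second, introduce a cut-off $\eta$ in $(x_2,x_3)$ that equals $1$ on $K_\ell$ and is supported in $K_{(1+a)\ell}$ with $|\nabla\eta|\le C/(a\ell)$; the localized function $\eta\varphi$ lives in $\mathcal S_\ell$ after restriction, and the IMS-type bound gives
\[
\mathcal G_{\bb,\nu;\ell}(\eta\varphi)\le \mathcal G_{\bb,\nu;(1+a)\ell}(\varphi) + \frac{C}{a^2\ell^2}\int_{U_{(1+a)\ell}}|\varphi|^2\,dx \le d(\bb,\nu;(1+a)\ell) + \frac{C}{a^2\ell^2}\cdot C(1+a)^2\ell^2,
\]
where the bound $\int|\varphi|^2\le C\ell^2$ comes again from Theorem~\ref{thm-min-Rgl}; note this contributes an error of size $C/a^2$, \emph{not} $C/(a^2\ell^2)$, so one must be more careful. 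The fix is to keep the $L^2$ mass estimate localized: the cut-off derivative only sees the annulus $K_{(1+a)\ell}\setminus K_{\ell}$, which has area $O(a\ell^2)$, and on most of that annulus $|\varphi|$ is small by the decay estimate \eqref{eq-min-decay}; but since \eqref{eq-min-decay} controls decay only in $x_1$, not in $x_2,x_3$, one instead accepts the error $C a\ell^2 \cdot (a\ell)^{-2}\cdot (\text{width of annulus in }x_1\text{-}L^2)$ — i.e. one splits $\int_{\text{annulus}}|\nabla\eta|^2|\varphi|^2 \le (a\ell)^{-2}\int_{\text{annulus}}|\varphi|^2$ and bounds $\int_{\text{annulus}}|\varphi|^2 \le C\,a\ell\cdot\ell = Ca\ell^2$ using that the annulus has $x_2,x_3$-extent $O(a\ell)\times O(\ell)$ and $x_1$-mass $O(1)$ from the exponential-type decay; this yields the clean error $C/(a\ell)$, hence $\le C/(a^2\ell^2)$ after absorbing into the stated form. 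Third, tile: set $L=(1+a)\ell$, let $N=\lfloor n\ell/L\rfloor$, place $N^2$ copies of $\eta\varphi$ translated by $(0, c_j, c_k)$ with $c_j=jL$, each multiplied by the gauge phase $e^{-ic_j\sin\nu\,x_3}$, inside $U_{n\ell}$; supports are disjoint, so the energy is exactly $N^2 \mathcal G_{\bb,\nu;\ell}(\eta\varphi)$ (the phases do not change the energy, as the magnetic Laplacian is gauge covariant). Fourth, since $N^2 \ge (n\ell/L - 1)^2 = (n\ell)^2/L^2\,(1-L/(n\ell))^2$ and $d(\bb,\nu;n\ell)\le N^2 \mathcal G_{\bb,\nu;\ell}(\eta\varphi)$, divide by $(n\ell)^2$ and use $d(\bb,\nu;(1+a)\ell)\le 0$ to drop the $(1-L/(n\ell))^2$ factor, obtaining
\[
\frac{d(\bb,\nu;n\ell)}{(n\ell)^2} \le \frac{1}{L^2}\Big(d(\bb,\nu;(1+a)\ell)+\frac{C}{a^2\ell^2}\Big) = \frac{d(\bb,\nu;(1+a)\ell)}{(1+a)^2\ell^2} + \frac{C}{a^2\ell^4}.
\]
Finally, since $(1+a)^2\ge 1$ and $d(\bb,\nu;(1+a)\ell)\le 0$, we have $d(\bb,\nu;(1+a)\ell)/((1+a)^2\ell^2)\ge d(\bb,\nu;(1+a)\ell)/\ell^2$ — \emph{wait, that is the wrong direction}; instead we use $d(\bb,\nu;(1+a)\ell)/((1+a)^2\ell^2) \le d(\bb,\nu;(1+a)\ell)/\ell^2 \cdot (1+a)^{-2}$ and, since the quantity is $\le 0$, dropping $(1+a)^{-2}$ only \emph{increases} it toward $0$, giving $\le d(\bb,\nu;(1+a)\ell)/\ell^2$ is false; the correct move is to absorb $(1+a)^{-2}=1-O(a)$ against the $0\ge d$ term at cost $O(a)\cdot|d|/\ell^2 = O(a)$, but that reintroduces an $a$-error not present in the claim — so in fact one should \emph{not} enlarge to $(1+a)\ell$ in the denominator: keep $1/L^2 = 1/((1+a)^2\ell^2)$ and note $1/(1+a)^2 \le 1$, so $\frac{1}{L^2}d(\bb,\nu;(1+a)\ell) \le \frac{1}{\ell^2}d(\bb,\nu;(1+a)\ell)$ holds precisely because $d\le 0$; combined with $\frac{C}{a^2\ell^4}\le \frac{C}{a^2\ell^2}$ this gives exactly the claimed inequality.

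\textbf{Main obstacle.} The delicate point is bookkeeping the cut-off error so that it comes out as $C/(a^2\ell^2)$ and not $C/a^2$: this requires that the $L^2$ norm of $\varphi$ \emph{in the overlap annulus} be $O(a\ell^2)$ rather than $O(\ell^2)$, which in turn rests on controlling the mass of $\varphi$ uniformly in the transverse $(x_2,x_3)$-slab of width $O(a\ell)$ — and the available decay estimate \eqref{eq-min-decay} only gives $x_1$-decay, so one must argue that on each transverse slab the $x_1$-integrated mass is $O(1)$ (e.g. via the pointwise bound $|\varphi|\le 1$ on $\{x_1\le 4\}$ combined with \eqref{eq-min-decay} on $\{x_1\ge 4\}$, giving $\int_{\mathbb R_+}|\varphi(x_1,x_2,x_3)|^2\,dx_1 \le C$ for a.e. $(x_2,x_3)$), then integrate over the slab of transverse area $O(a\ell^2)$. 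The rest — gauge-covariant translation (as in Lemma~\ref{lem-Rgs-B}), disjoint tiling, and the elementary inequality chain using $d\le 0$ — is routine. A secondary subtlety is ensuring the translated tiles genuinely have disjoint closed supports and lie inside $U_{n\ell}$; choosing spacing exactly $L$ and $N=\lfloor n\ell/L\rfloor$ handles this since $\supp(\eta\varphi)$ in the transverse directions sits inside $K_L$.
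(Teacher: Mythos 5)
Your construction goes in the wrong direction: the lemma asserts a \emph{lower} bound on $d(\bb,\nu;n\ell)/(n\ell)^2$ (this is exactly the superadditivity-type hypothesis \eqref{eq-gen-lem-ass} that must be fed into Lemma~\ref{gen-lem}), whereas tiling $U_{n\ell}$ with gauge-translated copies of a (cut-off) minimizer of the $(1+a)\ell$ problem produces a \emph{test function} for the large box and hence only the upper bound $d(\bb,\nu;n\ell)\leq N^2\bigl(d(\bb,\nu;(1+a)\ell)+\mathrm{err}\bigr)$, i.e.\ $d(\bb,\nu;n\ell)/(n\ell)^2\leq\cdots$. That is essentially the easy comparison $d(\bb,\nu;n\ell)\leq n^2 d(\bb,\nu;\ell)$, which the paper uses separately (and without any $a$-fattening) in the proof of Theorem~\ref{thm-thdL}; it cannot be rearranged into the stated inequality. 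Your final manipulation also contains a sign slip: since $d(\bb,\nu;(1+a)\ell)\leq 0$ and $(1+a)^2\ell^2\geq\ell^2$, one has $d/((1+a)^2\ell^2)\geq d/\ell^2$, not $\leq$, so even within your chain the last step fails. Likewise, all the careful bookkeeping of the cut-off error in the overlap annulus is beside the point, because no amount of error control can flip a $\leq$ into the required $\geq$.

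The argument that does work starts from a minimizer $u_{n\ell}$ of the \emph{large}-box problem and localizes it. Take a partition of unity $(\chi_{\ell,j})_{j}$ of $K_{n\ell}$ consisting of $n^2$ squares of side $2(1+a)\ell$, overlapping on corridors of width of order $a\ell$, with $\sum_j\chi_{\ell,j}^2=1$ and $|\nabla\chi_{\ell,j}|\leq C/(a\ell)$. The IMS formula (together with $\sum_j\chi_{\ell,j}^4\leq\sum_j\chi_{\ell,j}^2=1$ for the quartic term) gives
\[
d(\bb,\nu;n\ell)=\mathcal G_{\bb,\nu;n\ell}(u_{n\ell})\;\geq\;\sum_j\Bigl(\mathcal G_{\bb,\nu;n\ell}(\chi_{\ell,j}u_{n\ell})-\bigl\||\nabla\chi_{\ell,j}|\,u_{n\ell}\bigr\|_{L^2}^2\Bigr)\,.
\]
Each localized term is bounded below by $d(\bb,\nu;(1+a)\ell)$ by magnetic translation invariance (the gauge phase you correctly describe makes $\Eb_\nu$ translation-covariant in $x_2$), and the localization error is at most $C n^2/a^2$ because $|\nabla\chi_{\ell,j}|^2\leq C/(a\ell)^2$ and $\int_{U_{n\ell}}|u_{n\ell}|^2\leq C(n\ell)^2$ by Theorem~\ref{thm-min-Rgl}. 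Dividing by $(n\ell)^2$ yields exactly the claim. Note that here the crude $L^2$ bound suffices — the error is measured against $(n\ell)^2$, not $\ell^2$ — so the ``main obstacle'' you identify (sharp mass control in the overlap region) is an artifact of having set the argument up in the reverse direction.
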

\begin{proof}
Let $n\geq 2$ be a natural number. If $a\in(0,1)$ and $j=(j_1,j_2)\in\Z^2$,
we denote by
\[
K_{a,j}=I_{j_1}\times I_{j_2}\,,
\]
where
\[
\forall~m\in\mathbb Z\,,
\quad I_m=\bigg{(}m-n-1-a-(1+a)\,,\,m-n-1-a+(1+a)\bigg{)}\,.
\]
Consider a partition of unity $(\chi_j)$ of $\R^2$ such that:
\[
\sum_{j}|\chi_j|^2=1\,,\quad 0\leq\chi_j\leq1\quad\text{ in }\R^2\,,
\quad \supp\chi_j\subset K_{a,j}\,,\quad
|\nabla\chi_j|\leq \frac{C}{a}\,,
\]
where $C$ is a universal constant.
We define $\chi_{\ell,j}(x)=\chi_j(x/\ell)$. Then we obtain a new partition of
unity $\chi_{\ell,j}$ such that
$\supp\chi_{\ell,j}\subset \mathcal K_{\ell,j}$,
with
\[
\mathcal  K_{\ell,j}=(\ell x~:~x\in K_{a,j}\}\,.
\]
Let $\mathcal J=\{j=(j_1,j_2)\in \Z^2~:~1\leq  j_1,j_2\leq n\}$ and
$K_{n\ell}=(-n\ell,n\ell)\times(-n\ell,n\ell)$.
Then the family
$(\mathcal K_{\ell,j})_{j\in\mathcal J}$ is a covering of  $K_{n\ell}$, and is
formed exactly of $n^2$ squares.

We  restrict the partition of unity $(\chi_{\ell,j})$ to the set
$K_{n\ell}=(-n\ell,n\ell)\times(-n\ell,n\ell)$.
Let $u_{n\ell}$ be a minimizer of~\eqref{eq-Rgl}, i.e.
$\mathcal G_{\bb,\nu;n\ell}(u_{n\ell})=d(\bb,\nu;n\ell)$.
We have the following decomposition formula,
\begin{equation}\label{eq-IMS}
\mathcal G_{\bb,\nu;n\ell}(u_{n\ell})\geq \sum_{j\in\mathcal J}
\left(\mathcal G_{\bb,\nu;n\ell}(\chi_{\ell,j}u_{n\ell})
-\|\,|\nabla\chi_{\ell,j}|\,u_{n\ell}\|^2_{L^2(K_{n\ell})}\right)\,.
\end{equation}
By magnetic translation invariance, we get
$\mathcal G_{\bb,\nu;n\ell}(\chi_{\ell,j}u_{n\ell})\geq d(\bb,\nu;(1+a)\ell)$.
Therefore, it results from~\eqref{eq-IMS},
\begin{equation}\label{eq-lb-IMS}
d(\bb,\nu;n\ell)
\geq n^2d(\bb,\nu;(1+a)\ell)-\sum_j\|
\nabla\chi_{\ell,j}\,u_{n\ell}\|^2_{L^2(K_{n\ell})}\,.
\end{equation}
 Using Theorem~\ref{thm-min-Rgl}, we get further,
\begin{equation}\label{eq-lb1-IMS}
d(\bb,\nu;n\ell)\geq n^2d(\bb,\nu;(1+a)\ell)-\frac{Cn^2}{a^2}\,.
\end{equation}
Therefore,
\[
\frac{d(\bb,\nu;n\ell)}{(n\ell)^2}\geq \frac{d(\bb,\nu;(1+a)\ell)}{\ell^2}
-\frac{C}{a^2\ell^2}\,.
\]
\end{proof}

\begin{rem}\label{rem-thdL}
Using the lower bound in Lemma~\ref{lem-Rgs-B}, we infer from
Lemma~\ref{lem-thdL} the following lower bound,
\[
\frac{d(\bb,\nu;n\ell)}{(n\ell)^2}
\geq \frac{d(\bb,\nu;(1+a)\ell)}{((1+a)\ell)^2}
-C\left(a+\frac1{a^2\ell^2}\right)\,,
\]
where the constant $C>0$ is universal.
\end{rem}

\begin{proof}[Proof of Theorem~\ref{thm-thdL}]\

Let $f(\ell)=\displaystyle\frac{d(\bb,\nu;\ell)}{\ell^2}$. Thanks to
Lemmas~\ref{lem-Rgs-B} and~\ref{lem-thdL}, we know that the functions $f(\ell)$
and $d(\ell):=d(\bb,\nu;\ell)$ satisfy the assumptions in Lemma~\ref{gen-lem}.
Consequently, $f(\ell)$ has a limit $E(\bb,\nu;\ell)$ as $\ell\to\infty$ and,
\[
f(\ell)\leq E(\bb,\nu)+\frac{C}{\ell^{2/3}}\,,
\]
for all $\ell\geq\ell_0$. Here $C$ and $\ell_0$ are constants independent of
$\bb\in[\Theta_0,1]$ and $\nu\in[0,\pi/2]$.

It remains to establish a lower bound of $f(\ell)$.
Let $n\in\mathbb N$. By using a comparison argument and magnetic translation
invariance, we have that
\[
d(\bb,\nu;n\ell)\leq n^2 d(\bb,\nu;\ell)\,.
\]
Consequently,
\[
\frac{d(\bb,\nu;\ell)}{\ell^2}\geq \frac{d(\bb,\nu;n\ell)}{(n\ell)^2}\,.
\]
Making $n\to\infty$ in both sides above, we conclude,
\begin{equation}\label{eq-est-d(ell)}
\frac{d(\bb,\nu;\ell)}{\ell^2}\geq E(\bb,\nu)\,.
\end{equation}
Lemma~\ref{lem-Rgs-B} tells us  that if $\bb\geq\zeta(\nu)$, then
$d(\bb,\nu;\ell)<0$. In this case, we get  $E(\bb,\nu)<0$ as consequence
of~\eqref{eq-est-d(ell)}. On the other hand, If $\bb\leq\zeta(\nu)$, we know by
Lemma~\ref{lem-p-d(l)} that $E(\bb,\nu)=0$.
\end{proof}

\subsubsection{Properties of the function $E(\bb,\nu)$}

Theorem~\ref{thm-thdL} provides us with a limiting constant
$E(\bb,\nu)\in(-\infty,0]$ defined for $\bb\in[\Theta_0,1]$ and
$\nu\in[0,\pi/2]$. In this section, we will study properties of $E(\bb,\nu)$
as a function of $\bb$ and $\nu$.

\begin{thm}\label{thm-E(b,nu)}
For all $(\bb,\nu)\in [\Theta_0,1]\times[0,\pi/2]$, let the constant
$E(\bb,\nu)$ be defined as in Theorem~\ref{thm-thdL}.
Then $E(\bb,\nu)$ satisfies the following properties.
\begin{enumerate}
\item Given $\nu\in[0,\pi/2]$, the function
$[\Theta_0,1]\ni \bb\mapsto E(\bb,\nu)$ is continuous and monotone decreasing.
\item Given $\bb\in[\Theta_0,1]$, the function
$[0,\pi/2]\ni\nu\mapsto E(\bb,\nu)$ is continuous.
\end{enumerate}
\end{thm}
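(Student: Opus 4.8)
The plan is to transfer regularity from the finite-box energies $d(\bb,\nu;\ell)$ to the limit $E(\bb,\nu)$ using the uniform rate of convergence provided by part~(2) of Theorem~\ref{thm-thdL}. The key point is that the bound $E(\bb,\nu)\le d(\bb,\nu;\ell)/(4\ell^2)\le E(\bb,\nu)+C\ell^{-2/3}$ is \emph{uniform} in $(\bb,\nu)$. Hence $E(\bb,\nu)$ is a uniform limit, on the compact parameter set $[\Theta_0,1]\times[0,\pi/2]$, of the functions $(\bb,\nu)\mapsto d(\bb,\nu;\ell)/(4\ell^2)$; if each of these is continuous, then so is the limit. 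So the main task reduces to proving joint continuity of $\ell$-fixed finite-box ground state energies in $(\bb,\nu)$, and separately establishing monotonicity in $\bb$.

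First I would handle monotonicity in $\bb$. Fix $\nu$ and take $\bb_1<\bb_2$ in $[\Theta_0,1]$. Let $u$ be a minimizer of $\mathcal G_{\bb_2,\nu;\ell}$, normalized with $\|u\|_{L^\infty}\le 1$ by Theorem~\ref{thm-min-Rgl}. Comparing the two functionals on the same test function,
\[
\mathcal G_{\bb_1,\nu;\ell}(u)-\mathcal G_{\bb_2,\nu;\ell}(u)
=(\bb_2-\bb_1)\int_{U_\ell}\Big(|u|^2-\tfrac12|u|^4\Big)\,dx\ge 0,
\]
since $|u|^2-\frac12|u|^4\ge 0$ when $|u|\le 1$. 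Therefore $d(\bb_1,\nu;\ell)\le \mathcal G_{\bb_1,\nu;\ell}(u)\le \mathcal G_{\bb_2,\nu;\ell}(u)=d(\bb_2,\nu;\ell)$, wait—this gives the wrong direction. One instead estimates $\mathcal G_{\bb_2,\nu;\ell}(u_1)\le\mathcal G_{\bb_1,\nu;\ell}(u_1)=d(\bb_1,\nu;\ell)$ using the minimizer $u_1$ for $\bb_1$ together with the sign $|u_1|^2-\frac12|u_1|^4\ge 0$ and $\bb_2\ge\bb_1$, giving $d(\bb_2,\nu;\ell)\le d(\bb_1,\nu;\ell)$. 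Dividing by $4\ell^2$ and letting $\ell\to\infty$ yields $E(\bb_2,\nu)\le E(\bb_1,\nu)$, i.e. monotone decreasing.

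For continuity in $\bb$ (for fixed $\ell$ and $\nu$), the same comparison gives quantitative two-sided control: for $\bb_1\le\bb_2$,
\[
0\le d(\bb_1,\nu;\ell)-d(\bb_2,\nu;\ell)
\le (\bb_2-\bb_1)\int_{U_\ell}\Big(|u_2|^2-\tfrac12|u_2|^4\Big)\,dx
\le (\bb_2-\bb_1)\,|U_\ell\cap\supp u_2|,
\]
and the relevant volume is controlled by Theorem~\ref{thm-min-Rgl} (the $|u|^2$ mass of a minimizer is $\le C\ell^2$), so $|d(\bb_1,\nu;\ell)-d(\bb_2,\nu;\ell)|\le C\ell^2|\bb_2-\bb_1|$, giving Lipschitz continuity in $\bb$ at each fixed $\ell$. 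For continuity in $\nu$ (for fixed $\ell$ and $\bb$), I would exploit that the only $\nu$-dependence of $\mathcal G_{\bb,\nu;\ell}$ sits in the magnetic potential $\Eb_\nu(x)=(0,0,\cos\nu\,x_1+\sin\nu\,x_2)$, which depends smoothly on $\nu$ with derivatives bounded uniformly on the bounded box $U_\ell$. Comparing $\mathcal G_{\bb,\nu_1;\ell}(u)$ and $\mathcal G_{\bb,\nu_2;\ell}(u)$ on a minimizer $u$ for $\nu_2$, the difference of the magnetic gradient terms is controlled by $\|\Eb_{\nu_1}-\Eb_{\nu_2}\|_{L^\infty(U_\ell)}$ times norms of $u$ and $(\nabla-i\Eb_{\nu_2})u$, all bounded by $C\ell$ via Theorem~\ref{thm-min-Rgl}; this yields a modulus of continuity in $\nu$ of the form $C(\ell)|\nu_1-\nu_2|$. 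Combining the two estimates, each map $(\bb,\nu)\mapsto d(\bb,\nu;\ell)/(4\ell^2)$ is (jointly) continuous on $[\Theta_0,1]\times[0,\pi/2]$.

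The main obstacle I anticipate is making the volume/energy bounds in the $\bb$- and $\nu$-comparisons genuinely \emph{uniform in $\ell$ after dividing by $\ell^2$}, i.e. getting moduli of continuity for $d(\bb,\nu;\ell)/\ell^2$ that do not blow up as $\ell\to\infty$; but this is not actually needed: continuity of the limit follows purely from uniform convergence (Theorem~\ref{thm-thdL}(2)) plus continuity of each $\ell$-slice, and the latter tolerates $\ell$-dependent constants. So the argument is: $E=\lim_\ell d(\cdot,\cdot;\ell)/(4\ell^2)$ uniformly; each $d(\cdot,\cdot;\ell)/(4\ell^2)$ is continuous by the perturbation estimates above; hence $E$ is continuous; and monotonicity in $\bb$ passes to the limit from the monotonicity of each $d(\cdot,\nu;\ell)$. $\hfill\qed$
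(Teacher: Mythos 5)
Your overall framework matches the paper's: use the uniform estimate from Theorem~\ref{thm-thdL}(2), prove a modulus of continuity for $d(\bb,\nu;\ell)$ at each fixed $\ell$, then pass to the limit. The monotonicity argument and the $\bb$-continuity argument are essentially the paper's, which compares the two functionals on a shared test function using the identity $\mathcal G_{\bb+\varepsilon,\nu;\ell}(u)=\mathcal G_{\bb,\nu;\ell}(u)+\varepsilon\int_{U_\ell}(-|u|^2+\tfrac12|u|^4)\,dx$, the $L^\infty$-bound $|u|\le1$, and the $L^2$-bound $\int|u|^2\le C\ell^2$ from Theorem~\ref{thm-min-Rgl}; these are fine as written.

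The $\nu$-continuity step, however, has a concrete gap. You write that $\Eb_\nu$ has ``derivatives bounded uniformly on the bounded box $U_\ell$'' and then invoke $\|\Eb_{\nu_1}-\Eb_{\nu_2}\|_{L^\infty(U_\ell)}$. But the domain $U_\ell=(0,\infty)\times K_\ell$ defined in~\eqref{eq-K-ell} is a half-slab that is \emph{unbounded} in the $x_1$-direction; only the cross-section $K_\ell$ is bounded. Since $\Eb_{\nu_1}(x)-\Eb_{\nu_2}(x)$ grows like $|\nu_1-\nu_2|\,(|x_1|+|x_2|)$, the sup-norm on $U_\ell$ is actually infinite, and the comparison as stated does not produce a finite modulus. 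The decay estimate~\eqref{eq-min-decay} alone does not close the gap either, because that weighted bound controls $\int \frac{x_1}{(\ln x_1)^2}|u|^2$ but not $\int x_1^2|u|^2$ directly. The paper handles this with an extra step you are missing: it introduces a cut-off $\zeta_\ell(x_1)=\zeta(x_1/\ell)$ supported in $\{x_1\le\ell\}$, so that on the cut-off region $|x_1|^2+|x_2|^2\le C\ell^2$ (giving a contribution $O(|\varepsilon|\ell^4)$), and uses the decay estimate together with the Euler--Lagrange identity for minimizers to show that the error from discarding $\{x_1>\ell\}$ is $O(1)$ uniformly; this produces the bound $E(\bb,\nu+\varepsilon)\geq E(\bb,\nu)-C|\varepsilon|\ell^4-C\ell^{-2/3}$ which, after taking $\varepsilon\to0$ then $\ell\to\infty$ (in this order, as you correctly note $\ell$-dependent constants are tolerated), gives the claim. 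Your proposal should be amended to introduce this $x_1$-cut-off and invoke the decay estimate, rather than treating $U_\ell$ as a bounded domain.
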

\begin{proof}

{\bf Continuity and monotonicity of $\bb\mapsto E(\bb,\nu)$:}

Let $\bb\in[\Theta_0,1]$ and $\varepsilon\in\R$ such that
$\bb+\varepsilon\in[\Theta_0,1]$. Recall the definition of the functional
$\mathcal G_{\bb,\nu;\ell}$ in~\eqref{eq-Rgl} together with the associated
ground state energy $d(\bb,\nu;\ell)$. It is easy to check that,
\begin{equation}\label{eq-mon}
\mathcal G_{\bb+\varepsilon,\nu;\ell}(u)
=\mathcal G_{\bb,\nu;\ell}(u)+\varepsilon\int_{U_\ell}\left(-|u|^2
+\frac12|u|^4\right)\,dx\,,
\end{equation}
valid for any $u\in\mathcal S_\ell$. In particular, setting successively
$u=\varphi_{\bb,\nu;\ell}$ then $u=\varphi_{\bb+\varepsilon,\nu;\ell}$
in~\eqref{eq-mon} and using the properties in Theorem~\ref{thm-min-Rgl}, we get
the following estimate,
\[
|d(\bb+\varepsilon,\nu;\ell)-d(\bb,\nu;\ell)|\leq C\varepsilon\ell^2\,,
\]
for some universal constant $C$. Remembering the definition of $E(\cdot,\nu)$,
we get,
\[
|E(\bb+\varepsilon,\nu)-E(\bb,\nu)|\leq C\varepsilon\,,
\]
thereby proving the continuity   of $E(\cdot,\nu)$.

To obtain monotonicity of $E(\bb,\nu)$, we suppose that $\varepsilon<0$ and we
set $u=\varphi_{\bb+\varepsilon,\nu;\ell}$ in~\eqref{eq-mon}. Thanks to
Theorem~\ref{thm-min-Rgl}, we know that $|u|\leq 1$ and consequently,
\begin{align*}
d(\bb+\varepsilon,\nu;\ell)&
\geq d(\bb,\nu;\ell)-\frac{\varepsilon}2\int_{U_\ell}|u|^2\,dx\\
&\geq d(\bb,\nu;\ell)\,.\end{align*}
Dividing both sides above by $4\ell^2$ then letting $\ell\to\infty$, we get,
\[
E(\bb+\varepsilon,\nu;\ell)\geq E(\bb,\nu;\ell)\,,
\]
which proves that $E(\cdot,\nu)$ is monotone decreasing.

{\bf Continuity  of $\nu\mapsto E(\bb,\nu)$:}

Let  $\nu\in[0,\pi/2]$ and $\varepsilon\in(-1,1)\setminus\{0\}$ such that
$\nu+\varepsilon\in[0,\pi/2]$.  We want to prove that
$E(\bb,\nu+\varepsilon)\to E(\bb,\nu)$ as $\varepsilon\to0$.

Recall the definition of the magnetic potential $\Eb_\nu$ in~\eqref{eq-3D-Eb}.
Notice that, if $u\in\mathcal S_\ell$, we have the following estimate,
\begin{equation}\label{eq-E(nu)}
\left|\mathcal G_{\bb,\nu+\varepsilon;\ell}(u)
-\mathcal G_{\bb,\nu;\ell}(u)\right|
\leq |\varepsilon|\int_{U_\ell}|(\nabla-i\Eb_\nu)u|^2\,dx
+2|\varepsilon^{-1}|\int_{U_\ell}|(\Eb_{\nu+\varepsilon}-\Eb_\nu)u|^2\,dx\,.
\end{equation}

Using the bounds,
\[
|\cos(\nu+\varepsilon)-\cos\nu|\leq\varepsilon\,,\quad |\sin(\nu+\varepsilon)
-\sin\nu|\leq \varepsilon\,,
\]
we get,
\[
|\Eb_{\nu+\varepsilon}(x)-\Eb_\nu(x)|\leq \varepsilon\left(|x_1|+|x_2|\right),
\quad\forall~x=(x_1,x_2,x_3)\in\R^3\,.
\]
Consequently, we infer from~\eqref{eq-E(nu)} the following bound,
\begin{equation}\label{eq-E(nu)'}
\left|\mathcal G_{\bb,\nu+\varepsilon;\ell}(u)-\mathcal G_{\bb,\nu;\ell}(u)\right|
\leq |\varepsilon|\int_{U_\ell}|(\nabla-i\Eb_\nu)u|^2\,dx
+2|\varepsilon|\int_{U_\ell}\left(|x_1|^2+|x_2|^2\right)|u|^2\,dx\,.
\end{equation}
Let $\zeta\in C_c^\infty(\R)$ be a cut-off function satisfying
$0\leq \zeta\leq 1$ in $\R$, $\supp\zeta\subset[-1,1]$ and
$\zeta=1$ in $[-\frac12,\frac12]$. Let $\zeta_\ell(x)=\zeta(x_1/\ell)$ and
$\varphi_{\bb,\nu+\varepsilon;\ell}\in\mathcal S_\ell$ be a minimizer of
$\mathcal G_{\bb,\nu+\varepsilon;\ell}$ given by Theorem~\ref{thm-min-Rgl}.
Applying~\eqref{eq-E(nu)'} with $u=\zeta_\ell\varphi_\ell$ and using the decay
estimate of Theorem~\ref{thm-min-Rgl}, we get for all $\ell\geq 1$,
\begin{equation}\label{eq-E(nu)''}
\mathcal G_{\bb,\nu+\varepsilon;\ell}(\zeta_\ell\varphi_{\bb,\nu+\varepsilon;\ell})
\geq d(\bb,\nu;\ell)-C|\varepsilon|\ell^4\,,
\end{equation}
where $C>0$ is a universal constant.  Using the estimate in~\eqref{eq-est-tdl},
we get  for all $\ell\geq \ell_0$,
\begin{equation}\label{eq-E(nu)'''}
\mathcal G_{\bb,\nu+\varepsilon;\ell}(\zeta_\ell\varphi_{\bb,\nu+\varepsilon;\ell})
\geq (4\ell^2)E(\bb,\nu)-C|\varepsilon|\ell^4\,,
\end{equation}
where $\ell_0$ is a universal constant.

We estimate the term
$\mathcal G_{\bb,\nu+\varepsilon;\ell}(\zeta_\ell\varphi_{\bb,\nu+\varepsilon;\ell})$
from above. Actually, an integration by parts give us the following identity,
\[
\mathcal G_{\bb,\nu+\varepsilon;\ell}(\zeta_\ell\varphi_{\bb,\nu+\varepsilon;\ell})
=\int_{U_\ell}|\zeta'_\ell|^2|\varphi_{\bb,\nu+\varepsilon;\ell}|^2\,dx
+\int_{U_\ell}\zeta^2_\ell\left(\frac12\zeta^2_\ell-1\right)
|\varphi_{\bb,\nu+\varepsilon;\ell}|^4\,dx\,.
\]
Using the decay estimate of Theorem~\ref{thm-min-Rgl}, we get,
\[
\mathcal G_{\bb,\nu+\varepsilon;\ell}(\zeta_\ell\varphi_{\bb,\nu+\varepsilon;\ell})
\leq C
-\frac12\int_{U_\ell}|\varphi_{\bb,\nu+\varepsilon;\ell}|^4\,dx\,.
\]
The Ginzburg-Landau equation satisfied by $\varphi_{\bb,\nu+\varepsilon;\ell}$
gives us $$\mathcal G_{\bb,\nu+\varepsilon;\ell}(\varphi_{\bb,\nu+\varepsilon;\ell})
=-\frac12\int_{U_\ell}|\varphi_{\bb,\nu+\varepsilon;\ell}|^4\,dx.$$
Consequently, we get the upper bound,
\[
\mathcal G_{\bb,\nu+\varepsilon;\ell}(\zeta_\ell\varphi_{\bb,\nu+\varepsilon;\ell})
\leq d(\bb,\nu+\varepsilon;\ell)+C\,.
\]
Using the estimate in~\eqref{eq-est-tdl}, we get further,
\[
\mathcal G_{\bb,\nu+\varepsilon;\ell}(\zeta_\ell\varphi_{\bb,\nu+\varepsilon;\ell})
\leq (4\ell^2)E(\bb,\nu+\varepsilon)+C\ell^{4/3}\,.
\]
Inserting this upper bound into~\eqref{eq-E(nu)'''}, we get,
\[
E(\bb,\nu+\varepsilon)\geq E(\bb,\nu)-C|\varepsilon|\ell^4-C\ell^{-2/3}\,.
\]
Taking successively $\displaystyle\liminf_{\varepsilon\to 0}$ then
$\displaystyle\lim_{\ell\to\infty}$ on both sides above, we get,
\[
\liminf_{\varepsilon\to0}E(\bb,\nu+\varepsilon)\geq E(\bb,\nu)\,.
\]
In a similar fashion,  by applying~\eqref{eq-E(nu)} with
$u=\zeta_\ell\varphi_{\bb,\nu;\ell}$ and $\varphi_{\bb,\nu;\ell}$ being a
minimizer of $\mathcal G_{\bb,\nu;\ell}$, we can prove that
$\displaystyle\limsup_{\varepsilon\to0}E(\bb,\nu+\varepsilon)\leq E(\bb,\nu)$.
This gives us that
\[
\lim_{\varepsilon\to0}E(\bb,\nu+\varepsilon)=E(\bb,\nu)\,,
\]
and thereby proves the continuity of the function $E(\bb,\cdot)$.
\end{proof}

\section{Auxiliary results}
We collect some results that are necessary for controlling the errors resulting
from various approximations. The  first result is an estimate obtained
in \cite[Lemma~3.2 and Theorem~3.3]{Al} (for a different method
see~\cite[Theorem~12.3.1]{FH-b}).

\begin{lemma}\label{lem:almog}
There exists a constant $C_1>0$ such that if
$(\psi,\Ab)\in H^1(\Omega)\times \dot H^1_{\Div,\Fb}(\R^3)$ is a solution
to~\eqref{eq-3D-GLeq}, then
\[
\|\psi\|_{L^4(\Omega)}^4\leq C_1\lambda.
\]
Here,
\begin{equation}\label{eq-zeta-2}
\lambda=\max\left(\frac1\kappa,\left[\frac{\kappa}H-1\right]_+^2\right)\,.
\end{equation}
\end{lemma}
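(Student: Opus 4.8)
The plan is to bound $\|\psi\|_{L^4(\Omega)}^4$ by testing the first Ginzburg--Landau equation against $\psi$ and using the third (Neumann) boundary condition. Integrating by parts in
\[
-(\nabla-i\kappa H\Ab)^2\psi=\kappa^2(1-|\psi|^2)\psi
\]
against $\overline\psi$ gives the fundamental identity
\[
\int_\Omega|(\nabla-i\kappa H\Ab)\psi|^2\,dx
=\kappa^2\int_\Omega|\psi|^2\,dx-\kappa^2\int_\Omega|\psi|^4\,dx,
\]
with no boundary term thanks to $N\cdot(\nabla-i\kappa H\Ab)\psi=0$ on $\partial\Omega$. Hence
\[
\kappa^2\int_\Omega|\psi|^4\,dx
=\kappa^2\int_\Omega|\psi|^2\,dx-\int_\Omega|(\nabla-i\kappa H\Ab)\psi|^2\,dx.
\]
So the task reduces to showing that the right-hand side is $\leq C\kappa^2\lambda$, i.e.\ that $\int_\Omega|\psi|^2\,dx$ minus the (normalized) magnetic kinetic energy is at most of order $\lambda$. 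This is where the spectral information enters.

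The key step is a lower bound on the magnetic kinetic energy coming from the lowest eigenvalue of the magnetic Schr\"odinger operator $-(\nabla-i\kappa H\Ab)^2$ on $\Omega$ with Neumann boundary conditions. For the induced field $\curl\Ab$, which by the second GL equation and elliptic estimates is close to the constant field $\beta$ with $\|\curl\Ab-\beta\|$ small (quantitatively of lower order once one knows $\psi$ is small, which can be bootstrapped), the bottom of the spectrum is $\approx \kappa H\cdot\min(\Theta_0,\dots)$ near the boundary but the relevant comparison here is the \emph{bulk} bound: the first eigenvalue is at least $\kappa H(1-\varepsilon)$ in the interior and at least $\Theta_0\kappa H(1-\varepsilon)$ globally. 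Since we only want an $L^4$ estimate and $H\geq\kappa-\mu(\kappa)\kappa\sim\kappa$, the precise constant is immaterial in the regime $\kappa/H\geq 1$: one gets $\int_\Omega|(\nabla-i\kappa H\Ab)\psi|^2\geq \kappa H\Theta_0'\int_\Omega|\psi|^2$ for a universal $\Theta_0'>0$, and $\kappa H\Theta_0'>\kappa^2$ when $\kappa/H\leq\zeta$-type thresholds are violated... — the point is that when $[\kappa/H-1]_+=0$ one squeezes $\int|\psi|^4$ by a rougher argument (comparison with the whole-space operator whose bottom is $\kappa H\geq$ something, or with the half-space operator) yielding $\int_\Omega|\psi|^4\leq C/\kappa$, while when $\kappa/H>1$ one loses a factor $[\kappa/H-1]_+^2$ from the mismatch between $\kappa^2$ and $\kappa H$, giving the $[\kappa/H-1]_+^2$ term in $\lambda$. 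I would organize it as: (i) establish the kinetic energy lower bound $\int_\Omega|(\nabla-i\kappa H\Ab)\psi|^2\geq (1-C\kappa^{-1/4})\kappa H\int_\Omega|\psi|^2$ using the third critical field analysis of \cite{FH-b}, Ch.~12, together with a control on $\|\curl\Ab-\beta\|_{L^2}$; (ii) insert into the identity above to get $\kappa^2\int|\psi|^4\leq (\kappa^2-(1-C\kappa^{-1/4})\kappa H)\int|\psi|^2+(\text{lower order})$; (iii) bound $\int_\Omega|\psi|^2\leq C$ (it is $\leq|\Omega|$ since $|\psi|\leq1$ by the maximum principle applied to the first GL equation); (iv) simplify $\kappa^2-(1-C\kappa^{-1/4})\kappa H=\kappa H([\kappa/H-1]+C\kappa^{-1/4})\leq C\kappa^2(\,[\kappa/H-1]_++\kappa^{-1/4}\,)$ and check that the square of this divided back out lands inside $\lambda=\max(1/\kappa,[\kappa/H-1]_+^2)$ — the self-improving/bootstrap on $\int|\psi|^4$ feeding back into the estimate on $\|\curl\Ab-\beta\|$ is what upgrades a first-pass bound to the stated one.

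The main obstacle is step (i): controlling the deviation of the induced magnetic field from the external one and converting it into a clean lower bound for the magnetic kinetic energy with an error that is genuinely $o(\kappa^2\lambda)$ and not just $o(\kappa^2)$. This requires the a priori estimate $\|\curl\Ab-\beta\|_{L^2(\R^3)}^2\leq C(\kappa H)^{-2}\|\psi\|_{L^2}^2\|\psi\|_{L^4}^2$ type bound from the second GL equation (Biot--Savart / elliptic regularity for $\curl^2$), which is itself proportional to $\|\psi\|_{L^4}^2$ and hence only useful after a preliminary bound on $\|\psi\|_{L^4}$ is in hand — so the argument is necessarily a two-step bootstrap. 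I would expect this is exactly how \cite{Al}, Lemma~3.2 and Theorem~3.3 proceed, and the cleanest route is to cite that directly rather than redo the bootstrap, since the alternative reference \cite{FH-b}, Theorem~12.3.1 gives the same conclusion by the analysis underlying $H_{C_3}$.
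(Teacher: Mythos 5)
The paper does not prove this lemma itself --- it cites \cite[Lemma~3.2 and Theorem~3.3]{Al} and, as an alternative, \cite[Theorem~12.3.1]{FH-b}. Your final conclusion, that the cleanest route is to cite these references, matches what the paper actually does. But the sketch you give in between has a genuine gap that would make a self-contained version fail.

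The crucial error is in your step~(i). You assert a lower bound
$\int_\Omega|(\nabla-i\kappa H\Ab)\psi|^2\geq(1-C\kappa^{-1/4})\kappa H\int_\Omega|\psi|^2$,
but the operator $-(\nabla-i\kappa H\Ab)^2$ with Neumann condition on $\partial\Omega$ has first eigenvalue $\approx\Theta_0\,\kappa H$ with $\Theta_0\approx0.59<1$. (You say this yourself two lines earlier: ``at least $\Theta_0\kappa H(1-\varepsilon)$ globally.'') With the correct constant, the identity from testing against $\overline\psi$ only gives
$\kappa^2\int|\psi|^4\leq(\kappa^2-\Theta_0\kappa H+o(\kappa^2))\int|\psi|^2$,
and for $H\sim\kappa$ the prefactor is $\sim(1-\Theta_0)\kappa^2$, a genuinely positive constant times $\kappa^2$. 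Combined with your step~(iii), $\int|\psi|^2\leq|\Omega|$, this yields only $\int|\psi|^4\leq C$, far from $C/\kappa$.

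What is actually needed --- and what the cited references do --- is an IMS splitting of $\psi$ into a boundary layer of width $\sim\kappa^{-1}$ and an interior part. In the interior the form is controlled by the whole-space Landau bound $\kappa H$, which beats $\kappa^2$ when $H\geq\kappa$ and is responsible for the $[\kappa/H-1]_+$ term when $\kappa>H$; in the boundary layer one uses $\Theta_0\kappa H$, but the region itself has small measure, so using $|\psi|\leq1$ bounds that contribution by $C\kappa^2\cdot\kappa^{-1}=C\kappa$; the IMS commutator cost is $C\epsilon^{-2}\cdot\epsilon=C\epsilon^{-1}=C\kappa$ with $\epsilon=\kappa^{-1}$. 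Finally, to produce the \emph{squared} factor $[\kappa/H-1]_+^2$ rather than the first power, one must feed $\int|\psi|^2\leq\sqrt{|\Omega|}\bigl(\int|\psi|^4\bigr)^{1/2}$ into the inequality and solve the resulting quadratic in $t=\bigl(\int|\psi|^4\bigr)^{1/2}$; the phrase ``the square of this divided back out lands inside $\lambda$'' gestures at this but does not carry it out, and the computation you write in step~(iv) gives $[\kappa/H-1]_+$ unsquared. Your observation about the two-step bootstrap on $\curl\Ab-\beta$ is accurate and is indeed an ingredient, but it does not replace the missing localization and Cauchy--Schwarz steps.
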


The next lemma is taken from \cite[Lemma~10.33]{FH-b}, which, together with
Lemma~\ref{lem:almog}, give a good estimate of $\|\curl(\Ab-\Fb)\|_{L^2(\R^3)}$.

\begin{lem}\label{lem-curl}
There exists a constant $C>0$ such that, if $(\psi,\Ab)\in H^1(\Omega;\C)
\times \dot H^1_{\Div,\Fb}(\R^3)$ is a solution of~\eqref{eq-3D-GLeq}, then,
\begin{equation}\label{eq-est-curl}
\|\curl(\Ab-\Fb)\|_{L^2(\R^3)}\leq \frac{C}{H}\|\psi\|_{L^4(\Omega)}^2\,,
\end{equation}
for all $\kappa>0$ and $H>0$.
\end{lem}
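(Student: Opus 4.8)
The plan is to test the second Ginzburg--Landau equation in~\eqref{eq-3D-GLeq} against $\Ab-\Fb$ itself and exploit the Coulomb gauge. Since $\Div\Fb=0$ and $\Div\Ab=0$ we have $\Div(\Ab-\Fb)=0$, and since $\curl\Fb=\beta$ is constant, $\curl^2(\Ab-\Fb)=\curl^2\Ab$. First I would record the two functional-analytic facts that make the argument run: by definition~\eqref{eq-3D-hs}, $\Ab-\Fb\in\dot H^1(\R^3)$, so that (i) $\|\Ab-\Fb\|_{L^6(\R^3)}\leq C\|\nabla(\Ab-\Fb)\|_{L^2(\R^3)}$ by Sobolev embedding, and (ii) $\|\nabla(\Ab-\Fb)\|_{L^2(\R^3)}=\|\curl(\Ab-\Fb)\|_{L^2(\R^3)}$, the standard div--curl identity for divergence-free fields (proved via the Fourier transform, or by integration by parts after approximation by $C_c^\infty$). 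Testing the (weak form of the) second equation against $\Ab-\Fb$ then gives
\[
\|\curl(\Ab-\Fb)\|_{L^2(\R^3)}^2
=-\frac{1}{\kappa H}\int_\Omega\Im\bigl(\overline\psi\,(\nabla-i\kappa H\Ab)\psi\bigr)\cdot(\Ab-\Fb)\,dx.
\]

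Next I would bound the integrand pointwise by $|\psi|\,|(\nabla-i\kappa H\Ab)\psi|\,|\Ab-\Fb|$ and apply H\"older's inequality with exponents $4,2,4$ to obtain
\[
\|\curl(\Ab-\Fb)\|_{L^2(\R^3)}^2
\leq\frac{1}{\kappa H}\,\|\psi\|_{L^4(\Omega)}\,
\|(\nabla-i\kappa H\Ab)\psi\|_{L^2(\Omega)}\,\|\Ab-\Fb\|_{L^4(\Omega)}.
\]
It then remains to control the last two factors. Testing the first GL equation against $\overline\psi$ and using the Neumann boundary condition yields $\|(\nabla-i\kappa H\Ab)\psi\|_{L^2(\Omega)}^2=\kappa^2\int_\Omega(1-|\psi|^2)|\psi|^2\,dx\leq\kappa^2\int_\Omega|\psi|^2\,dx\leq\kappa^2|\Omega|^{1/2}\|\psi\|_{L^4(\Omega)}^2$, hence $\|(\nabla-i\kappa H\Ab)\psi\|_{L^2(\Omega)}\leq\kappa|\Omega|^{1/4}\|\psi\|_{L^4(\Omega)}$. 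For the potential, since $\Omega$ is bounded, $\|\Ab-\Fb\|_{L^4(\Omega)}\leq|\Omega|^{1/12}\|\Ab-\Fb\|_{L^6(\R^3)}\leq C|\Omega|^{1/12}\|\curl(\Ab-\Fb)\|_{L^2(\R^3)}$ by (i) and (ii) above.

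Substituting both bounds gives $\|\curl(\Ab-\Fb)\|_{L^2(\R^3)}^2\leq (C/H)\|\psi\|_{L^4(\Omega)}^2\,\|\curl(\Ab-\Fb)\|_{L^2(\R^3)}$, with $C$ depending only on $|\Omega|$ and a universal Sobolev constant; dividing by $\|\curl(\Ab-\Fb)\|_{L^2(\R^3)}$ (the case where this vanishes being trivial, as the right-hand side of~\eqref{eq-est-curl} is nonnegative) yields the claimed estimate. The only genuinely delicate point is the setup on all of $\R^3$: namely that $\Ab-\Fb$ lies in the homogeneous space $\dot H^1(\R^3)$ with vanishing divergence, so that testing the curl--curl equation against it is legitimate and the gradient and curl norms coincide; this is exactly what~\eqref{eq-3D-hs} together with $\Div\Ab=0$ provide, and the remaining ingredients are only H\"older's and Sobolev's inequalities and the two GL equations.
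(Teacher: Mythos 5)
Your proof is correct and is essentially the standard argument: the paper does not give its own proof but cites \cite[Lemma~10.33]{FH-b}, and your chain (test the second GL equation against $\Ab-\Fb$, H\"older with exponents $4,2,4$, the a priori bound $\|(\nabla-i\kappa H\Ab)\psi\|_{L^2}\leq\kappa|\Omega|^{1/4}\|\psi\|_{L^4}$ from the first GL equation, and Sobolev together with the div--curl identity for the divergence-free field $\Ab-\Fb\in\dot H^1(\R^3)$) is exactly the route that reference takes. You are also right to flag the one delicate point, namely the legitimacy of the integration by parts on $\R^3$ that turns $\int\curl^2(\Ab-\Fb)\cdot(\Ab-\Fb)$ into $\|\curl(\Ab-\Fb)\|_{L^2}^2$; since $\Ab$ is a critical point of $\mathcal E^{\rm 3D}$ over $\dot H^1_{\Div,\Fb}(\R^3)$, the second GL equation is precisely the statement that $\int_{\R^3}\curl(\Ab-\Fb)\cdot\curl\mathbf{a}\,dx=-\frac{1}{\kappa H}\int_\Omega\Im(\overline\psi(\nabla-i\kappa H\Ab)\psi)\cdot\mathbf{a}\,dx$ for every divergence-free $\mathbf{a}\in\dot H^1(\R^3)$, so taking $\mathbf{a}=\Ab-\Fb$ is directly admissible and no separate approximation argument is required.
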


The next result is Theorem~\ref{thm-3D-ad-est} below which is proved
in~\cite{FK3D}. Similar estimates to those in Theorem~\ref{thm-3D-ad-est} are
also given in~\cite{Pa}.

\begin{thm}\label{thm-3D-ad-est}
Suppose that $0<\Lambda_{\min}\leq \Lambda_{\max}$. There exist constants
$\kappa_0>1$ and $C_1>0$ such that, if
\[
\kappa\geq\kappa_0\,,\quad \Lambda_{\min}\leq\frac\kappa{H}
\leq \Lambda_{\max}\,,
\]
and $(\psi,\Ab)\in H^1(\Omega;\C)\times \dot H^1_{\Div,\Fb}(\R^3)$
is a solution of~\eqref{eq-3D-GLeq}, then
\begin{align}
&\|(\nabla-i\kappa H\Ab)\psi\|_{C(\overline{\Omega})}
\leq C_1\sqrt{\kappa H}\|\psi\|_{L^\infty(\Omega)}\,,\label{eq-est1}\\
&
\|\Ab-\Fb\|_{W^{2,6}(\Omega)}\leq
C_1\left(\|\curl(\Ab-\Fb)\|_{L^2(\R^3)}+\frac1{\sqrt{\kappa H}}
\|\psi\|_{L^6(\Omega)}\|\psi\|_{L^\infty(\Omega)}
\right),\label{eq-est2'}\\
&
\|\Ab-\Fb\|_{C^{1,1/2}(\Omega)}\leq C_1\left(\|\curl(\Ab-\Fb)\|_{L^2(\R^3)}
+\frac1{\sqrt{\kappa H}}
\|\psi\|_{L^6(\Omega)}\|\psi\|_{L^\infty(\Omega)}
\right),\label{eq-est2}\\
&\|\curl(\Ab-\Fb)\|_{C^{0,1/2}(\overline{\Omega})}\leq C_1
\left(\|\curl(\Ab-\Fb)\|_{L^2(\R^3)}
+\frac1{\sqrt{\kappa H}}\|\psi\|_{L^6(\Omega)}
\|\psi\|_{L^\infty(\Omega)}\right). \label{eq-est2*}
\end{align}
\end{thm}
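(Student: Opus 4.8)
The plan is to establish the four estimates by a bootstrap built on the observation that the two‑sided bound $\Lambda_{\min}\le\kappa/H\le\Lambda_{\max}$ forces the semiclassical parameter $B:=\kappa H$ to be comparable to $\kappa^2$, so that rescaling the first Ginzburg--Landau equation at length scale $B^{-1/2}$ converts it into a uniformly elliptic problem with $O(1)$ coefficients and right‑hand side of size $O(\|\psi\|_{L^\infty(\Omega)})$. Write $\ab=\Ab-\Fb\in\dot H^1(\R^3)$. First I would record the estimates that cost nothing: $\|\psi\|_{L^\infty(\Omega)}\le1$ (maximum principle on the first equation) and, testing that equation with $\psi$, $\|(\nabla-i\kappa H\Ab)\psi\|_{L^2(\Omega)}^2=\kappa^2\int_\Omega(1-|\psi|^2)|\psi|^2\,dx\le\kappa^2|\Omega|$. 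Since $\Div\ab=0$ and $\curl^2\Fb=0$, the second equation reads $-\Delta\ab=-\frac1{\kappa H}\Im\big(\overline\psi\,(\nabla-i\kappa H\Ab)\psi\big)\mathbf 1_\Omega$ on $\R^3$, so (by Plancherel, $\|\ab\|_{\dot H^2(\R^3)}=\|\Delta\ab\|_{L^2(\R^3)}$) one gets $\|\ab\|_{\dot H^2(\R^3)}\le\frac1{\kappa H}\|\psi\|_{L^\infty}\|(\nabla-i\kappa H\Ab)\psi\|_{L^2(\Omega)}\le C/H$, while $\|\ab\|_{\dot H^1(\R^3)}=\|\curl\ab\|_{L^2(\R^3)}\le C/H$ by Lemma~\ref{lem-curl}. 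A Fourier‑splitting argument then yields $\|\ab\|_{L^\infty(\R^3)}+[\ab]_{C^{0,1/2}(\R^3)}\le C\big(\|\ab\|_{\dot H^1(\R^3)}+\|\ab\|_{\dot H^2(\R^3)}\big)\le C/H$; this gauge‑clean crude bound on $\Ab$ is all that is needed to break into the bootstrap, and it avoids circularity with \eqref{eq-est2'}.

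To prove \eqref{eq-est1} I would argue pointwise. Fix $x_0\in\overline\Omega$; if $x_0\in\partial\Omega$, first flatten $\partial\Omega$ near $x_0$ by boundary normal coordinates, the curvature contributing only lower‑order terms. Set $y=B^{1/2}(x-x_0)$ and $\widetilde\psi(y)=\psi(x_0+B^{-1/2}y)$ on $\{|y|<1\}$. Because $\Fb$ is linear, a gauge change removing an additive constant brings the rescaled potential to $\widetilde\Ab(y)=\Fb(y)+\big(B^{1/2}\ab(x_0+B^{-1/2}y)-B^{1/2}\ab(x_0)\big)$; the bracketed term is bounded in $L^\infty(\{|y|<1\})$ by $B^{1/4}[\ab]_{C^{0,1/2}(\R^3)}\le C\kappa^{-1/2}$, and moreover $\Div_y\widetilde\Ab=0$ and $\widetilde\Ab(0)=0$, so $\|\widetilde\Ab\|_{L^\infty(\{|y|<1\})}\le C$ uniformly. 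The rescaled equation is $-(\nabla_y-i\widetilde\Ab)^2\widetilde\psi=\frac{\kappa}{H}(1-|\widetilde\psi|^2)\widetilde\psi$, with right‑hand side bounded in $L^\infty$ by $\Lambda_{\max}\|\psi\|_{L^\infty}$ and $|\widetilde\psi|\le\|\psi\|_{L^\infty}$, and with Neumann condition $\partial_{y_1}\widetilde\psi=i(\cdots)\widetilde\psi$ on $\{y_1=0\}$ up to $O(B^{-1/2})$ errors. Interior and boundary $W^{2,p}$ estimates for this problem (bootstrapping through $H^1$, $H^2$, then $W^{2,6}\hookrightarrow C^{1,1/2}$, legitimate because $\Div_y\widetilde\Ab=0$ so only $\|\widetilde\Ab\|_{L^\infty}$ enters the constants) give $\|\nabla_y\widetilde\psi\|_{L^\infty(\{|y|<1/2\})}\le C\|\psi\|_{L^\infty}$; undoing the rescaling and using $\widetilde\Ab(0)=0$ together with the modulus‑gauge‑invariance of the covariant gradient yields $|(\nabla-i\kappa H\Ab)\psi(x_0)|\le CB^{1/2}\|\psi\|_{L^\infty}=C\sqrt{\kappa H}\|\psi\|_{L^\infty}$, which is \eqref{eq-est1}.

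With \eqref{eq-est1} in hand the rest is elliptic regularity for $\ab$. Hölder's inequality bounds the current $J=-\frac1{\kappa H}\Im(\overline\psi(\nabla-i\kappa H\Ab)\psi)\mathbf 1_\Omega$ by $\|J\|_{L^6(\R^3)}\le\frac1{\kappa H}\|\psi\|_{L^6(\Omega)}\|(\nabla-i\kappa H\Ab)\psi\|_{C(\overline\Omega)}\le\frac C{\sqrt{\kappa H}}\|\psi\|_{L^6(\Omega)}\|\psi\|_{L^\infty}$. Localizing $\ab$ to a fixed neighbourhood of $\overline\Omega$ and applying $L^6$ elliptic estimates to the div–curl system $\curl^2\ab=J$, $\Div\ab=0$, the lower‑order contributions of the cut‑off are controlled by $\|\ab\|_{L^6(\R^3)}\le C\|\ab\|_{\dot H^1(\R^3)}=C\|\curl\ab\|_{L^2(\R^3)}$; this gives \eqref{eq-est2'}. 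Finally \eqref{eq-est2} and \eqref{eq-est2*} follow from \eqref{eq-est2'} and the Sobolev embedding $W^{2,6}(\Omega)\hookrightarrow C^{1,1/2}(\overline\Omega)$, applied to $\ab$ and to $\curl\ab$ respectively.

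The step I expect to be the main obstacle is the boundary analysis in the proof of \eqref{eq-est1}: carrying the boundary normal coordinates and the Neumann boundary condition faithfully through the $B^{-1/2}$‑rescaling, verifying that the curvature‑induced corrections are genuinely of lower order, and --- most importantly --- keeping every elliptic constant uniform in $x_0\in\overline\Omega$, $\kappa$ and $H$. Choosing $\kappa_0$ large makes all the vanishing quantities (such as $B^{1/4}[\ab]_{C^{0,1/2}}$ and the $O(B^{-1/2})$ boundary errors) as small as required, which is what allows the rescaled problems to be treated as small, uniformly controlled perturbations of the model half‑space operator.
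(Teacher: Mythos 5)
The paper does not give its own proof of this theorem: it is stated with the attribution ``which is proved in~\cite{FK3D}'' (and a pointer to~\cite{Pa} for similar estimates), so the result is imported from the companion Part~I paper, itself relying on the elliptic estimates of~\cite{FH-p}. Your proposal reconstructs that proof from scratch, and it follows the same standard route as those references: a crude a priori bound on $\ab=\Ab-\Fb$ in $\dot H^1\cap\dot H^2$ to obtain $L^\infty$ and $C^{0,1/2}$ control (needed to tame the rescaled potential without circularity with~\eqref{eq-est2'}), a blow-up at scale $(\kappa H)^{-1/2}$ combined with interior/boundary $W^{2,p}$ regularity for~\eqref{eq-est1}, and then div--curl $L^6$ elliptic theory plus Sobolev embedding for~\eqref{eq-est2'}--\eqref{eq-est2*}. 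Your chain of estimates checks out: the Fourier-splitting embedding $\dot H^1(\R^3)\cap\dot H^2(\R^3)\hookrightarrow L^\infty\cap C^{0,1/2}$ is valid in dimension~3, $\widetilde\Ab(0)=0$ after the gauge change so $\|\widetilde\Ab\|_{L^\infty(B_1)}$ is uniformly bounded, and the H\"older/Sobolev arithmetic in the last paragraph is correct. Two small points you gloss over but which do not affect the validity of the argument: (i) the global energy identity $\|(\nabla-i\kappa H\Ab)\psi\|_{L^2(\Omega)}^2\leq\kappa^2|\Omega|$ does \emph{not} rescale to a unit-ball $H^1$ bound for $\widetilde\psi$ (the rescaled quantity grows like $(\kappa H)^{1/2}$), so the entry point of your bootstrap must be a local Caccioppoli inequality obtained by testing the rescaled equation against $\eta^2\overline{\widetilde\psi}$ with a cutoff $\eta$ — you implicitly assume this but should say it; and (ii) since $\Div\widetilde\Ab=0$, writing $(\nabla-i\widetilde\Ab)^2=\Delta-2i\widetilde\Ab\cdot\nabla-|\widetilde\Ab|^2$ reduces the bootstrap to constant-coefficient $W^{2,p}$ estimates for $\Delta$, which is why $L^\infty$ control on $\widetilde\Ab$ suffices — worth making explicit, as it is the reason your crude a priori bound on $\ab$ is enough. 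The boundary-flattening and uniformity-in-$x_0$ issues you flag are indeed where the real technical work lies, but they are handled exactly as you describe in the cited references.
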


Combining the results in Lemma~\ref{lem:almog}, Lemma~\ref{lem-curl} and
Theorem~\ref{thm-3D-ad-est}, we get the following corollary.

\begin{corol}\label{corol:L2}
Suppose that $0<\Lambda_{\min}\leq \Lambda_{\max}$. There exist constants
$\kappa_0>1$ and $C_1>0$ such that, if
\[
\kappa\geq\kappa_0\,,\quad \Lambda_{\min}\leq\frac\kappa{H}
\leq \Lambda_{\max}\,,
\]
and $(\psi,\Ab)\in H^1(\Omega;\C)\times \dot H^1_{\Div,\Fb}(\R^3)$
is a solution of~\eqref{eq-3D-GLeq}, then,
\[
\|\Ab-\Fb\|_{C^{1,1/2}(\Omega)}\leq C_1\frac{\lambda^{1/6}}{\kappa}\,.
\]
\end{corol}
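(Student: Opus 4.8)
The plan is simply to chain together the a priori estimates collected in this section; no new idea is needed. Throughout, $C$ denotes a constant depending only on $\Lambda_{\min},\Lambda_{\max}$ (allowed to change from line to line), and I will repeatedly use that the hypothesis $\Lambda_{\min}\le\kappa/H\le\Lambda_{\max}$ makes $H$ comparable to $\kappa$, namely $\frac1H=\frac{\kappa/H}{\kappa}\le\frac{\Lambda_{\max}}{\kappa}$ and $\frac1{\sqrt{\kappa H}}\le\frac{\sqrt{\Lambda_{\max}}}{\kappa}$. I also use the elementary bound $\|\psi\|_{L^\infty(\Omega)}\le1$, which follows from the maximum principle applied to the first equation in~\eqref{eq-3D-GLeq}.

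First I would control the higher $L^p$ norms of $\psi$ in terms of $\lambda$. By Lemma~\ref{lem:almog}, $\|\psi\|_{L^4(\Omega)}^4\le C_1\lambda$, hence $\|\psi\|_{L^4(\Omega)}^2\le C\lambda^{1/2}$; and interpolating with $\|\psi\|_{L^\infty(\Omega)}\le1$,
\[
\|\psi\|_{L^6(\Omega)}^6=\int_\Omega|\psi|^2|\psi|^4\,dx\le\|\psi\|_{L^\infty(\Omega)}^2\|\psi\|_{L^4(\Omega)}^4\le C\lambda,
\]
so that $\|\psi\|_{L^6(\Omega)}\|\psi\|_{L^\infty(\Omega)}\le C\lambda^{1/6}$.

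Next I would insert these bounds into Lemma~\ref{lem-curl} and into the elliptic estimate~\eqref{eq-est2} of Theorem~\ref{thm-3D-ad-est}. From Lemma~\ref{lem-curl} and $1/H\le\Lambda_{\max}/\kappa$,
\[
\|\curl(\Ab-\Fb)\|_{L^2(\R^3)}\le\frac{C}{H}\|\psi\|_{L^4(\Omega)}^2\le\frac{C}{\kappa}\lambda^{1/2},
\]
and feeding this together with $\frac1{\sqrt{\kappa H}}\|\psi\|_{L^6(\Omega)}\|\psi\|_{L^\infty(\Omega)}\le\frac{C}{\kappa}\lambda^{1/6}$ into~\eqref{eq-est2} yields
\[
\|\Ab-\Fb\|_{C^{1,1/2}(\Omega)}\le C\Bigl(\frac{\lambda^{1/2}}{\kappa}+\frac{\lambda^{1/6}}{\kappa}\Bigr).
\]
Finally, since $\kappa/H\le\Lambda_{\max}$ and $\kappa\ge\kappa_0$ we have $\lambda\le\bar\lambda:=\max\bigl(\kappa_0^{-1},[\Lambda_{\max}-1]_+^2\bigr)$, a fixed constant, so $\lambda^{1/2}=\lambda^{1/6}\lambda^{1/3}\le\bar\lambda^{1/3}\lambda^{1/6}$; absorbing the first term into the second gives the asserted bound $\|\Ab-\Fb\|_{C^{1,1/2}(\Omega)}\le\frac{C_1}{\kappa}\lambda^{1/6}$.

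There is no genuine obstacle here: all the analytic content already sits in Lemmas~\ref{lem:almog} and~\ref{lem-curl} and Theorem~\ref{thm-3D-ad-est}, and the only thing to watch is the exponent bookkeeping together with the systematic replacement of the prefactors $1/H$ and $1/\sqrt{\kappa H}$ by $1/\kappa$, which is licensed by $\Lambda_{\min}\le\kappa/H\le\Lambda_{\max}$.
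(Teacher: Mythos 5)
Your proof is correct and follows exactly the route the paper indicates: the paper states the corollary as a direct consequence of Lemma~\ref{lem:almog}, Lemma~\ref{lem-curl} and Theorem~\ref{thm-3D-ad-est} without writing out the chain, and you have supplied precisely that chain, with the right interpolation $\|\psi\|_{L^6}\le C\lambda^{1/6}$ and the observation that $\lambda$ is bounded under the standing hypotheses so that the $\lambda^{1/2}/\kappa$ term can be absorbed into $\lambda^{1/6}/\kappa$.
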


\section{Lower bound}\label{sec:lb}

In this section we will prove  Theorem~\ref{thm-lb} below, whose statement
requires some notation. Let $D\subset\Omega$ be a given  open set such that
there exists a subset $\widetilde D$ of $\R^3$ having smooth boundary and
$D=\widetilde D\cap\Omega$. For all $a>0$, we assign to $D$ the following subset
of $\Omega$,
\begin{equation}\label{eq-Da}
D_a=\{x\in \Omega~:~\dist(x,D)\leq a\}\,.
\end{equation}

We introduce the following functional,
\begin{equation}\label{eq-GLe0}
\mathcal E_0(u,\Ab;D)
=\int_{D}\left(|(\nabla-i\Ab)u|^2-\kappa^2|u|^2+\frac{\kappa^2}2|u|^4\right)\,dx\,,
\end{equation}
where $u\in H^1(\Omega;\C)$ and $\Ab\in \dot H^1_{\Div,\Fb}(\R^3)$. If
$D=\Omega$, we omit the dependence on the domain and write
$\mathcal E_0(\psi,\Ab)$ for $\mathcal E_0(\psi,\Ab;\Omega)$.

\begin{thm}\label{thm-lb}
Suppose that the magnetic field $H$ is a function of $\kappa$ such that,
\[
1\leq \liminf_{\kappa\to\infty}\frac{H}{\kappa}
\leq\limsup_{\kappa\to\infty}\frac{H}{\kappa}<\infty\,.
\]
Let $\kappa\ni\R_+\mapsto a(\kappa)\in\R_+$ be a function satisfying
$\displaystyle\lim_{\kappa\to\infty}a(\kappa)=0$. Then, for any solution
$(\psi,\Ab)\in H^1(\Omega;\C)\times \dot H^1_{\Div,\Fb}(\R^3)$
of~\eqref{eq-3D-GLf} and any function $h\in C^1(\Omega)$ satisfying
$\|h\|_{L^\infty(\Omega)}\leq1$ and $\supp h\subset \overline{D_{a(\kappa)}}$, the
following asymptotic lower bound holds,
\begin{multline}\label{eq-3D-hc2-lb}
\mathcal E_0(h\psi,\Ab)
\geq \sqrt{\kappa H}\int_{\overline D\cap\partial\Omega}
E\left(\frac{\kappa}H,\nu(x)\right)\,d\sigma(x)+E_2|D|\,[\kappa-H]_+^2
+o\bigg(\max\left(\kappa,[\kappa-H]_+^2\right)\bigg)\,\\
\text{as}\quad\kappa\to\infty\,.
\end{multline}
Here $d\sigma(x)$ is the surface measure on the boundary of $\Omega$,
$E_2<0$ is the universal constant introduced in~\eqref{eq-E2}, and
$\mathcal E_0$ is the functional introduced in~\eqref{eq-GLe0}.
\end{thm}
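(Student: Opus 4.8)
The strategy is to localize $\mathcal{E}_0(h\psi,\Ab)$ into small boxes, to compare the energy inside each box with a rescaled version of the model problems of Sections~\ref{sec:E2} and~\ref{sec:hsp}, and to add up: boxes contained in the interior of $\Omega$ will be matched to the planar bulk energy $m_0(\cdot,\cdot)$ (producing the term $E_2|D|[\kappa-H]_+^2$ via $\kappa^2 g(H/\kappa)$), while boxes meeting $\partial\Omega$ will be matched to $d(\bb,\nu;\ell)$ (producing the term $\sqrt{\kappa H}\int E(\bb,\nu)\,d\sigma$ via Theorem~\ref{thm-thdL}). Write $\lambda=\max(\kappa^{-1},[\kappa/H-1]_+^2)$; then $\kappa^2\lambda$ and $\max(\kappa,[\kappa-H]_+^2)$ are comparable under the hypothesis on $H/\kappa$, and $\lambda\to0$ because $\liminf H/\kappa\geq1$ forces $[\kappa/H-1]_+\to0$. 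First I would record the a priori estimates $|\psi|\leq1$, $\|\psi\|_{L^4(\Omega)}^4\leq C\lambda$ (Lemma~\ref{lem:almog}), $\|(\nabla-i\kappa H\Ab)\psi\|_{L^\infty}\leq C\sqrt{\kappa H}$ (by~\eqref{eq-est1}), $\|(\nabla-i\kappa H\Ab)\psi\|_{L^2}^2=\kappa^2\|\psi\|_{L^2}^2-\kappa^2\|\psi\|_{L^4}^4$ (from the equations), and --- combining Lemma~\ref{lem-curl} with $\|\Ab-\Fb\|_{\dot H^1(\R^3)}=\|\curl(\Ab-\Fb)\|_{L^2(\R^3)}$ and Sobolev's inequality --- $\|\Ab-\Fb\|_{L^6(\R^3)}\leq C\lambda^{1/2}/\kappa$ together with $\|\Ab-\Fb\|_{C^{1,1/2}(\Omega)}\leq C\lambda^{1/6}/\kappa$ (Corollary~\ref{corol:L2}). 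Using these and Hölder's inequality one replaces the potential $\kappa H\Ab$ by $\kappa H\Fb$ in $\mathcal{E}_0(h\psi,\cdot)$ at the cost of $o(\kappa^2\lambda)$; since $h$ is arbitrary we may also assume $\|\nabla h\|_{L^\infty}\leq C\sqrt{\kappa H}$, for otherwise the positive kinetic term makes~\eqref{eq-3D-hc2-lb} trivially true.

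Next, fix $\delta=\delta(\kappa)$ with $(\kappa H)^{-1/2}\ll\delta\ll1$, so that $L:=\delta\sqrt{\kappa H}\to\infty$ while $\delta\to0$, and cover $D_{a(\kappa)}$ by a partition of unity $(\chi_j^2)$ with $\sum_j\chi_j^2\equiv1$, $\supp\chi_j$ a box $Q_j$ of side comparable to $\delta$, and $\|\nabla\chi_j\|_{L^\infty}\leq C\delta^{-1}$. The IMS-type formula gives
\[
\mathcal{E}_0(h\psi,\kappa H\Fb)\geq\sum_j\Bigl(\mathcal{E}_0(\chi_jh\psi,\kappa H\Fb;Q_j\cap\Omega)-\bigl\|\,|\nabla\chi_j|\,h\psi\bigr\|_{L^2}^2\Bigr),
\]
and $\sum_j\|\,|\nabla\chi_j|\,h\psi\|_{L^2}^2\leq C\delta^{-2}\|\psi\|_{L^2(\Omega)}^2=o(\kappa^2\lambda)$ for a suitable choice of $\delta$. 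For $Q_j$ with $\dist(Q_j,\partial\Omega)\geq\delta$, rescale $y=\sqrt{\kappa H}\,x$ and gauge $\Fb$ into the canonical potential $\Ab_0$ of~\eqref{eq-hc2-mpA0}: the local energy becomes $(\kappa H)^{-1/2}\tfrac{\kappa}{H}\,G_{\mathcal D}(\cdot)$ with $b=H/\kappa$ on the rescaled cube of side $L$, so this box contributes at least $\kappa^2|Q_j|\,m_0(H/\kappa,L)/L^2+o(\kappa^2|Q_j|)$ by the $m_0$ lower bound of~\cite{FK3D}. Since $m_0(H/\kappa,L)/L^2\to g(H/\kappa)$ and $\kappa^2g(H/\kappa)=E_2[\kappa-H]_+^2(1+o(1))$ (using $g(b)=0$ for $b\geq1$ and $g(b)=E_2(1-b)^2(1+o(1))$ as $b\nearrow1$), summing over interior boxes yields $E_2|D|[\kappa-H]_+^2+o(\max(\kappa,[\kappa-H]_+^2))$, the discarded layer of width $\sim\delta$ near $\partial\Omega$ and the set $D_{a(\kappa)}\setminus D$ having vanishing volume.

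For $Q_j$ meeting $\partial\Omega$, choose $x_j\in Q_j\cap\partial\Omega$, flatten $\partial\Omega$ near $x_j$ by boundary normal coordinates (metric and potential errors $O(\delta)$ after the rotation aligning the tangent plane at $x_j$), rescale by $\sqrt{\kappa H}$, and gauge $\Fb$ into the model potential $\Eb_{\nu(x_j)}$ of~\eqref{eq-3D-Eb}; extending $\chi_jh\psi$ by zero, so that it belongs to the space $\mathcal{S}_\ell$ of~\eqref{eq-domain-ell} with $\ell\sim L/2$, turns the local energy into $(\kappa H)^{-1/2}\mathcal{G}_{\bb,\nu(x_j);\ell}(\cdot)$ with $\bb=\kappa/H$, up to these errors. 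Hence, by the (error-free) lower bound $d(\bb,\nu;\ell)\geq4\ell^2E(\bb,\nu)$ of Theorem~\ref{thm-thdL}, this box contributes at least $\sqrt{\kappa H}\,|Q_j\cap\partial\Omega|\,E(\kappa/H,\nu(x_j))+o\bigl(\sqrt{\kappa H}\,|Q_j\cap\partial\Omega|\bigr)$ when $\kappa\leq H$; when $H<\kappa$, so that formally $\bb=\kappa/H>1$, one compares $\mathcal{G}_{\kappa/H,\nu;\ell}$ with $\mathcal{G}_{1,\nu;\ell}$ through~\eqref{eq-mon}, the discrepancy being at most $(\kappa/H-1)\int_{U_\ell}|u|^2\leq C(\kappa/H-1)\ell^2=o(\ell^2)$. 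Summing over the boundary boxes, using the continuity of $\nu\mapsto E(\bb,\nu)$ (Theorem~\ref{thm-E(b,nu)}) and $|E|\leq\tfrac12$ to identify the Riemann sum, and then continuity in $\bb$ to pass from $\kappa/H$ to $\min(\kappa/H,1)$, gives $\sqrt{\kappa H}\int_{\overline D\cap\partial\Omega}E(\min(\kappa/H,1),\nu(x))\,d\sigma(x)+o(\sqrt{\kappa H})$; the boxes with $\zeta(\nu(x_j))\geq\bb$ (in particular those near $\{\nu=\pi/2\}$) contribute $0$, consistently with $E(\bb,\nu)=0$ there (Lemma~\ref{lem-p-d(l)}). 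Adding the interior and boundary contributions produces~\eqref{eq-3D-hc2-lb}.

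The main obstacle is the uniform error bookkeeping across scales: one must choose $\delta$, equivalently $L=\delta\sqrt{\kappa H}$, so that simultaneously $L\to\infty$ (to make the thermodynamic limits of Theorem~\ref{thm-thdL} and its bulk counterpart effective) and $\delta\to0$ fast enough that the boundary curvature, the oscillation of $\nu(x)$, the chain of approximations $\kappa H\Ab\approx\kappa H\Fb\approx(\text{local linear potential})$, and the localization error $\delta^{-2}\|\psi\|_{L^2(\Omega)}^2$ --- each accumulated over $O(\delta^{-3})$ interior boxes or $O(\delta^{-2})$ boundary boxes --- remain $o(\max(\kappa,[\kappa-H]_+^2))$. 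This is possible precisely because $\lambda\to0$, but it must be arranged while respecting at once the two competing orders of magnitude $\kappa$ (surface) and $[\kappa-H]_+^2$ (bulk), which is the delicate point in the transition window $|\kappa-H|\sim\sqrt{\kappa}$ where these coincide.
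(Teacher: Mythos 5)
Your overall strategy (box decomposition, flatten boundary, rescale, compare with the model energies $m_0$ and $d(\bb,\nu;\ell)$, Riemann sum) matches the paper's, but there is a genuine gap in the very first reduction: the \emph{global} replacement of $\Ab$ by $\Fb$ in $\mathcal E_0(h\psi,\cdot)$ cannot be achieved at cost $o(\kappa^2\lambda)$ with the available a priori bounds. Expanding the square, the problematic contribution is the cross term
\[
-2\,\Real\int \overline{(\nabla-i\kappa H\Fb)(h\psi)}\cdot i\kappa H(\Ab-\Fb)(h\psi)\,dx,
\]
whose absolute value Cauchy--Schwarz bounds by $2\,\|(\nabla-i\kappa H\Fb)(h\psi)\|_{L^2}\,\kappa H\|(\Ab-\Fb)(h\psi)\|_{L^2}$. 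The a priori information you cite gives $\|(\nabla-i\kappa H\Fb)(h\psi)\|_{L^2}\lesssim\kappa\lambda^{1/4}$ (from the GL energy identity and $\|\psi\|_{L^2}^2\lesssim\lambda^{1/2}$) and, via the Sobolev bound $\|\Ab-\Fb\|_{L^6}\lesssim\lambda^{1/2}/\kappa$ and $\|\psi\|_{L^3}\lesssim\lambda^{1/4}$, $\kappa H\|(\Ab-\Fb)(h\psi)\|_{L^2}\lesssim H\lambda^{3/4}$. The product is $\sim\kappa H\lambda\sim\kappa^2\lambda$, i.e.\ of the \emph{same} order as the main term, not smaller. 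The standard $\varepsilon$-trick does not rescue this: the two competing errors are $\varepsilon\kappa^2\lambda^{1/2}$ and $\varepsilon^{-1}H^2\lambda^{3/2}$, and making both $o(\kappa^2\lambda)$ would require simultaneously $\varepsilon=o(\lambda^{1/2})$ and $\varepsilon\gg\lambda^{1/2}$, which is impossible.

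The paper avoids this by making the replacement of $\Ab$ by $\Fb$ \emph{local}: within each partition-of-unity cell centred at $x_{0,l}$ one first applies the gauge $\phi_l(x)=(\Ab(x_{0,l})-\Fb(x_{0,l}))\cdot x$, after which $\widetilde\Ab-\Fb=(\Ab-\Fb)-(\Ab-\Fb)(x_{0,l})$ vanishes at the centre and so, by Corollary~\ref{corol:L2}, satisfies $\|\widetilde\Ab-\Fb\|_{L^\infty(\supp\tchi_l)}\lesssim\delta\,\lambda^{1/6}/\kappa$. The extra factor $\delta$ is exactly what makes the accumulated error small. Since you already set up a box decomposition of side $\sim\delta$, the fix is simply to perform the $\Ab\to\Fb$ comparison per box with this centred gauge rather than once globally. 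Two smaller remarks: (i) after the local flattening the relevant error in the vector potential relative to $\Eb_{\nu}$ is $\mathcal O(\delta^2)$, not $\mathcal O(\delta)$, as in~\eqref{eq:tF}; an $\mathcal O(\delta)$ error would not close with $\delta=\kappa^{-3/4}$. (ii) Your reduction to $\|\nabla h\|_{L^\infty}\lesssim\sqrt{\kappa H}$ is not justified --- a large $\nabla h$ need not produce a large kinetic term if $\psi$ is small there --- and it is also unnecessary: in the paper's scheme $h$ is never differentiated; only the fixed partition functions $\chi_j$, $\tchi_l$ enter the IMS commutators, so no bound on $\nabla h$ is used. Your treatment of the interior boxes (re-deriving the bulk contribution from $m_0$ and $g$ rather than invoking Theorem~6.1 of~\cite{FK3D}) and of the boundary boxes via Theorem~\ref{thm-thdL} is otherwise in line with the paper.
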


The proof of Theorem~\ref{thm-lb} is a
bit lengthy and is divided into several subsections. We will introduce two
parameters
\[
\alpha=\alpha(\kappa)\in(0,1)\,,\quad
\delta=\delta(\kappa)\in(0,1)\,,
\]
that will be  used along the proof. Different conditions on these
parameters will arise so as to control the error terms correctly.
The choice of the parameters will be fixed at the end of the proof
(in~\eqref{eq:deltaalpha} below) so that they satisfy the
aforementioned conditions, and are negligible in the limit of
large $\kappa$.

Throughout the section, $(\psi,\Ab)\in H^1(\Omega)\times \dot
H^1_{\Div,\Fb}(\R^3)$ will always denote a solution to~\eqref{eq-3D-GLeq}.

\subsection{Splitting into bulk and surface terms}
We introduce smooth real-valued functions $\chi_1$ and $\chi_2$ such that
$\chi_1^2+\chi_2^2= 1$ in $\Omega$,
\[
\chi_1(x)=
\begin{cases}
1, &\text{if } \dist(x,\partial\Omega)<\delta/2,\\
0, &\text{if } \dist(x,\partial\Omega)>\delta,
\end{cases}
\]
and $|\nabla \chi_j|\leq C/\delta$ for $j=1,2$ and some constant $C$
independent of $\delta$. Using the IMS decomposition formula and the fact that
$\int_{\Omega} (\chi_j(x)^2-\chi_j(x)^4)|\psi|^4\,dx\geq 0$ (since
$0\leq\chi_j(x)\leq 1$), we find that
\begin{equation}\label{eq-splitting}
\mathcal E_0(h\psi,\Ab)
\geq \mathcal E_0(\chi_1h\psi,\Ab)+\mathcal E_0(\chi_2h\psi,\Ab)
-\sum_{j=1}^2\int_{\Omega} |\nabla\chi_j|^2|h\psi|^2\,dx.
\end{equation}
To estimate the error we use that $\| \psi \|_{\infty} \leq 1$ and the fact that
the measure of the support of $\nabla\chi_j$ is bounded by $C\delta$ for some
constant $C$, so
get
\begin{equation}\label{eq-B+S}
\sum_{j=1}^2\int_{\Omega} |\nabla\chi_j|^2|h\psi|^2\,dx
\leq C\delta^{-1}\,.
\end{equation}

Next, we estimate separately the terms $\mathcal E_0(\chi_1h\psi,\Ab)$ (surface
energy) and $\mathcal E_0(\chi_2h\psi,\Ab)$ (bulk energy).

\subsection{The surface energy}

The estimate of the surface energy requires two steps, a decomposition of the
energy via a partition of unity, then passing to local boundary coordinates
that allow us to compare with the model case of the half-space that is studied
in Section~\ref{sec:hsp}.

\subsubsection{Boundary coordinates}\label{sec:bndcod}
We introduce a system of coordinates valid near a point of the boundary. These
coordinates are used in \cite{hemo} and then in \cite{R} in order to estimate
the ground state energy of a magnetic Schr\"odinger operator with large magnetic
field (or with small semi-classical parameter).

Consider a point  $x_0\in\partial\Omega$. After performing a translation, we
may assume that the Cartesian coordinates of $x_0$ are all $0$, i.e. $x_0=0$.
Let $V$ be a neighborhood of $x_0$ such that there
exists local boundary coordinates $(y_2,y_3)$ in $W=V\cap \partial\Omega$, i.e.
there exists an open subset $U$ of $\R^2$ and a diffeomorphism $\phi:W\to U$,
$\phi(x)=(y_2,y_3)$. Denote by $N$ the inward pointing
normal at the point $\phi^{-1}(y_2,y_3)\in\partial\Omega$. We define the
coordinate transformation $\Phi$ as
\[
(x_1,x_2,x_3)=\Phi^{-1}(y_1,y_2,y_3)=\phi^{-1}(y_2,y_3)+y_1 N.
\]
The standard Euclidean metric $g_0=\sum_{j=1}^3 dx_j \otimes dx_j $ transforms
to
\begin{align*}
g_0 &= \sum_{1\leq j,k\leq 3} g_{jk} dy_j \otimes dy_k \\
    &= dy_3\otimes dy_3+\sum_{2\leq j,k\leq 3} \Bigl[
       G_{jk}(y_2,y_3)
       -2y_1 K_{jk}(y_2,y_3)
       +y_1^2 L_{jk}(y_2,y_3)
       \Bigr]
       dy_j\otimes dy_k\,
\end{align*}
where
\begin{align*}
G&=\sum_{2\leq k,j\leq 3}G_{jk}\, dy_j\otimes dy_k
 = \sum_{\substack{2\leq k,j\leq 3\\1\leq l\leq 3}}
         \Bigl\langle\frac{\partial x_l}{\partial y_j}
       ,\frac{\partial x_l}{\partial y_k}\Bigr\rangle\, dy_j\otimes dy_k\,,\\
K&=\sum_{2\leq k,j\leq 3}K_{jk}\, dy_j\otimes dy_k
 = \sum_{2\leq k,j\leq 3}-\Bigl\langle \frac{\partial N}{\partial y_j}
       ,\frac{\partial x}{\partial y_k}\Bigr\rangle\, dy_j\otimes dy_k\,
        \ \text{and} \\
L&=\sum_{2\leq k,j\leq 3}L_{jk}\, dy_j\otimes dy_k
 = \sum_{2\leq k,j\leq 3}\Bigl\langle \frac{\partial N}{\partial y_j}
       ,\frac{\partial N}{\partial y_k}\Bigr\rangle\, dy_j\otimes dy_k
\end{align*}
are the first, second and third fundamental forms on $\partial\Omega$.
We denote by $g^{jk}$ the inverse of $g_{jk}$. Its Taylor expansion, valid in
the neighborhood, is given by
\begin{equation}\label{eq:metricapprox}
(g^{jk})_{1\leq j,k\leq 3} =
Id
+
\begin{pmatrix}
0 & 0 & 0\\
0 & \mathcal O(|y|) & \mathcal O(|y|)\\
0 & \mathcal O(|y|) & \mathcal O(|y|)
\end{pmatrix}.
\end{equation}
The Lebesgue measure $dx$ transforms into $dx=\det(g_{jk})^{1/2}dy$. This
determinant has the Taylor expansion
\begin{equation}\label{eq:jacobian}
\det(g_{jk})^{1/2} = 1 + \mathcal O(|y|),
\end{equation}
valid in the neighborhood. The magnetic vector potential
$\Fb=(F_1,F_2,F_3)=(-x_2/2,x_1/2,0)$ is transformed to
a magnetic potential
$\widetilde{\Fb}=(\widetilde{F}_1,\widetilde{F}_2,\widetilde{F}_3)$ given as
follows,
\begin{equation}\label{eq-gauge-F}
\widetilde{F}_j(y)
 = \sum_{k=1}^3 F_k(\Phi^{-1}(y))\frac{\partial x_k}{\partial y_j}.
\end{equation}
In~\cite{R}, a particular choice of a gauge transformation is selected so that,
in a neighborhood of the point $x_0$,  the new vector potential
$\widetilde{\mathbf{F}}$ satisfies,
\begin{align}\label{eq:tF}
\widetilde{F}_1 &= 0,\qquad&
\widetilde{F}_2 &= \mathcal O(|y|^2),\qquad&
\widetilde{F}_3 &= y_1\cos\nu+y_2\sin\nu+ \mathcal O(|y|^2),
\end{align}
Here $\nu=\nu(x_0)$ is the angle between
the magnetic field and the tangent plane of $\partial\Omega$ at the point $x_0$.

Notice that the constants implicit in the ${\mathcal O}$ notation in \eqref{eq:metricapprox}, \eqref{eq:jacobian} and \eqref{eq:tF} can be chosen uniform (i.e. independent of the boundary point $x_0$) by compactness and regularity of $\partial \Omega$.

If $u$ is a function  with support
in a coordinate neighborhood $V$, we may express the functional
$\mathcal E_0(u,\Fb)$  explicitly in the new coordinates as follows,
\begin{multline}
\mathcal E_0^{\rm 3D}(u,\Fb)=
\int_{\Phi(V)} \det(g_{jk})^{1/2}\biggl[
  \sum_{1\leq j,k\leq 3} g^{jk}
  \bigl(\partial_{y_j}-i\kappa H\widetilde{F}_j\bigr)\widetilde u \times
  \overline{\bigl(\partial_{y_k}-i\kappa H\widetilde{F}_k\bigr)\widetilde u}\\
  -\kappa^2 |\widetilde u|^2
  +\frac{\kappa^2}{2}|\widetilde u|^4\biggr]\,dy,
\end{multline}
where $\widetilde u = e^{i \kappa H \phi} u \circ  \Phi^{-1}$ (with $\phi$ the
gauge transformation necessary to pass to the ${\bf \widetilde F}$ given
in~\eqref{eq:tF}).

\subsection{Decomposition of the energy}
Consider a family $\{x_{0,l}\}_l$ of points on the boundary $\partial\Omega$
whose choice will be specified below.
For each point $x_{0,l}$, we may  introduce a coordinate transformation
$\Phi_l$ valid near the point $x_{0,l}$. Actually, after performing a
translation, we may reduce to the case corresponding to the coordinate
transformation $\Phi$ valid near  the point $x_0\in\partial\Omega$ as above.

We introduce a new partition of unity $\{\tchi_l\}_{l}$ covering
the set $\Omega_1:=\supp\chi_1$, whose support will be centered at
coordinate neighborhoods of a family of points $\{x_{0,l}\}$. Let
$\alpha=\alpha(\kappa)\ll 1$ be a parameter that will be
explicitly chosen below. Let, for $\delta>0$,
\[
O_{\delta} := \bigl\{(y_1,y_2,y_3)\bigm| 0<y_1,\
-\delta <y_2<\delta,\ -\delta < y_3 < \delta\bigr\}\,.
\]
We choose $\{\tchi_l\}_{l}$ as smooth non-negative functions such that
$\sum_l \tchi_l^2(x)\equiv 1$ in $\Omega_1$, $\tchi_l\equiv 1$
in the set $\Omega_1\cap\Phi_l^{-1}\bigl(O_{(1-\alpha)\delta}\bigr)$,
and $\supp\tchi_l \subset \Omega_1\cap \Phi_l^{-1}\bigl(O_{\delta}\bigr)$,
and such that there exists a constant $C>0$ so that
\[
\sum_l |\nabla \tchi_l(x)|^2 \leq C(\alpha\delta)^{-2}
\]
and
$$
\sum \tchi_l(x)^2 \leq C \qquad \text{(finite overlap)}
$$
for all $x\in\Omega_1$.

Using the IMS decomposition formula and the inequality
$\int_{\Omega} (\tchi_l(x)^2-\tchi_l(x)^4)|\chi_1\psi|^4\,dx \geq 0$, we get
the following lower bound of the surface energy,
\begin{equation}\label{eq:n}
\mathcal E_0^{\rm 3D}(\chi_1h\psi,\Ab) \geq
\sum_{l} \biggl\{ \mathcal E_0^{\rm 3D}(\tchi_lh\chi_1\psi,\Ab)
-\int_{\Omega_1}|\nabla \tchi_l|^2|\chi_1h\psi|^2\, dx\biggr\}.
\end{equation}
Using the bound on $\nabla\tchi_l$, we bound the error term by
\[
\sum_{l}\int_{\Omega_1}|\nabla \tchi_l|^2|\chi_1h\psi|^2\, dx
\leq C\alpha^{-2}\delta^{-1} \,.
\]

We approximate the magnetic potential $\Ab$ by the magnetic potential $\Fb$.
Part of the approximation relies on the construction of a suitable gauge
transformation. Let
\[
\phi_l(x)=\big(\Ab(x_{0,l})-\Fb(x_{0,l})\big)\cdot x\,,
\]
and
\[
u_l(x)=e^{i\phi_l(x)}\,\tchi_l(x)\chi_1(x)h(x)\psi(x)\,.
\]
Then it holds true that,
\begin{equation}\label{eq-gauge}
\mathcal E_0^{\rm 3D}(\tchi_l\chi_1h\psi,\Ab)
=\mathcal E_0^{\rm 3D}(u_l,\Ab-\nabla\phi_l)\,.
\end{equation}
We will prove the following estimate (for large values of $\kappa$),
\begin{equation}\label{eq-AtoF}
\mathcal E_0^{\rm 3D}(u_l,\Ab-\nabla\phi_l)\geq
(1-\delta )\mathcal E_0^{\rm 3D}(u_l,\Fb)
-C\delta \kappa^2\int_{\Omega}|u_l|^2\,dx\,.
\end{equation}
The proof of~\eqref{eq-AtoF} will be postponed to the end of the section.
We will bound the energy $\mathcal E_0(u_l,\Fb)$ from below by expressing it
in boundary coordinates. Let $\nu=\nu_l=\nu(x_{0,l})$ be the angle
between the applied magnetic field $\beta=(0,0,1)$ and the tangent plane of
$\partial\Omega$ at the point $x_{0,l}$. Selecting the gauge giving us the
magnetic potential $\widetilde\Fb$ in~\eqref{eq:tF}, we get with
\[
\psi_l=u_l\circ\Phi_l \times (\text{a gauge transformation}),
\]
that
\begin{multline*}
\mathcal E_0^{\rm 3D}(u_l,\Fb)=
\int_{O_{\delta}} \det(g_{jk})^{1/2}\biggl[
  \sum_{1\leq j,k\leq 3} g^{jk}
  \bigl(\partial_{y_j}-i\kappa H\widetilde{F}_j\bigr)\psi_l \times
  \overline{\bigl(\partial_{y_k}-i\kappa H\widetilde{F}_k\bigr)\psi_l}\\
  -\kappa^2 |\widetilde\psi_l|^2
  +\frac{\kappa^2}{2}|\widetilde\psi_l|^4 \biggr]\,dy
\end{multline*}

Inserting the estimates~\eqref{eq:metricapprox} and~\eqref{eq:jacobian}
we obtain (again it is assumed that $\delta$ is
sufficiently small)
\begin{align*}
\mathcal E_0^{\rm 3D}(u_l,\Fb)
&\geq
(1-C\delta)\int_{O_{\delta}}
\left(|(\nabla_y-i\kappa H\widetilde{\mathbf{F}}(y))\psi_l|^2
             - \kappa^2|\psi_l|^2+\frac{\kappa^2}{2}|\psi_l|^4\right)\,dy \\
             &\quad - C \delta \kappa^2 \int_{O_{\delta}} |\psi_l|^2\,dy\,.
\end{align*}
Using the pointwise inequality (with $\varepsilon>0$ arbitrary)
\[
|(\nabla_y-i\kappa H\widetilde{\mathbf{F}})\psi_l|^2
\geq (1-\varepsilon)|(\nabla_y-i\kappa H\Eb_{\nu_l})\psi_l|^2
+(1-\varepsilon^{-1})(\kappa H)^2|(\widetilde{\mathbf{F}}-\Eb_{\nu_l})\psi_l|^2
\]
with $\Eb_\nu$ from~\eqref{eq-3D-Eb}, we obtain
\begin{multline*}
\mathcal E_0^{\rm 3D}(u_l,\Fb)
\geq
(1-C\varepsilon-C\delta)\mathcal E_0^{\rm 3D}(\psi_l,\Eb_{\nu_l})
-C\varepsilon^{-1}(\kappa H)^2
\int_{O_{\delta}}|(\widetilde{\mathbf{F}}-\Eb_{\nu_l})\psi_l|^2 \,dy \\
-C(\varepsilon\kappa^2  + \delta \kappa^2 )\int_{O_{\delta}}|\psi_l|^2\,dy\,.
\end{multline*}
We estimate the integral
$\displaystyle\int_{O_{\delta}}|(\widetilde{\mathbf{F}}
-\Eb_{\nu_l})\psi_l|^2 \,dy$ using~\eqref{eq:tF}. In this way, we find,
\[
\varepsilon^{-1}(\kappa H)^2
\int_{O_{\delta}}|(\widetilde{\mathbf{F}}-\Eb_{\nu_l})\psi_l|^2 \,dy
\leq C\varepsilon^{-1} (\kappa H)^2 \delta^4\int_{O_{\delta}}|\psi_l|^2\,dy.
\]
We conclude that (with the choice $\varepsilon=\kappa \delta^2$)
\[
\mathcal E_0^{\rm 3D}(u_l,\Fb)
\geq
(1-C\delta^2 \kappa-C\delta)\mathcal E_0^{\rm 3D}(\psi_l,\Eb_{\nu_l})
-(\delta + \delta^2 \kappa)\kappa^2 \int_{O_{\delta}}|\psi_l|^2\,dy\,.
\]
After a scaling, $y=(\kappa H)^{-1/2}z$, we obtain
\begin{equation}\label{eq-correction}
\mathcal E_0^{\rm 3D}(\psi_l,\Eb_{\nu_l})
=\frac{1}{\sqrt{\kappa H}}\int_{O_{\sqrt{\kappa H}\delta}}
  \left( |(\nabla_z-i\Eb_{\nu_l}(z))\widetilde\psi_l|^2
             - \frac{\kappa}{H}|\widetilde\psi_l|^2
             +\frac{\kappa}{2H}|\widetilde\psi_l|^4\right)\,dz\,,
\end{equation}
where $\widetilde\psi_l(z)=\psi_l\big((\kappa H)^{-1/2}z\big)$.

We impose the condition $\sqrt{\kappa H}\delta\gg 1$. Next, by
defining $\widetilde \lambda = [\kappa/H-1]_{+}$, we get $0<
\kappa/H\leq 1+\widetilde\lambda$.

Let $\bb=\min\big(\kappa/H,1)$. It is easy to see that,
\[-\frac{1}{\sqrt{\kappa H}}\int_{O_{\sqrt{\kappa H}\delta}}
\frac{\kappa}H|\widetilde\psi_l|^2\,dz\geq
-\frac1{\sqrt{\kappa H}} \int_{O_{\sqrt{\kappa
H}\delta}}\bb|\widetilde\psi_l|^2\,dz-C\widetilde \lambda(\kappa
H)\int_{O_{\delta}}|\psi_l|^2\,dz\,.\]
We insert this estimate
into~\eqref{eq-correction}, then we apply\,\footnote{We need to
apply Theorem~\ref{thm-thdL} when $\bb\in[\Theta_0,1]$. If
$\bb\in(0,\Theta_0)$, we use Remark~\ref{rem-b<theta0}.}
Theorem~\ref{thm-thdL} (with $\ell=\sqrt{\kappa H}\delta$). In
this way, we conclude that,
\begin{align}
\label{eq:nn}
\mathcal E_0^{\rm 3D}(u_l,\Fb)
\geq
(1-C\delta^2 \kappa-C\delta)\sqrt{\kappa H}\,E(\bb,\nu(x_{0,l}))\,(4\delta^2)
-C(\delta+ \delta^2 \kappa
+\widetilde \lambda )\kappa^2\int_{O_{\delta}}|\psi_l|^2\,dy\,.
\end{align}
provided that $\kappa$ is large enough.
We combine the estimates in~\eqref{eq:n}-\eqref{eq:nn}
to get,
\begin{equation*}
\begin{aligned}
\mathcal E_0^{\rm 3D}(\chi_1h\psi,\Ab) \geq&
\sum_{l} \Bigl\{ \mathcal E_0^{\rm 3D}(\tchi_l\chi_1h \psi,\Ab)\Bigr\}
- C\alpha^{-2}\delta^{-1}\\
\geq& \sum_{l}
\Bigl\{(1-C\delta^2 \kappa-C\delta)\sqrt{\kappa H}E(\bb,\nu(x_{0,l}))(4\delta^2)
\\&\qquad\qquad-C(\delta + \delta^2 \kappa
+ \widetilde \lambda) \kappa^2 \int_{O_{\delta}}|\psi_l|^2\, dy\Bigr\}\\
&\qquad - C\alpha^{-2}\delta^{-1}.\\
\end{aligned}
\end{equation*}
We estimate as before using the finite overlap of the supports of the partition
of unity,
\begin{align}
\sum_{\ell} \int_{O_{\delta}}|\psi_l|^2\, dy
\leq C \int_{\Omega_1} |\psi|^2 \,dx \leq C \delta.
\end{align}

Next, we note that we have a Riemann sum,
\[
\biggl|\sum_{l}
\Bigl\{ E(\bb,\nu(x_{0,l}))(4\delta^2)\Bigr\}
-\int_{\overline D\cap \partial\Omega} E(\bb,\nu(x))\,d\sigma(x)\biggr|
\leq C\delta\,.
\]
Hence, we find that
\begin{equation}\label{eq-finalF}
\begin{aligned}
\mathcal E_0^{\rm 3D}(\chi_1h\psi,\Ab) \geq&
(1-C\delta^2 \kappa-C\delta)\sqrt{\kappa H}
\int_{\overline D\cap\partial\Omega} E(\bb,\nu(x))\,d\sigma(x)\\
&-C \delta\sqrt{\kappa H}
- C( \delta^2\kappa + \delta+ \widetilde \lambda) \delta \kappa^2
-C\alpha^{-2}\delta^{-1}.
\end{aligned}
\end{equation}
We choose the parameters
\begin{align}\label{eq:deltaalpha}
\delta = \kappa^{-3/4},\qquad \alpha = \kappa^{-1/16},
\end{align}
so that the remainder terms in~\eqref{eq-finalF} are all estimated as
\[
o\Big(\max(\kappa,[\kappa-H]_+^2)\Big)\,,
\]
as $\kappa\to\infty$.

\begin{proof}[Proof of~\eqref{eq-AtoF}]
Let $\widetilde\Ab=\Ab-\nabla\phi_l$. Using the pointwise inequality
\[
|z_1-z_2|^2\geq (1-\varepsilon)|z_1|^2+(1-\varepsilon^{-1})|z_2|^2\,,
\]
which is valid for all complex
numbers $z_1$, $z_2$ and real numbers  $\varepsilon\in(0,1)$, we obtain the
lower bound
\begin{equation}\label{eq-feq}
\begin{aligned}
\mathcal E^{\rm 3D}(u_l,\widetilde\Ab)
&\geq (1-\varepsilon)\mathcal E_0^{\text{3D}}(u_l,\Fb)
+\varepsilon^{-1}(\kappa H)^2 \int_\Omega |(\widetilde\Ab-\Fb)u_l|^2\, dx
- \varepsilon \int_{\Omega}\kappa^2|u_l|^2\,dx.
\end{aligned}
\end{equation}
Observing that
\[
(\widetilde \Ab-\Fb)(x)=(\Ab-\Fb)(x)-(\Ab-\Fb)(x_{0,l})\,,
\]
we get from Corollary~\ref{corol:L2}
\begin{align*}
(\kappa H)^2 \int_\Omega |(\widetilde\Ab-\Fb)u_l|^2\, dx
&\leq C \lambda^{1/3}\kappa^2\int|x-x_{0,l}|^2\,|u_l|^2\,dx\\
&\leq C \lambda^{1/3}\kappa^2\delta^2\int|u_l|^2\,dx\,.\end{align*}
Inserting this into~\eqref{eq-feq}, and making for simplicity the non-optimal
choice $\varepsilon = \delta$, we finish the proof of~\eqref{eq-AtoF} by
observing that $\lambda \leq 1$ for large values of $\kappa$.
\end{proof}

\subsection{The bulk term}
We return to the energy decomposition in~\eqref{eq-splitting}. The
choice we made for the parameter $\delta$ allows us to estimate
the upper bound in~\eqref{eq-B+S} as ${\mathcal O}(\kappa^{3/4}) = o(\kappa)$. The surface term
in~\eqref{eq-splitting} is estimated using~\eqref{eq-finalF}.
It only remains to estimates the bulk term $\mathcal
E_0(\chi_2h\psi,\Ab)$ appearing in~\eqref{eq-splitting}. The
estimate of this term was the objective of the first part of this
paper \cite{FK3D}. Actually, Theorem~6.1 of \cite{FK3D} tells us
that,
\[
\mathcal E_0(\chi_2h\psi,\Ab)\geq E_2|D|\,[\kappa-H]_+^2
+o\Big(\max(\kappa,[\kappa-H]_+^2)\Big)\,.
\]
Inserting this estimate into~\eqref{eq-splitting}, then using the lower
bound in~\eqref{eq-finalF} and the choice of $\delta$ finishes the proof of
Theorem~\ref{thm-lb}.

\section{Upper bound}\label{sec:ub}

The aim of this section is to give an asymptotic upper bound of the ground state energy in \eqref{eq-3D-gs}.
This will be done through the computation of the energy of relevant  test configurations, whose construction hints at the expected behavior of the actual minimizers of the energy in \eqref{eq-3D-GLf}.

The main theorem of this section is stated below.

\begin{thm}\label{thm-ub}
For all $\tilde a(\kappa) = o(1)$ as $\kappa \rightarrow \infty$ and $C>0$,
there exists $0< \err(\kappa) = o(1)$ such that if
$$
1 - \tilde a(\kappa) \leq \frac{H}{\kappa} \leq C.
$$
Then, as $\kappa\to\infty$, the ground state energy in \eqref{eq-3D-gs} satisfies,
$$
\E0
(\kappa,H)
\leq \sqrt{\kappa H} \int_{\partial\Omega}E(\bb,\nu(x))\,d\sigma(x)+E_2|\Omega|\,[\kappa-H]_+^2+\err\big(\max(\kappa,[\kappa-H]_+^2)\big)\,.
$$
Here $\bb=\min\big(\kappa/H,1\big)$, and $d\sigma(x)$ is the surface measure on the boundary of $\Omega$.
\end{thm}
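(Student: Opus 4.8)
The plan is to construct a trial pair $(\psi,\Ab)$ with $\mathcal E^{\rm 3D}_{\kappa,H}(\psi,\Ab)$ no larger than the claimed right-hand side, and then use the variational characterisation \eqref{eq-3D-gs} of $\E0(\kappa,H)$. We take $\Ab=\Fb$, so that $\curl\Ab=\beta$ and the field term in \eqref{eq-3D-GLf} vanishes; only $\psi\in H^1(\Omega;\C)$ has to be produced. Fix a length $\delta=\delta(\kappa)\to0$ with $\ell:=\sqrt{\kappa H}\,\delta\to\infty$ (in the end $\delta=\kappa^{-3/4}$, so $\ell\asymp\kappa^{1/4}$ since $H/\kappa$ is bounded), and look for $\psi=\psi_{\rm surf}+\psi_{\rm bulk}$ with $\supp\psi_{\rm surf}\subset\{\dist(\cdot,\partial\Omega)\le\delta\}$ and $\supp\psi_{\rm bulk}\subset\{\dist(\cdot,\partial\Omega)\ge\delta\}$; since the supports meet only on a null set, $\mathcal E^{\rm 3D}_{\kappa,H}(\psi,\Fb)=\mathcal E_0(\psi_{\rm surf},\kappa H\Fb;\{\dist\le\delta\})+\mathcal E_0(\psi_{\rm bulk},\kappa H\Fb;\{\dist\ge\delta\})$, and we treat the two pieces separately. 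Writing $\bb=\min(\kappa/H,1)$, if $\bb\le\Theta_0$ then $[\kappa-H]_+=0$ and $E(\bb,\nu)\equiv0$ by Remark~\ref{rem-b<theta0}, so $\psi\equiv0$ already proves the theorem; hence we may assume $\bb\in(\Theta_0,1]$.

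\emph{The surface part.} I would run, in reverse, the boundary-coordinate reduction of Section~\ref{sec:lb}. Cover $\{0<\dist(\cdot,\partial\Omega)<\delta\}$, up to a union of cells with total $\partial\Omega$-area $\mathcal O(\delta)$, by a finite \emph{disjoint} family of curvilinear cells $Q_l=\Phi_l^{-1}\bigl((0,\delta)\times(-\delta,\delta)^2\bigr)$, where $\Phi_l$ is a boundary chart as in Section~\ref{sec:bndcod} centred at $x_{0,l}\in\partial\Omega$ with associated angle $\nu_l=\nu(x_{0,l})$. In a cell with $\zeta(\nu_l)\ge\bb$ set $\psi_{\rm surf}=0$ (matching $E(\bb,\nu_l)=0$ by Lemma~\ref{lem-p-d(l)}). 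In a cell with $\zeta(\nu_l)<\bb$, so that $\Theta_0\le\zeta(\nu_l)<\bb\le1$ and Theorem~\ref{thm-thdL} applies, take, after passing to the boundary coordinates, to the gauge of \eqref{eq:tF} in which $\kappa H\Fb$ becomes $\kappa H\Eb_{\nu_l}+\mathcal O(\kappa H|y|^2)$, and to the rescaling $y=(\kappa H)^{-1/2}z$ of \eqref{eq-correction}, the minimiser $\varphi_{\ell,\ell}$ of the truncated reduced functional $\mathcal G_{\ell,m}$ with $m=\ell$ appearing in the proof of Theorem~\ref{thm-min-Rgl}. Since $|\varphi_{\ell,\ell}|\le1$ and $\varphi_{\ell,\ell}$ vanishes on $\partial\bigl((0,\ell)\times K_\ell\bigr)$ except on $\{z_1=0\}$, and since $z_1=\ell$ corresponds to $\dist(\cdot,\partial\Omega)=\delta$, the transported and rescaled pieces patch across the faces of the cells into an $H^1(\Omega)$ function supported in $\{\dist(\cdot,\partial\Omega)\le\delta\}$. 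Using $|\varphi_{\ell,\ell}|\le1$ to pass from the post-rescaling coefficient $\kappa/H$ to $\bb$ (the difference $(\kappa/H-\bb)\int(-|\varphi_{\ell,\ell}|^2+\tfrac12|\varphi_{\ell,\ell}|^4)$ is $\le0$), the energy of such a cell is at most $(\kappa H)^{-1/2}t(\ell,\ell)$, up to the Jacobian, metric and gauge errors already controlled in Section~\ref{sec:lb}; and, revisiting the convergence estimate in the proof of Theorem~\ref{thm-min-Rgl} with the weighted decay bound \eqref{eq-min-decay} and optimising the cut-off parameter, one gets $t(\ell,\ell)\le d(\bb,\nu_l;\ell)+\mathcal O(\ell(\ln\ell)^2)\le 4\ell^2E(\bb,\nu_l)+\mathcal O(\ell(\ln\ell)^2)$ by \eqref{eq-est-tdl}. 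Summing over the $\mathcal O(\delta^{-2})$ cells, the main term $\sum_l 4(\kappa H)^{-1/2}\ell^2E(\bb,\nu_l)=\sum_l 4\sqrt{\kappa H}\,\delta^2E(\bb,\nu_l)$ is a Riemann sum for $\sqrt{\kappa H}\int_{\partial\Omega}E(\bb,\nu(x))\,d\sigma(x)$ (using the continuity of $\nu\mapsto E(\bb,\nu)$ from Theorem~\ref{thm-E(b,nu)}), while the Riemann-sum discrepancy, the Jacobian/metric corrections, the gauge error of order $\kappa H\delta^2$ in the potential, and the $\mathcal O(\ell(\ln\ell)^2)$ thermodynamic-limit remainder per cell all add up to $o(\kappa)$ for $\delta=\kappa^{-3/4}$.

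\emph{The bulk part.} If $H\ge\kappa$, take $\psi_{\rm bulk}=0$; then $[\kappa-H]_+=0$ and $g(H/\kappa)=0$, so there is nothing to do. If $H<\kappa$, the hypothesis $H\ge(1-\tilde a(\kappa))\kappa$ forces $H/\kappa\to1$; apply the interior Abrikosov-type construction of \cite{FK3D} on the subdomain $\Omega_\delta=\{\dist(\cdot,\partial\Omega)>\delta\}$ to obtain $\psi_{\rm bulk}$ supported there with $\mathcal E_0(\psi_{\rm bulk},\kappa H\Fb;\Omega_\delta)\le g(H/\kappa)|\Omega_\delta|\kappa^2+o\bigl(\max(\kappa,[\kappa-H]_+^2)\bigr)$. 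Since $|\Omega_\delta|=|\Omega|-\mathcal O(\delta)$ and $|g(H/\kappa)|\kappa^2\asymp[\kappa-H]_+^2$, and since $g(b)=E_2(1-b)^2[1+o(1)]$ as $b\nearrow1$ by \eqref{eq-E2} together with $(1-H/\kappa)^2\kappa^2=[\kappa-H]_+^2$, the right-hand side equals $E_2|\Omega|[\kappa-H]_+^2+o\bigl([\kappa-H]_+^2\bigr)$.

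\emph{Conclusion and the main obstacle.} Adding the two contributions and using $\E0(\kappa,H)\le\mathcal E^{\rm 3D}_{\kappa,H}(\psi,\Fb)$ yields, for $\delta=\kappa^{-3/4}$ and a suitable $\err(\kappa)\to0$,
\[
\E0(\kappa,H)\le\sqrt{\kappa H}\int_{\partial\Omega}E(\bb,\nu(x))\,d\sigma(x)+E_2|\Omega|[\kappa-H]_+^2+\err(\kappa)\max\bigl(\kappa,[\kappa-H]_+^2\bigr),
\]
which is the assertion. The bulk part is essentially imported from \cite{FK3D}, so the real work sits in the surface part: arranging the disjoint cell decomposition of the boundary layer so that the rescaled reduced minimisers genuinely patch into an admissible $H^1$ competitor, and then tracking the gauge, metric, truncation and discretisation errors through the boundary coordinates so that, after the choice of $\delta$, they are all $o\bigl(\max(\kappa,[\kappa-H]_+^2)\bigr)$ — this is the mirror image of the lower-bound analysis of Section~\ref{sec:lb} and demands the same careful bookkeeping.
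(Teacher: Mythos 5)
Your proposal is correct and reaches the same conclusion, but it genuinely departs from the paper's construction in two ways, both of which simplify the bookkeeping somewhat. First, where the paper builds the surface test function from the \emph{full} half-space reduced minimizer $u_{j,i}$ of $\mathcal G_{\bb,\nu_{j,i};\ell}$ and then confines it to the boundary layer by multiplying by the cut-off $h(y_1/\eta)$ (paying for this in the term $\beta^{-1}\eta^{-3}(\kappa H)^{-1}(\delta')^2$ of \eqref{eq-ub-c0-1}), you instead use the \emph{truncated} minimizer $\varphi_{\ell,\ell}$ of $\mathcal G_{\ell,m}$ with $m=\ell$, which already vanishes at $z_1=\ell$; the price becomes the quantitative bound $t(\ell,\ell)\le d(\bb,\nu;\ell)+\mathcal O(\ell(\ln\ell)^2)$, which the paper never states explicitly but which does follow from \eqref{eq-t<d'} together with the weighted decay \eqref{eq-min-decay} by optimising $\varepsilon\sim 1/\ell$. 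Note that in the very last display \eqref{eq-est-tdl} the controlled error is $C\ell^{4/3}$, not $\ell(\ln\ell)^2$, but either is $o(\ell^2)$, and after summing over $\sim\delta^{-2}$ cells and multiplying by $(\kappa H)^{-1/2}$ one lands on $o(\kappa)$ for $\delta=\kappa^{-3/4}$, as you say. Second, where the paper uses a fixed small scale $\delta$ for the Voronoi patches $U_j$ and a $\kappa$-dependent scale $\delta'=\kappa^{-7/12}$ for the fine square lattice inside each patch (letting $\delta\to0$ only at the very end, which is why the residual error is $C\delta\kappa$), you compress everything into a single $\kappa$-dependent scale $\delta=\kappa^{-3/4}$. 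This is cleaner, but the geometric claim ``a disjoint family of curvilinear cells $Q_l=\Phi_l^{-1}((0,\delta)\times(-\delta,\delta)^2)$ covering up to boundary area $\mathcal O(\delta)$'' cannot be literally realised with each $\Phi_l$ a \emph{different} geodesic chart of characteristic size $\delta$ — images of Euclidean squares under unrelated charts cannot tile with $\mathcal O(\delta)$ waste. To salvage this one still needs a two-level construction: a finite \emph{fixed} collection of coarse charts covering $\partial\Omega$ with disjoint shrunken bases, the $\delta$-lattice of squares living in each coarse chart, and a recentred gauge/chart at each fine-cell centre to get the form \eqref{eq:tF}; that is exactly the Voronoi-plus-lattice scheme of the paper, merely with the coarse scale kept fixed (or allowed to tend to zero slowly). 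With that understood your error estimates are consistent (gauge error $\sim\kappa H\delta^2=o(\kappa)$, metric/Jacobian error $\mathcal O(\delta)\cdot\kappa=o(\kappa)$, thermodynamic-limit error $\mathcal O(\kappa^{5/6})$), the passage from the coefficient $\kappa/H$ to $\bb=\min(\kappa/H,1)$ using $-|u|^2+\tfrac12|u|^4\le0$ is correct, and the bulk piece from \cite{FK3D} glues in with disjoint support exactly as in the paper.
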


\medskip

\begin{proof}[Proof of Theorem~\ref{thm-ub}]~\\
{\bf Boundary trial configuration}\\
Let $\delta>0$ be small but fixed. We will choose another parameter $\eta>0$ which will be specified as a negative power of $\kappa$ below.

Choose a finite collection of points $\{x_j\} \subset \partial \Omega$ such that
\begin{align*}
\forall j \neq k: \quad \delta/2 \leq \dist(x_j,x_k)\qquad
\text{ and }
\qquad
\forall j:\quad \min_{k\neq j} \dist(x_j,x_k) \leq 2\delta.
\end{align*}
Define $U_j$ as
$$
U_j=\{x\in\partial\Omega~:~\forall~k\not=j\,,~{\rm
dist}(x,x_j)<{\rm dist}(x,x_k)\}\,.
$$
Clearly the $U_j$'s are disjoint and $\partial \Omega = \bigcup_j \overline{U}_j$.

Next, we construct a
family of sets that covers a tubular neighborhood of
$\partial\Omega$. That will be done by using  the boundary
coordinates $(y_1,y_2,y_3)$ introduced in Sec.~\ref{sec:bndcod}
($y_1=0$ defines the corresponding part in $\partial\Omega$). Let
$\Phi_j$ be the coordinate transformation that straightens a
neighborhood $V_j$ of the point $x_j$ such that $\Phi_j(x_j)=0$.
We may assume that $U_j \subset V_j$ for all $j$.
Let
$$O_j=\{x=\Phi_j^{-1}(y_1,y_2,y_3)~:~\Phi_j^{-1}(0,y_2,y_3)\in U_j~{\rm and}~0<y_1<\eta\}\,.$$

We now choose a parameter $\delta'$:
$$
\frac{1}{\sqrt{\kappa H}} \ll \delta' \ll \delta
$$
($\delta'$ will be chosen below as a negative power of $\kappa$).
Define $\widetilde O_j^{\rm 2D}=\Phi_j(O_j)\cap \{ y \in {\mathbb R}^3 \,:\, y_1 = 0\}$.
We may cover $\widetilde O_j^{\rm 2D}$ by a square
lattice $\{K_{j,i}\}$ where each $K_{j,i}$ is centered at point
$y_{j,i}$ and has side-length $2\delta'$. Let
$$\mathcal J_j=\{i~:~K_{j,i}\subset \widetilde O_j^{\rm 2D}\},\quad
N_j={\rm Card}\,\mathcal J_j\,,
$$
and $N=\sum_jN_j$. Clearly, the number $N$ satisfies,
\begin{equation}\label{eq-N=card-J}
N\times(2\delta')^2\to |\partial\Omega|\quad{\rm as
~}\delta'\to0\,.
\end{equation}
We combine the coordinate transformation $\Phi_j$ by a translation
so that the new coordinates of the point $y_{j,i}$ are $0$. Thus, we
let $\Phi_{j,i}$ be the resulting coordinate transformation valid in
$\Phi_j^{-1}({\mathbb R}^+\times K_{j,i}) \cap V_j$ such that $\Phi_{j,i}(y_{j,i})=0$.

We consider only indices $i\in\mathcal J_j$ for some $j$. Let
$x_{j,i}=\Phi_{j,i}^{-1}(0)$. At each point $x_{j,i}$,
 the magnetic field $\beta=(0,0,1)$ forms an angle
$\nu_{j,i}=\nu(x_{j,i})\in[0,\pi/2]$ with the tangent plane to
$\partial\Omega$.  As explained earlier in Sec.~\ref{sec:bndcod},
there exists a real valued function $\phi_{j,i}$ such that, if
$\Fb$ is the vector field defined in cartesian coordinates by
$\Fb(x_1,x_2,x_3)=(-x_2/2,x_1/2,0)$, and $\widetilde \Fb$ is the
vector field defined in $y$-coordinates by the relation in
\eqref{eq-gauge-F}, then,
$$
\widetilde \Fb+\nabla\phi_{j,i}=\Eb_{\nu_{j,i}}+\Rb_{j,i}\,,$$
where $\Eb_{\nu_{j,i}}$ is the magnetic potential from
\eqref{eq-3D-Eb}, and $\Rb_{j,i}$ is a vector field given in
$y$-coordinates by,
$$\Rb_{j,i}=\left(
\begin{array}{c}
0\\
\mathcal O(|y_1|^2+|y_2|^2+|y_3|^2)\\
\mathcal O(|y_1|^2+|y_2|^2+|y_3|^2)
\end{array}\right)\,.
$$
Let $\bb=\min(\kappa/H,1)$, $\ell=\delta'\sqrt{\kappa H}$ and
$u_{j,i}$ a minimizer of the functional $\mathcal
G_{\bb,\nu_{j,i},\ell}$ introduced in \eqref{eq-Rgl}. For
$x=\Phi_{j,i}^{-1}(y_1,y_2,y_3)\in O_j$, we put,
$$
\psi_{j,i}(x)=\begin{cases}e^{-i\kappa H \phi_{j,i}} u_{j,i}\big(y\sqrt{\kappa H}\,\big), & x=\Phi_{j,i}^{-1}(y_1,y_2,y_3)\in O_j\\
0, & \text{ else }
\end{cases}\,.
$$
Notice that by construction the supports of the $\psi_{j,i}$ do not overlap.

We define a test function $\psi_{\rm bnd}\in H^1(\Omega;\C)$ as follows,
\begin{equation}\label{eq-testf-bnd}
\forall~x\in\Omega\,,\quad \psi_{\eta,\delta}^{\rm bnd}(x)=h\left(\frac{{\rm dist}(x,\partial\Omega)}{\eta}\right)\psi(x)\,,
\end{equation}
where  $h$ is a cut-off function satisfying,
$${\rm supp}\,h\subset [-1,1]\,,\quad 0\leq h\leq 1{\rm ~in~}\R\,,\quad h(x)=1 {\rm ~in}~[-1/2,1/2]\,,$$
and
$$
\psi = \sum_{i,j} \psi_{j,i}.
$$
\medskip

\noindent{\it Energy of the test-configuration}\\
We will compute the energy of the configuration $(\psi_{\eta,\delta}^{\rm bnd},\Fb)$. Notice that the construction of $\psi_{\eta,\delta}^{\rm bnd}$ and the change of variable formulas in Sec.~\ref{sec:bndcod} together imply the existence of a constant $C>0$ (independent of $\delta$) such that, if $\kappa$ is sufficiently large and $\beta$ is an arbitrary real number in $(0,1)$, then we have the upper bound,
\begin{align}\label{eq-ub-c0}
\mathcal E^{\rm 3D}(\psi_{\eta,\delta}^{\rm bnd},\Fb)&\leq \left(1+C(\delta+\eta+\beta)\right)
\sum_{j,i}
\int_{V_{\delta'}}|(\nabla-i\kappa H \Eb_{\nu_{j,i}})\,h(y_1/\eta)\,u_{j,i}(y\sqrt{\kappa H})|^2\,dy\nonumber\\
&\quad+C\kappa^2H^2\beta^{-1}(\delta'^4+\eta^4)\sum_{j,i}\int_\Omega|\psi_{j,i}(x)|^2\,dx
+\sum_{j,i}\int_{\Omega}\left(-\kappa^2|\psi_{j,i}|^2+\frac{\kappa^2}2|\psi_{j,i}|^4\right)\,dx\,,
\end{align}
where
$V_{\delta'}=(0,\infty)\times(-\delta',\delta')\times(-\delta',\delta')$.

By Theorem~\ref{thm-min-Rgl}
$$\int\left(|u_{j,i}(z)|^2+|u_{j,i}(z)|^4\right)\,dz\leq C\ell^2\,,$$
where $C$ is a universal constant and
 $\ell=\delta'\sqrt{\kappa H}$. We use  the change of variable formulae in Sec.~\ref{sec:bndcod}
 to express $$-\int|\psi_j|^2\,dx+\frac12\int|\psi_j|^4\,dx$$
in boundary coordinates, then we apply the change of variable
$z=y\sqrt{\kappa H}$ and use the decay of $u_{j,i}$. In this way,
we get a constant $C$ such that, for all $j$ and $i\in\mathcal
J_j$, we have,
$$-\kappa^2\int|\psi_{j,i}|^2\,dx
+\frac{\kappa^2}2\int|\psi_{j,i}|^4\,dx \leq \sqrt{\kappa
H}\int_{U_\ell}\left(-\frac{\kappa}H|u_{j,i}(z)|^2
+\frac{\kappa}{2H}|u_{j,i}|^4\right)\,dx+\frac{C\kappa^2}{\sqrt{\kappa
H}}(\delta+\eta)(\delta')^2\,.
$$
Here $U_\ell=(0,\infty)\times(-\ell,\ell)\times(-\ell,\ell)$.
Also,
$$
C\kappa^2H^2\beta^{-1}(\delta'^4+\eta^4)\int_\Omega|\psi_{j,i}(x)|^2\,dx
\leq
C \kappa^2 \frac{H^2}{\sqrt{\kappa H}} \beta^{-1} (\delta'^4+\eta^4) (\delta')^2.
$$

Using again Theorem~\ref{thm-min-Rgl}
\begin{align*}
&\int_{V_{\delta'}}|(\nabla-i\kappa H
\Eb_{\nu_{j,i}})\,h(y_1/\eta)\,u_{j,i}(y\sqrt{\kappa
H})|^2\,dy\\
&\leq
(1+\beta) \int_{V_{\delta'}}|(\nabla-i\kappa H
\Eb_{\nu_{j,i}})u_{j,i}(y\sqrt{\kappa
H})|^2\,dy
+ \beta^{-1} \eta^{-2} \int_{V_{\delta'}}[h'(y_1/\eta)]^2\,|u_{j,i}(y\sqrt{\kappa
H})|^2\,dy\\
&\leq (1+\beta) \sqrt{\kappa H}\int_{U_\ell}|(\nabla-i\Eb_{\nu_{j,i}})u_{j,i}|^2\,dz
+ \beta^{-1} \eta^{-3} (\kappa H)^{-2} \int_{U_{\ell}} z_1 |u_{j,i}|^2 \,dz \\
&\leq (1+\beta) \sqrt{\kappa H}\int_{U_\ell}|(\nabla-i\Eb_{\nu_{j,i}})u_{j,i}|^2\,dz
+
\beta^{-1} \eta^{-3} (\kappa H)^{-1} (\delta')^2.
\end{align*}
Substituting the above upper bounds into \eqref{eq-ub-c0}, we get,
\begin{align}\label{eq-ub-c0-1}
&\mathcal E^{\rm 3D}(\psi_{\eta,\delta}^{\rm bnd},\Fb)\nonumber \\
&\leq \sum_{j,i}\Bigg\{
[1+C(\delta + \eta+\beta)]\sqrt{\kappa H}
\int_{U_\ell}\left(|(\nabla-i\Eb_{\nu_{j,i}})u_{j,i}|^2-
\frac{\kappa}H|u_{j,i}|^2+\frac{\kappa}{2H}|u_{j,i}|^4\right)\,dz\nonumber \\
&
+C \kappa^2 \frac{H^2}{\sqrt{\kappa H}} \beta^{-1} (\delta'^4+\eta^4) (\delta')^2
+ C \kappa^2 \frac{1}{\sqrt{\kappa H}} (\delta + \eta + \beta) (\delta')^2
+ C \beta^{-1} \eta^{-3} (\kappa H)^{-1} (\delta')^2 \Bigg\}.
\end{align} Recall that
$\bb=\min(\kappa/H,1)$. By our choice of $u_{j,i}$, we get that,
$$\int_{U_\ell}\left(|(\nabla-i\Eb_{\nu_{j,i}})u_{j,i}|^2-\frac{\kappa}H|u_{j,i}|^2+\frac{\kappa}{2H}|u_{j,i}|^4\right)\,dz\leq
(2\delta')^2 \left[E(\bb,\nu_{j,i})+o(1)\right]\,.
$$
Also, by \eqref{eq-N=card-J}, we have
$$
\sum_{i,j} (\delta')^2 \leq C.
$$
Inserting this and the bounds on $H/\kappa$ into \eqref{eq-ub-c0-1}, we get
\begin{align}
\mathcal E^{\rm 3D}(\psi_{\eta,\delta}^{\rm bnd},\Fb)
&\leq
[1+C(\delta + \eta+\beta)]\sqrt{\kappa H} \sum_{j,i} (2\delta')^2 \left[E(\bb,\nu_{j,i})+o(1)\right] \nonumber \\
&\quad+ C \big[ \kappa^3 \beta^{-1} (\delta'^4+\eta^4)
+  \kappa (\delta + \eta + \beta)
+ \beta^{-1} \eta^{-3} \kappa ^{-2} \big] .
\end{align}
We can now, for example, choose
$$
\eta = \delta' = \kappa^{-7/12},\qquad \beta = \kappa^{-1/6}.
$$
The sum is a
Riemann sum, hence as
$\delta'\to0$,
$$
\sum_{j,i} (2\delta')^2 E(\bb,\nu_{j,i})
=
\int_{\partial\Omega}E(\bb,\nu(x))\,d\sigma(x)+o(1)\,.
$$
Therefore, we get the following upper bound,
\begin{equation}\label{eq-ub-bnd}
\mathcal E^{\rm 3D}(\psi_{\eta,\delta}^{\rm bnd},\Fb)\leq
(1+C \delta) \sqrt{\kappa
H}\int_{\partial\Omega}E(\bb,\nu(x))\,d\sigma(x)+o(\kappa) + C\delta \kappa\,.
\end{equation}
Since $\delta >0$ was arbitrary this is consistent with the boundary part of the upper bound in Theorem~\ref{thm-ub}.

\noindent
{\bf Bulk trial configuration}\\
We keep the choice of the parameters $\delta$ and $\eta$ introduced in the preceding section. Let $\psi_{\eta,R}^{\rm blk}$ be the test function defined in \cite[Eq. (6.15)]{FK3D}. The function $\psi_{\eta,R}^{\rm blk}$ vanishes in
$$\{x\in\Omega~:~{\rm dist}(x,\partial\Omega)\leq \eta\}\,.$$
Thus, $\psi_{\eta,R}^{\rm blk}$ and $\psi_{\eta,\delta}^{\rm bnd}$ have disjoint support. Consequently, by defining,
$$f(x)=\psi_{\eta,R}^{\rm blk}(x) +\psi_{\eta,\delta}^{\rm bnd}(x)\,,\quad x\in\Omega\,,$$
we get,
\begin{equation}\label{eq-ub-blk+bnd}
\mathcal E^{\rm 3D}(f,\Fb)=\mathcal E^{\rm 3D}(\psi_{\eta,R}^{\rm blk},\Fb)+\mathcal E^{\rm 3D}(\psi_{\eta,\delta}^{\rm bnd},\Fb)\,.\end{equation}
Theorem~6.5 in \cite{FK3D} tells us that
$$ \mathcal E^{\rm 3D}(\psi_{\eta,R}^{\rm blk},\Fb)\leq
E_2|\Omega|\,[\kappa-H]_+^2+o\bigg(\max\big(\kappa,[\kappa-H]_+^2\big)\bigg)\,.$$
Inserting this estimate and  that in \eqref{eq-ub-bnd} into
\eqref{eq-ub-blk+bnd}, we get,
$$\mathcal E^{\rm 3D}(f,\Fb)\leq
\sqrt{\kappa
H}\int_{\partial\Omega}E(\bb,\nu(x))\,d\sigma(x)+E_2|\Omega|\,[\kappa-H]_+^2
+o\bigg(\max\big(\kappa,[\kappa-H]_+^2\big)\bigg)\,.$$ Recalling
the ground state energy in \eqref{eq-3D-gs}, we deduce the
following upper bound,
$$\E0
(\kappa,H)\leq \sqrt{\kappa
H}\int_{\partial\Omega}E(\bb,\nu(x))\,d\sigma(x)+E_2|\Omega|\,[\kappa-H]_+^2
+o\bigg(\max\big(\kappa,[\kappa-H]_+^2\big)\bigg)\,.$$ This
finishes the proof of Theorem~\ref{thm-ub}.
\end{proof}

\section{Proof of main theorems}

\subsection{Proof of Theorem~\ref{thm-main}}
We  combine the lower bound of Theorem~\ref{thm-lb} (with $D=\Omega$ and $h=1$)
and the upper bound of Theorem~\ref{thm-ub}.

\subsection{Proof of Theorem~\ref{thm-op}}
Let $D\subset \Omega$ be  open and smooth. Let $\mu:\R_+\to\R$ be a function
satisfying $\lim_{\kappa\to\infty}\mu(\kappa)=0$. Suppose that the magnetic
field $H$ satisfies $H\geq\kappa-\mu(\kappa)\kappa$.

Using Theorem~3.3 in \cite{FK3D}, we know that,
\begin{equation}\label{eq-FK3D}
\|\psi\|_{L^\infty(\omega_\kappa)}=o(1)\,,\quad\text{as }\kappa\to\infty\,,
\end{equation}
where
\[
\omega_\kappa=\{x\in\Omega~:~\dist(x,\partial\Omega)\geq g_1(\kappa)/\kappa\}\,,
\]
and $g_1(\kappa)$ is any function satisfying
$\lim_{\kappa\to\infty}g_1(\kappa)=\infty$.

\begin{lem}\label{lem-hc2-IMS}
If $(\psi,\Ab)$  is a solution of~\eqref{eq-3D-GLeq}, then,
\begin{align}
  \label{eq:10}
\mathcal E(\psi,\Ab;D)-\mathcal
E(f\psi,\Ab;D)-\int_D|\nabla f|^2|\psi|^2\,dx
&+\frac{\kappa^2}2\int_D(1-f^2)^2|\psi|^4\,dx\nonumber \\
&=- \Real\int_{\partial \Omega} |\psi|^2 \overline{ f} \;\nu
\cdot \nabla f\,d\sigma+o(\kappa)\,.
\end{align}
holds true as $\kappa\to\infty$.
Here $\nu$ is the unit inward normal vector of $\partial\Omega$ and
$f\in H^1(\Omega)$ is any function such that,
\[
\nabla f\in L^\infty(\Omega)\,,\quad \supp f\subset \overline D\,.
\]
\end{lem}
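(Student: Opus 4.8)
The statement is, modulo error terms, an exact identity obtained from an IMS‑type localization of the kinetic energy together with the Ginzburg--Landau equation and its Neumann boundary condition. So the plan is to derive the corresponding algebraic identity first, and then to dispose of the remainders using the a priori bounds of Section~4 and the bulk smallness \eqref{eq-FK3D}.

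The starting point is the pointwise formula
\[
|(\nabla-i\kappa H\Ab)(f\psi)|^2=f^2\,|(\nabla-i\kappa H\Ab)\psi|^2+|\nabla f|^2|\psi|^2+\tfrac12\nabla(f^2)\cdot\nabla|\psi|^2,
\]
which follows from $(\nabla-i\kappa H\Ab)(f\psi)=f(\nabla-i\kappa H\Ab)\psi+(\nabla f)\psi$ and $\Real\big(\bar\psi(\nabla-i\kappa H\Ab)\psi\big)=\tfrac12\nabla|\psi|^2$ (for complex $f$ an extra term built from the covariant current $\Im(\bar\psi(\nabla-i\kappa H\Ab)\psi)$ appears and is handled the same way). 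Integrating over $D$ expresses $\mathcal E(f\psi,\Ab;D)$ through $\int_D f^2|(\nabla-i\kappa H\Ab)\psi|^2$ and lower–order pieces. To remove that kinetic term I would test the first equation of \eqref{eq-3D-GLeq} against $\overline{f^2\psi}$ and integrate by parts over $\Omega$: the boundary integral $\int_{\partial\Omega}f^2\bar\psi\,\nu\cdot(\nabla-i\kappa H\Ab)\psi\,d\sigma$ vanishes by the Neumann condition in \eqref{eq-3D-GLeq}, and since $\supp f\subset\overline D$ all resulting integrals live on $D$. Testing against $\overline\psi$ over $D$ likewise gives $\mathcal E(\psi,\Ab;D)$ up to a flux $\Real\int_{\partial D\cap\Omega}\bar\psi\,(\mathbf n\cdot(\nabla-i\kappa H\Ab)\psi)\,d\sigma$ ($\mathbf n$ the exterior unit normal of $D$) only — the part of $\partial D$ on $\partial\Omega$ again drops by Neumann. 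Substituting both identities into the left‑hand side and using the elementary simplification $\tfrac12\big[(1-f^4)+(1-f^2)^2\big]=1-f^2$, the order–$\kappa^2$ bulk terms cancel, and what remains is a combination of $\int_D|\nabla f|^2|\psi|^2$, the interior flux above, and a multiple of $\int_D\nabla(f^2)\cdot\nabla|\psi|^2$.

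The surface term in the statement is produced by this last cross contribution: integrating $\int_D\nabla(f^2)\cdot\nabla|\psi|^2$ by parts with the derivatives placed on $f^2$, and using $\tfrac12\Delta|\psi|^2=|(\nabla-i\kappa H\Ab)\psi|^2-\kappa^2|\psi|^2+\kappa^2|\psi|^4$ (a consequence of the GL equation, via $\Div\Real(\bar\psi(\nabla-i\kappa H\Ab)\psi)=\Real\big(\bar\psi(\nabla-i\kappa H\Ab)^2\psi\big)+|(\nabla-i\kappa H\Ab)\psi|^2$), the contribution landing on $\partial\Omega$ comes out as $-\Real\int_{\partial\Omega}|\psi|^2\bar f\,\nu\cdot\nabla f\,d\sigma$: indeed on $\partial D\cap\partial\Omega$ the exterior normal of $D$ is $-\nu$, $\partial_\nu|f|^2=2\Real(\bar f\,\partial_\nu f)$, and the normal flux of $|\psi|^2$ through $\partial\Omega$ itself vanishes because $\nu\cdot\nabla|\psi|^2=2\Real(\bar\psi\,\nu\cdot\nabla\psi)=2\Real\big(i\kappa H(\nu\cdot\Ab)|\psi|^2\big)=0$ by the Neumann condition. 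Collecting, one arrives at the claimed identity up to remainder terms.

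It then remains to check that those remainders are $o(\kappa)$, and this bookkeeping is the only real obstacle — there is no analytic input beyond the estimates already recorded. The fluxes over $\partial D\cap\Omega$ (and the associated $\partial_{\mathbf n}|\psi|^2$ terms) are negligible since $\partial D\cap\Omega$ is a compact subset of the open set $\Omega$, hence at a fixed positive distance from $\partial\Omega$; by \eqref{eq-FK3D} $\|\psi\|_{L^\infty(\partial D\cap\Omega)}=o(1)$, and together with $\|(\nabla-i\kappa H\Ab)\psi\|_{C(\overline\Omega)}\le C\sqrt{\kappa H}\,\|\psi\|_{L^\infty(\Omega)}$ from Theorem~\ref{thm-3D-ad-est} and $|\kappa H\Ab|\le C\kappa H$ this is $o(\kappa)$ (in fact exponentially small if one invokes the Agmon decay of $\psi$ away from the boundary). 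The remaining volume pieces, $\int_D|\nabla f|^2|\psi|^2$ and $\tfrac{\kappa^2}2\int_D(1-f^2)^2|\psi|^4$, are $O(1)$ when $f,\nabla f$ are bounded uniformly in $\kappa$; in the $\kappa$–dependent case needed for Theorem~\ref{thm-op} one splits $\supp\nabla f$ into the part near $\partial\Omega$ (where $\|\psi\|_{L^\infty}\le1$ already suffices once the transition scale beats $1/\kappa$) and the part away from $\partial\Omega$ (where Agmon decay applies), obtaining $o(\kappa)$ in both regimes.
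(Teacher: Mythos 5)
Your overall strategy — IMS splitting of the kinetic term, testing the first Ginzburg--Landau equation against $f^2\overline\psi$ and against $\overline\psi$, using the Neumann boundary condition to kill the $\partial\Omega$ flux, the identity $\tfrac12[(1-f^4)+(1-f^2)^2]=1-f^2$ to cancel the $\kappa^2$ bulk pieces, and the bulk smallness \eqref{eq-FK3D} to dispose of the interior flux — is exactly what the paper (via \cite[Lemma~6.1]{FK}) does. But there are two genuine gaps in the write-up.

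First, the algebraic bookkeeping is internally inconsistent. After you insert the weak form of the GL equation tested against $f^2\overline\psi$, the cross term $\tfrac12\int_D\nabla(f^2)\cdot\nabla|\psi|^2$ cancels identically: it appears once with a $+$ sign from the IMS expansion of $|(\nabla-i\kappa H\Ab)(f\psi)|^2$ and once with a $-$ sign from Green's formula for $\int_D f^2\overline\psi\,(\nabla-i\kappa H\Ab)^2\psi$. Hence it is not among ``what remains,'' and it cannot be the source of the boundary term. Your description of the final integration by parts compounds the confusion: you propose putting the derivatives onto $f^2$ but simultaneously invoke the identity $\tfrac12\Delta|\psi|^2=|(\nabla-i\kappa H\Ab)\psi|^2-\kappa^2|\psi|^2+\kappa^2|\psi|^4$; moving derivatives onto $f^2$ produces $\int_D|\psi|^2\Delta(f^2)$, which has nothing to do with $\Delta|\psi|^2$ and, moreover, requires $f\in H^2$, which the hypotheses do not give. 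As stated, the mechanism by which $-\Real\int_{\partial\Omega}|\psi|^2\overline f\,\nu\cdot\nabla f\,d\sigma$ is supposed to appear does not go through, and you never actually verify the precise coefficient and sign of the $\int_D|\nabla f|^2|\psi|^2$ contribution against the one appearing in the statement.

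Second, the claim that $\partial D\cap\Omega$ ``is a compact subset of the open set $\Omega$, hence at a fixed positive distance from $\partial\Omega$'' is false. By the paper's convention $D=\widetilde D\cap\Omega$ for a smooth open $\widetilde D\subset\R^3$, so $\partial D\cap\Omega=\partial\widetilde D\cap\Omega$ is an open surface that in general reaches $\partial\Omega$. Consequently one cannot simply invoke $\|\psi\|_{L^\infty(\partial D\cap\Omega)}=o(1)$ over the whole set. The correct argument, as the paper does when proving \eqref{eq-3D-op'-r}, splits $\partial D\cap\Omega$ into the part in $\omega_\kappa$ (where \eqref{eq-FK3D} makes $|\psi|$ small) and the complementary sliver of width $\mathcal O(g_1(\kappa)/\kappa)$ near $\partial\Omega$ (where one only has $|\psi|\leq 1$ but the measure is small). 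Your shortcut misses the second piece, which is precisely where the work is.
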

\begin{proof}
Using the estimate in~\eqref{eq-FK3D}, the result of the lemma follows
through an integration by parts applied to the term
$\mathcal E(f\psi,\Ab;D)$. See \cite[Lemma~6.1]{FK} for details.
\end{proof}

\begin{lem}\label{lem-hc2-curl}
If $(\psi,\Ab)$ is a minimizer of the functional in~\eqref{eq-3D-GLf}, then
as $\kappa\to\infty$,
\[
\kappa^2H^2\int_\Omega|\curl(\Ab-\Fb)|^2\,dx
=o\big(\max(\kappa,[\kappa-H]_+^2)\big)\,.
\]
\end{lem}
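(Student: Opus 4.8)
The plan is to turn the statement into a comparison between the true ground state energy and the ``boundary plus bulk'' energy, and then to read off the conclusion from the sharp two-sided asymptotics that are already available (Theorems~\ref{thm-lb} and~\ref{thm-main}).

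First I would record the exact energy identity that holds for a minimizer $(\psi,\Ab)$. Testing the first equation in~\eqref{eq-3D-GLeq} against $\overline\psi$ and integrating by parts (the boundary term vanishes by the Neumann condition) gives $\int_\Omega|(\nabla-i\kappa H\Ab)\psi|^2\,dx=\kappa^2\int_\Omega(1-|\psi|^2)|\psi|^2\,dx$, hence the $\Omega$-part of the energy equals $-\tfrac{\kappa^2}{2}\int_\Omega|\psi|^4\,dx$. Since $\curl\Ab-\beta=\curl(\Ab-\Fb)$ and $\mathcal E^{\rm 3D}(\psi,\Ab)$ is the sum of this $\Omega$-part and $\kappa^2H^2\int_{\R^3}|\curl(\Ab-\Fb)|^2\,dx$, minimality ($\mathcal E^{\rm 3D}(\psi,\Ab)=\E0(\kappa,H)$) yields the identity
\[
\kappa^2H^2\int_{\R^3}|\curl(\Ab-\Fb)|^2\,dx=\E0(\kappa,H)+\frac{\kappa^2}{2}\int_\Omega|\psi|^4\,dx .
\]

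Next I would bound the right-hand side from above (the left-hand side being nonnegative, this suffices). The term $\tfrac{\kappa^2}{2}\int_\Omega|\psi|^4\,dx$ is exactly minus the $\Omega$-part of the energy, so a \emph{lower} bound for it comes from Theorem~\ref{thm-lb} applied with $D=\Omega$ and $h\equiv1$ (a minimizer solves~\eqref{eq-3D-GLeq}, and $H\geq\kappa-\mu(\kappa)\kappa$ gives $\liminf H/\kappa\geq1$); for $\E0(\kappa,H)$ itself I would use Theorem~\ref{thm-main}. Both inputs carry the same leading terms $\sqrt{\kappa H}\int_{\partial\Omega}E(\bb,\nu(x))\,d\sigma(x)+E_2|\Omega|[\kappa-H]_+^2$, which cancel, leaving precisely $o\big(\max(\kappa,[\kappa-H]_+^2)\big)$; together with $0\leq\int_\Omega|\curl(\Ab-\Fb)|^2\leq\int_{\R^3}|\curl(\Ab-\Fb)|^2$ this is the claim. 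One preliminary remark is needed: Theorems~\ref{thm-lb} and~\ref{thm-main} are invoked only while $H/\kappa$ stays bounded; if $H/\kappa\to\infty$ along a sequence, then by the asymptotics of $H_{C_3}$ recalled in Remark~\ref{rem-hc3} the minimizer is trivial, $(\psi,\Ab)=(0,\Fb)$, so $\curl(\Ab-\Fb)=0$ and the estimate holds there trivially.

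I do not anticipate a genuine obstacle: the real work is entirely inside Theorems~\ref{thm-lb} and~\ref{thm-main}, and what remains is the exact identity above plus the observation that an upper bound on $\mathcal E^{\rm 3D}$ (minimality) combined with a matching lower bound on the $\Omega$-part of the energy pins down the magnetic self-energy to lower order. The only point to watch is that both inputs are expressed with the \emph{same} remainder scale $\max(\kappa,[\kappa-H]_+^2)$; since $H\sim\kappa$ in this regime, $\kappa^2[\kappa/H-1]_+^2$ and $[\kappa-H]_+^2$ are comparable, so there is no mismatch, and the cruder route through Lemma~\ref{lem-curl} and Lemma~\ref{lem:almog} — which would only yield $O(\cdot)$ rather than $o(\cdot)$ — is not needed.
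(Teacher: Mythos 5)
Your proof is correct and follows essentially the same route as the paper: minimality gives $\mathcal E^{\rm 3D}(\psi,\Ab)=\E0(\kappa,H)$, and the term $\kappa^2H^2\int|\curl(\Ab-\Fb)|^2\,dx$ is exactly the difference between the full energy (bounded above via Theorem~\ref{thm-ub}) and the $\Omega$-part $\mathcal E_0(\psi,\Ab)$ (bounded below via Theorem~\ref{thm-lb}), with the leading surface and bulk terms cancelling to leave $o\big(\max(\kappa,[\kappa-H]_+^2)\big)$. Your detour through the identity $\mathcal E_0(\psi,\Ab)=-\tfrac{\kappa^2}{2}\int_\Omega|\psi|^4\,dx$ is harmless but superfluous, and there is a small wording slip: Theorem~\ref{thm-lb} supplies a \emph{lower} bound for $\mathcal E_0(\psi,\Ab)$, hence an \emph{upper} bound (not a lower one) for $\tfrac{\kappa^2}{2}\int_\Omega|\psi|^4\,dx$, which is indeed what you need.
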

\begin{proof}
Recall the functional $\mathcal E_0$ in~\eqref{eq-GLe0}. Theorem~\ref{thm-lb}
tells us that,
\[
\mathcal E_0(\psi,\Ab)\geq
\int_{\partial\Omega}E(\bb,\nu(x))\,d\sigma(x)+E_2|\Omega|\,[\kappa-H]_+^2
+o\big(\max(\kappa,[\kappa-H]_+^2)\big)\,.
\]
Consequently, the result of the lemma follows by observing that
\begin{align*}
\mathcal E(\psi,\Ab)
=\mathcal E_0(\psi,\Ab)+\kappa^2H^2\int_{\R^3}|\curl(\Ab-\Fb)|^2\,dx
\end{align*}
and using the upper bound obtained in Theorem~\ref{thm-ub}.
\end{proof}

\begin{proof}[Proof of~\eqref{eq-3D-op'}]
Let $(\psi,\Ab)$ be a solution of~\eqref{eq-3D-GLeq}. Multiplying the first
equation in~\eqref{eq-3D-GLeq} by $\overline\psi$ then integrating over $D$ we
get,
\begin{equation}\label{eq-3D-op'-p}
\int_D|(\nabla-i\kappa H\Ab)\psi|^2\,dx
+\int_{\partial D}\overline\psi\,\nu\cdot(\nabla-i\kappa H)\psi\,d\sigma(x)=
\kappa^2\int_D(1-|\psi|^2)|\psi|^2\,dx\,.\end{equation}
Using~\eqref{eq-FK3D}, we can show that (see \cite[Lemma~6.1]{FK} for details),
\[
\int_{\partial D}\overline\psi\,\nu\cdot(\nabla-i\kappa H)\psi\,d\sigma(x)
=o(\kappa)
\]
as $\kappa\to\infty$. Consequently, we may rewrite~\eqref{eq-3D-op'-p} as
follows,
\begin{equation}\label{eq-3D-op'-p1}
\mathcal E_0(\psi,\Ab;D)=
-\frac{\kappa^2}2\int_D|\psi|^4\,dx+o(\kappa)\,,
\end{equation}
where $\mathcal E_0$ is the functional introduced in~\eqref{eq-GLe0}.

Let $h\in C_c^\infty(\R)$ be a cut-off function such that
\[
\supp h\subset[-1,1]\,,\quad h(x)=1\text{ in }[-1/2,1/2]\,,\quad 0
\leq h\leq 1\text{ in }\R\,.
\]
We define a function $f$ as follows,
\begin{equation}\label{eq-3D-op'-f}
\forall~x\in\Omega\,,\quad
f(x)=1-h\left(\frac{\dist(x,D^{\mathrm c})}{L^{-1}}\right)\,,
\end{equation}
where $D^{\mathrm c}=\Omega \setminus D$ is the complement in $\Omega$ of $D$,
and
\begin{equation}\label{eq(L)}
L=\max(\kappa,[\kappa-H]_+^2)\,.
\end{equation}
We will prove the estimate below,
\begin{equation}\label{eq-3D-op'-r}
-\int_D|\nabla
f|^2|\psi|^2\,dx+\frac{\kappa^2}2\int_D(1-f^2)^2|\psi|^4\,dx
+\Real\int_{\partial \Omega} |\psi|^2 \overline{ f} \;\nu \cdot
\nabla
f\,d\sigma=o\big(\max(\kappa,[\kappa-H]_+^2)\big)\,.\end{equation}
Details concerning the derivation of the estimate in
\eqref{eq-3D-op'-r} will be postponed to the end of this proof.

Using~\eqref{eq-3D-op'-r} and the decomposition formula of
Lemma~\ref{lem-hc2-IMS}, we get,
\begin{equation}\label{eq-3D-op'-p2}\mathcal E_0(\psi,\Ab;D)=
\mathcal E_0(f\psi,\Ab;D)+o\big(\max(\kappa,[\kappa-H]_+^2)\big)\,.
\end{equation}
We apply Theorem~\ref{thm-lb} to bound $\mathcal E_0(f\psi,\Ab;D)$ from below.
In this way, we infer from~\eqref{eq-3D-op'-p2},
\begin{equation}\label{eq-3D-op'-p3}
\mathcal E_0(\psi,\Ab;D)\geq
\sqrt{\kappa H}\int_{\overline{D}\cap\partial\Omega}E(\bb,\nu(x))\,d\sigma(x)
+E_2|D|\,[\kappa-H]_+^2
+o\big(\max(\kappa,[\kappa-H]_+^2)\big)\,.
\end{equation}
Inserting this lower bound into~\eqref{eq-3D-op'-p1} finishes the
proof of~\eqref{eq-3D-op'}. The only point left is the
justification of the estimate in~\eqref{eq-3D-op'-r}.

\paragraph{\it Proof of~\eqref{eq-3D-op'-r}:}
Let $g_1(\kappa)=L^{1/2}$. Recall the estimate in~\eqref{eq-FK3D}
valid in the set $\omega_\kappa$. In order to estimate the
integral $\displaystyle\int_D(1-f^2)^2|\psi|^4\,dx$, we use the
simple decomposition,
\[
\int_D(1-f^2)^2|\psi|^4\,dx=\int_{\omega_\kappa\cap D}(1-f^2)^2|\psi|^4\,dx
+\int_{\omega_\kappa^c\cap D}(1-f^2)^2|\psi|^4\,dx\,.
\]
We estimate the integral over $\omega_\kappa\cap D$ as follows:
\begin{align}
\int_{\omega_\kappa\cap D}(1-f^2)^2|\psi|^4\,dx
\leq o\Big( \int_{D}(1-f^2)^2\,dx\Big)
\leq o(L^{-1}).
\end{align}
It is left to estimate the integral over $\omega_\kappa^c\cap D$.
Here we use that the measure of the set
\[
\supp(1-f^2) \cap \omega_\kappa^c\cap D\subseteq \{x\in
\Omega~:~\dist(x,\partial\Omega)\leq
g_1(\kappa)/\kappa\,,~\dist(x,D^c)\leq 2L^{-1}\}\,\]
is
\[
\mathcal O\left(\frac{g_1(\kappa)}{\kappa}\times
2L^{-1}\right)\,.\]
As a consequence, we get by using the bound
$\|\psi\|_{L^\infty(\Omega)}\leq1$ together with our choice of
$g_1(\kappa)=L^{1/2}$ that,
\[
\int_{\omega_\kappa^c\cap D}(1-f^2)^2|\psi|^4\,dx=\mathcal
O\big(L^{-1/2}\kappa^{-1}\big)\,.
\]
The estimates obtained for the
integrals over $\omega_\kappa\cap D$ and $\omega_\kappa^c\cap D$
together yield that,
\begin{equation}\label{proof-7.6}
\kappa^2\int_D(1-f^2)^2|\psi|^4\,dx=o(\kappa)\,.\end{equation} In
a similar fashion, we may show that,
\[
\int_D|\nabla f|^2|\psi|^2\,dx=o(L)\,,\quad
\Real\int_{\partial \Omega} |\psi|^2 \overline{ f} \;\nu \cdot
\nabla f\,d\sigma=o(L)\,.
\]
The two aforementioned estimates
above, together with that in~\eqref{proof-7.6} and the definition
of $L=\max(\kappa,[\kappa-H]_+^2)$, yield the estimate in
\eqref{eq-3D-op'-r}.
\end{proof}

\begin{proof}[Proof of~\eqref{eq-3D-op-Thm}]
Suppose that $(\psi,\Ab)$ is a minimizer of the functional in~\eqref{eq-3D-GLf}.
Let the function $f$ be as in~\eqref{eq-3D-op'-f}. Since $(\psi,\Ab)$ is a
solution of~\eqref{eq-3D-GLeq}, then the formulas  in~\eqref{eq-3D-op'-p1}
and~\eqref{eq-3D-op'-p2} still hold true.

We define a function $g\in H^1(\Omega)$ as follows,
\[
\forall~x\in\Omega\,,\quad g(x)=\sqrt{1-f^2(x)}\,,
\]
so that $f^2(x)+g^2(x)=1$ in $\Omega$. Consequently, we have the  decomposition
formula,
\begin{equation}\label{eq-3D-op'-IMS}
\mathcal E(\psi,\Ab)\geq \mathcal E(f\psi,\Ab)+\mathcal E(g\psi,\Ab)
-\int_\Omega\left(|\nabla f|^2+|\nabla g|^2\right)|\psi|^2\,dx\,.
\end{equation}
Is results from Lemma~\ref{lem:almog} that as $\kappa\to\infty$,
\begin{equation}\label{eq-3D-op'-g}
\int_\Omega(|\nabla f|^2+|\nabla g|^2)|\psi|^2\,dx
=o\big(\max(\kappa,[\kappa-H]_+^2)\big)\,.
\end{equation}
Also, using Theorem~\ref{thm-lb}, we obtain the lower bound,
\[
\mathcal E(g\psi,\Ab)\geq \sqrt{\kappa H}
\int_{\overline {D^c}\cap\partial\Omega}E(\bb,\nu(x))\,d\sigma(x)
+E_2|D^c|\,[\kappa-H]_+^2+o\big(\max(\kappa,[\kappa-H]_+^2)\big)\,.
\]
We insert this estimate together with that in~\eqref{eq-3D-op'-g}
into~\eqref{eq-3D-op'-IMS}. Also, we use the upper bound for
$\mathcal E(\psi,\Ab)$ obtained in Theorem~\ref{thm-ub}. In this way we infer
from~\eqref{eq-3D-op'-IMS},
\begin{equation}\label{eq-3D-op'-IMS'}
\mathcal E(f\psi,\Ab)\leq \sqrt{\kappa H}\int_{\overline{ D}\cap\partial\Omega}E(\bb,\nu(x))\,d\sigma(x)
+E_2|D|\,[\kappa-H]_+^2+o\big(\max(\kappa,[\kappa-H]_+^2)\big)\,.
\end{equation}
Using the estimate of Lemma~\ref{lem-hc2-curl}, we get further,
\[
\mathcal E_0(f\psi,\Ab)\leq \sqrt{\kappa H}
\int_{\overline{ D}\cap\partial\Omega}E(\bb,\nu(x))\,d\sigma(x)
+E_2|D|\,[\kappa-H]_+^2+o\big(\max(\kappa,[\kappa-H]_+^2)\big)\,.
\]
We insert this estimate into~\eqref{eq-3D-op'-p2}. This gives us an upper
bound of $\mathcal E_0(\psi,\Ab;D)$, which when inserted
into~\eqref{eq-3D-op'-p1} yields the upper bound,
\[
-\frac{\kappa^2}2\int_D|\psi|^4\,dx\leq \sqrt{\kappa H}\int_{\overline{ D}\cap\partial\Omega}E(\bb,\nu(x))\,d\sigma(x)
+E_2|D|\,[\kappa-H]_+^2+o\big(\max(\kappa,[\kappa-H]_+^2)\big)\,.
\]
Combining this estimate with~\eqref{eq-3D-op'} finishes the proof
of~\eqref{eq-3D-op-Thm}.
\end{proof}

\section*{Acknowledgements}
The authors were partially supported by the Lundbeck foundation and the European
Research Council under the European Community's Seventh Framework
Programme (FP7/2007-2013)/ERC grant agreement n$^{\text{o}}$ 202859.  AK was
partially  supported by a grant from  the Lebanese University.


\begin{thebibliography}{100}
\bibitem{AS} A. Aftalion, S. Serfaty. Lowest Landau level approach
in superconductivity for the Abrikosov lattice close to $H_{c_2}$.
{\it Selecta Math. (N.S.)} {\bf  13}  (2007),  no. 2, 183--202.
\bibitem{ABM} S. Alama, L. Bronsard, J.A. Montero.  On the Ginzburg-Landau
model of a superconducting ball in a uniform field.
{\it Ann. Inst. H. Poincaré Anal. Non Linéaire} {\bf 23} (2) 237-267  (2006).
\bibitem{Al1} Y. Almog. Abrikosov lattices in finite domains. {\it
Commun. Math. Phys.} {\bf 262} (2006), 677-702.
\bibitem{Al} Y. Almog. Non-linear surface superconductivity
in three dimensions in the large $\kappa$ limit.
{\it Commun. Contemp. Math.} {\bf 6} (4) 637-652  (2004).
\bibitem{BoDaPoRa} V. Bonnaillie-No\"{e}l, M. Dauge, N. Popoff, N. Raymond.
Discrete spectrum of a model Schr\"odinger operator on the half-plane with
Neumann condition.
Preprint.
\bibitem{CS} A. Contreras, P. Sternberg. Gamma-convergence and the emergence of
vortices for Ginzburg-Landau on thin shells and manifolds.  {\it Calc. Var.
Partial Differential Equations} {\bf 38}  no. 1-2, 243-274 (2010).
\bibitem{DH} M. Dauge, B. Helffer. Eigenvalues variation I, Neumann problem for
Sturm-Liouville operators, {\it J. Differential Equations} 104, 243--262 (1993).
\bibitem{FH-b} S. Fournais, B. Helffer. {\it Spectral Methods in
    Surface Superconducitivity.}
Progress in Nonlinear Differential Equations and Their Applications,
Vol. 77
Birkh\"{a}user Boston (2010).
\bibitem{FH-jems} S. Fournais, B. Helffer. Bulk superconductivity in
  Type~II superconductors near the second critical field. {\it
    J. Eur. Math. Soc.} {\bf 12} 461-470 (2010).
\bibitem{FH3d} S. Fournais, B. Helffer.
On the Ginzburg-Landau critical field in three dimensions.
{\it  Comm. Pure Appl. Math.} {\bf 62}  (2) 215-241 (2009).
\bibitem{FH-p} S. Fournais, B. Helffer. Optimal uniform elliptic
  estimates for the Ginzburg-Landau system. Adventures in Mathematical
  Physics,
{\it Contemp. Math.} 447, {\it Amer. Math. Soc.} 83-102 (2007).
\bibitem{FK3D} S. Fournais, A. Kachmar. The ground state energy of the three
dimensional  Ginzburg-Landau functional. Part~I: Bulk regime. Preprint.
\bibitem{FK} S. Fournais, A. Kachmar. Nucleation of bulk
  superconductivity close to critical magnetic field. {\it Adv. Math.} {\bf 226}
1213-1258 (2011).
\bibitem{FHSS} R.L. Frank, C. Hainzl, R. Seiringer, J.P. Solovej. Microscopic
derivation of the Ginzburg-Landau theory. Preprint, arXiv:1102.4001
\bibitem{dGe} P.~G.~de~Gennes.
{\it Superconductivity of metals and alloys}
Westview Press
\bibitem{hemo} B. Helffer, A. Morame.
Magnetic bottles for the {N}eumann problem: curvature effects in the case of
dimension 3 (general case).
{\it Ann. Sci. \'Ecole Norm. Sup. (4)} Vol 37 (2004) 105--170.
\bibitem{HM} B. Helffer, A. Morame. Magnetic bottles in connection with
superconductivity, {\it J. Funct. Anal.} 185, 604--680 (2001).
\bibitem{Jetal} S. Baldo, R.L. Jerrard, G. Orlandi, M. Soner. Convergence of
Ginzburg-Landau functionals in 3-d
superconductivity. arXiv:1102.4650 (2011).
\bibitem{K3D} A. Kachmar. The ground state energy of the three dimensional
Ginzburg-Landau model in the mixed phase. {\it J. Funct. Anal.} 
doi:10.1016/j.jfa.2011.08.002.
\bibitem{Pa} X.B. Pan. Surface superconductivity in 3 dimensions.
{\it Trans. Amer. Math. Soc.}  {\bf 356} (10) 3899-3937 (2004).
\bibitem{P} X.B. Pan. Surface superconductivity in applied magnetic fields above
              {$H_{C_2}$}. {\it Comm. Math. Phys.} {228}, (2), 327--370 (2002).
\bibitem{R} N. Raymond.
On the semiclassical 3D Neumann Laplacian with variable magnetic field.
{\it Asymptot. Anal.} {\bf 68}  1-40 (2010).
\bibitem{RS1} M. Reed, B. Simon. {\it Methods of Modern Mathematical
  Physics I. Functional Analysis.} Academic Press (1972).
\bibitem{SS} S. Sandier, S. Serfaty.
{\it Vortices in the magnetic Ginzburg-Landau model.}
 Progress in Nonlinear Differential Equations
and their Applications, 70. Birkh\"{a}user Boston.
\bibitem{SS02} S. Sandier, S. Serfaty. The decrease of bulk-superconductivity
close to the second critical field in the Ginzburg-Landau model.
{\it SIAM. J. Math. Anal.} {\bf 34}  (4) 939-956 (2003).
\end{thebibliography}
\end{document}